\newtheorem{thm}{Theorem}
\newtheorem{pro}{Proposition}
\newtheorem{lem}{Lemma}
\newtheorem{alg}{Algorithm}
\newtheorem{ass}{Assumption}
\newtheorem{corr}{Corollary}
\newtheorem{ex}{Example}
\newtheorem{defin}{Definition}
\newtheorem{ob}{Observation}
\newtheorem{claim}{Claim}
\newenvironment{proposition}{\begin{pro}}{\end{pro}}
\newenvironment{corollary}{\begin{corr}}{\end{corr}}
\newenvironment{algorithm}{\begin{alg}}{\end{alg}}
\newtheorem{remark}{Remark}
\newcommand{\be}{\begin{equation}}
\newcommand{\ee}{\end{equation}}
\begin{document}

\title{\textbf{Sequential Network Design\thanks{%
We are grateful to  the editor Pierpaolo Battigalli, an advisory editor, three anonymous referees,  Francis Bloch, Federico Echenique, Xiangliang Li, Xueheng Li, Eduardo Perez-Richet, Ludovic Renou, Sudipta Sarangi, Vasiliki Skreta, Satoru Takahashi, Tristan Tomala, Philip Ushchev, Nicolas Vieille,  and Yves Zenou for their valuable comments and insightful discussions. In addition, we appreciate the helpful feedback from participants at the Tenth Annual Conference on Network Science in Economics (Stanford University), GAMES 2024, the RUC Theory Workshop, the Tsinghua Network Workshop, CMiD 2022, the IHP PhD Seminar, Huazhong University of Science and Technology, HEC Paris, and the Chinese Academy of Sciences. All remaining errors are
 our own.}} }
\author{Yang Sun\thanks{%
Department of Economics, Southwestern University of Finance and Economics,
China. \textit{Email}: \href{mailto:sunyang789987@gmail.com}{%
sunyang789987@gmail.com}} 
\and 
Wei Zhao\thanks{School of Economics and Management, Tsinghua University, China. \textit{\ Email}: \href{mailto:wei.zhao@outlook.fr}{wei.zhao@outlook.fr } }
\and 
Junjie Zhou\thanks{%
School of Economics and Management, Tsinghua University, China. \textit{ Email}: \href{mailto:zhoujj03001@gmail.com}{zhoujj03001@gmail.com}} }
\date{November 24, 2025 }
\maketitle

\begin{abstract}
We examine dynamic network formation from a centralized perspective, where a
forward-looking social planner constructs one new link between previously
unconnected nodes in each period. The planner derives utility from the
discounted sum of benefits generated throughout the formation process.
Assuming the planner's instantaneous utility depends monotonically on the
aggregate number of walks of various lengths within the network, we derive
several key results. First, it is always optimal to form a nested split
graph at each stage, regardless of the discount function. Second, when the
planner is sufficiently myopic, the optimal strategy uniquely generates a
quasi-complete graph in each period. This finding provides a
micro-foundation for quasi-complete graphs as natural outcomes of greedy
network formation processes. Finally, we extend our analysis to weighted
networks, demonstrating the robustness of our results.

\noindent \emph{JEL Classification: D85; C72.} \newline
\noindent \emph{Keywords:} {Dynamic Network design; Dynamic Network
Formation; Efficiency; Nested split graph; Quasi-complete graph; Greedy
algorithm.}
\end{abstract}

\newpage

\section{Introduction}

\label{sec:Intro}

Network economics examines how interaction topologies influence social
welfare. A central question in this field is the optimal allocation of
network links as resources. Unlike conventional scarce resources, network
links exhibit two distinctive characteristics. First, links between pairs of
nodes generate externalities that are far more complex than linear spillover
effects, making their allocation a systematic and intricate engineering
problem. Second, many types of networks cannot be formed instantaneously due
to capacity constraints, as network links require time to develop. For
example, constructing a road to connect isolated villages may take years,
while relationships need time and resources to mature before they can
effectively transmit peer effects or social capital. This temporal dimension
of network formation underpins empirical analyses of peer effects in social
networks, which often rely on studies of exogenous shocks for clear
identification or randomized controlled trials comparing different
intervention strategies. These studies generally assume that network
structures remain fixed within observation periods, an assumption supported
by empirical evidence. However, social planners must consider the welfare
generated throughout the entire network formation process, not just the
welfare associated with the final network structure. As a result, designing
the dynamic network formation path, rather than focusing solely on the
end-state, becomes particularly relevant in these contexts.

In this paper, we analyze a planner who sequentially allocates a
predetermined number of links among a fixed set of nodes by establishing a
single connection in each period between previously unconnected pairs. The
planner is forward-looking, deriving utility from the discounted sum of
benefits generated throughout the process. We employ a general class of
instantaneous preferences, requiring only that networks generating higher
aggregate numbers of walks of any length and larger spectral radius are
strictly preferred. These aggregate walk counts form the foundation for
walk-based centrality measures, a fundamental category of network
centralities. Katz-Bonacich centrality (\citealt{Katz1953} and %
\citealt{Bonacich1987}), eigenvector centrality (\citealt{Bonacich1972}),
community centrality (\citealt{Estrada2012}), and diffusion centrality (%
\citealt{Banerjee2013}) are all monotonically related to these statistics.
Beyond these instantaneous preferences, we accommodate any discount function
rather than restricting to geometric discounting. This general approach
encompasses several important special cases: static network design (%
\citealt{Belhaj2016} and \citealt{LI2023}) by weighting only final-period
utility, simultaneous multi-link allocation by assigning zero weight to
intermediate periods, and myopic planning by heavily discounting future
utilities.   The framework  naturally applies to contexts in which planners design and expand network infrastructure under capacity constraints. For instance, building road systems to connect remote villages or constructing power grid networks to link communities often requires years of planning and implementation due to limited construction resources and labor capacity. In such settings, the planner is concerned not only with the final network configuration but also  with the welfare generated during the formation process, as intermediate network structures can already yield  value.

Two key network classes emerge in our analysis. A nested split graph (NSG)
is characterized by the property that for any pair of nodes $i$ and $j$,
either the neighbors of $i$ form a subset of the neighbors of $j$, or vice
versa. A quasi-complete (QC) graph represents the largest possible clique
given the total number of links, with any remaining links connecting clique
members to a single outside node. QC graphs form a specific subclass of
NSGs. While multiple types of NSGs can exist for a fixed number of links, QC
graphs are unique up to isomorphism.

Our main results address two complementary aspects of network formation.
First, we prove that forming a nested split graph at each period is optimal,
regardless of the discount function (Theorem \ref{thm:unweighted-NSG}).
Second, we establish that when the planner is myopic, the optimal solution,
consistent with a greedy algorithm, results in a quasi-complete graph being
formed in each period (Theorem \ref{thm:unweighted-QC}).

The proof of our first main result follows a two-step approach. First, we
demonstrate (Lemma \ref{lem:wel-increase}) that when neither node's
neighborhood is a subset of the other's, reallocating one node's distinct
neighbors to the other strictly increases both the aggregate number of walks
of length greater than two and the spectral radius of the network. This
result generalizes the key insight from \cite{Belhaj2016}, which showed that
such operations enhance aggregate Katz-Bonacich centralities and their
squared values. In the second step, we prove that for any network formation
path producing a non-NSG in any period, we can construct an alternative
feasible path (Algorithm \ref{alg:dynamic-reallocation}) that generates NSGs
in every period and strictly dominates the original path.

For our second main result, we leverage the observation that adding a single
link to a QC graph produces at most two types of NSGs, one of which is
another QC graph. We then establish that QC graphs dominate the alternative
NSG type in terms of both aggregate walks of all lengths and spectral radius
(Lemma \ref{lem:weighted-KB2} (ii)). A significant implication of this lemma
is the refinement of theoretical predictions in static network design by
excluding certain subclasses of NSGs from optimality.

Next, we extend our analysis to scenarios where the planner sequentially
allocates one unit of weight per period, with each link's weight bounded by
one.

First, we establish that when the planner aims to maximize the discounted
sum of Katz-Bonacich (KB) centralities, the optimal weighted network
formation path produces unweighted networks in each period (Proposition \ref%
{pro:weighted-KB2-KB}). This result indicates that the flexibility of weight
allocation offers no advantage over discrete link allocation in this
context. The proof employs a key lemma, which demonstrates that the sum of
KB centralities is convex in the underlying network. We complete the proof
by showing that the set of feasible network formation paths forms a convex
set, with paths comprising unweighted networks as extreme points.

Second, we prove that if instantaneous utility increases with the sum of
squared KB centralities, the optimal formation path produces a weighted NSG
in each period, regardless of the discount function, and an unweighted QC
graph in each period when the planner is sufficiently myopic (Proposition %
\ref{pro:weighted-KB2}). To establish this, we apply another lemma from the
same paper that provides sufficient conditions for weight reallocation to
improve the sum of convex functions of KB centralities. For the first part,
we explicitly construct a dominant perturbation for any weighted network
formation path containing at least one non-weighted NSG. For the second
part, we show that when allocating weight to a QC graph, optimal solutions
lie within a restricted subclass of weighted NSGs that represent convex
combinations of two unweighted NSGs. Finally, we prove that the QC graph
dominates any such convex combination in terms of aggregate walks of all
lengths.

\subsection*{Literature Review}

Network formation, alongside network centrality, is a foundational topic in
network economics. The literature on network formation is divided into two
major strands. The first strand focuses on networks formed through \textit{%
decentralized processes}, which further branches into two categories. The
first branch examines static network formation with decentralized concerns,
including seminal works by \cite{Jackson1996}, \cite{Bala2000}, \cite%
{Bloch2006}, \cite{Galeotti2010}, \cite{Cabrales2011} and \cite%
{Christakis2020}. The second branch investigates dynamic network formation
with decentralized concerns, featuring contributions from \cite{Bala2000}, 
\cite{Watts2001}, \cite{Jackson2002a}, \cite{Dutta2005}, \cite{Page2005}, 
\cite{Koenig2014} and \cite{Song2020}.\footnote{%
Note that some of the literature on decentralized network formation also
discusses efficiency. However, the efficient network in these setups is
typically easy to characterize, and the primary objective is to derive
sufficient conditions for the efficiency of stable networks formed through
decentralized processes.} The second strand, where networks are formed based
on \textit{centralized concerns}, is also known as network design. In this
strand, the existing literature primarily focuses on static network design,
including works by \cite{Baetz2015}, \cite{Belhaj2016}, \cite{Hiller2017}, \cite{LI2023}, and others. To our knowledge, our paper addresses a previously
unexplored area: dynamic network formation with centralized concerns.%
\footnote{%
Some papers in the branch of dynamic network formation with decentralized
concerns also address the efficiency of long-run stable or stationary
networks. However, the efficiency benchmark they adopt is usually the static
efficient network.}

Unlike approaches that focus solely on the final network structure, our
paper analyzes the entire network formation path. The broad relationship
between our paper and the existing literature is also summarized in the
following table.

\begin{table}[!ht]
\centering
\resizebox{1\textwidth}{!}{\begin{tabular}{|l|l|l|}
\hline
\diagbox[width=11em]{Strand}{Branch} & \textbf{Static} & \textbf{Dynamic} \\ \hline
\textbf{Decentralized} &
  \begin{tabular}[c]{@{}l@{}} 
  \cite{Jackson1996}, \cite{Bala2000}, \\ 
  \cite{Bloch2006}, \cite{Galeotti2010}, \\ 
  \cite{Cabrales2011}
  \end{tabular} &
  \begin{tabular}[c]{@{}l@{}} 
  \cite{Bala2000}, \cite{Watts2001}, \\ 
  \cite{Jackson2002a}, \cite{Dutta2005}, \\ 
  \cite{Page2005}, \cite{Koenig2014}, \\ 
  \cite{Song2020}
  \end{tabular} \\ \hline
\textbf{Centralized} &
  \begin{tabular}[c]{@{}l@{}} 
  \cite{Belhaj2016}, \cite{Baetz2015}, \\ 
\cite{Hiller2017},\cite{LI2023}
  \end{tabular} &
  \textbf{Our Paper} \\ \hline
\end{tabular}}
\caption{Network Formation Literature}
\label{tab:net-form-literature}
\end{table}

Another distinguishing feature of this paper is the instantaneous preference
of the planner over the network. The literature can be classified into two
categories based on how the criterion of efficiency depends on the network
structure. The first category includes works where nodes benefit from direct
connections, such as \cite{Jackson1996},\cite{Dutta1997}
, \cite{Bala2000},\cite{Jackson2002}, \cite{Watts2001}, \cite{Dutta2005}, \cite{Bloch2006}, \cite{Song2020} and \cite{BRAVARD2025}. The second category considers cases
where agents' payoffs are endogenously determined through equilibrium in
network games. These include models of strategic substitution (%
\citealt{Galeotti2010}, \citealt{Billand2015}, \citealt{Leeuwen2019}) and
strategic complementarity (\citealt{Cabrales2011}, \citealt{Baetz2015},  \citealt{Belhaj2016}, \citealt{Hiller2017}, \citealt{LI2023}).

In our paper, the planner's preference depends on the network topology
through a key network statistic -- namely, the weighted sum of the aggregate
number of walks of various lengths. This criterion generalizes the
literature in the second category, which typically adopts the linear
quadratic network game introduced by the seminal work of \cite{Ballester2006}%
. Moreover, our paper also encompasses scenarios where the planner aims to
maximize other walk-based centralities, such as diffusion centrality (%
\citealt{Banerjee2013}, \citealt{Cruz2017}, \citealt{Banerjee2018}),
spectral radius (\citealt{Brualdi1985}), and the sum of aggregate walks of
length two (\citealt{Abrego2009}).

The remainder of this paper is organized as follows. Section 2 introduces
the formal setup and discusses the planner's problem. Section 3 presents our
main theoretical results. Section 4 extends these findings to weighted
networks. Section 5 concludes with directions for future research. All
proofs are provided in the Appendix.

\section{The Model}

A network consisting of a set $N=\{1,\ldots,n\}$ of nodes is represented by an adjacency matrix $\mathbf{G}=(g_{ij})_{n\times n}$, where $g_{ij}=g_{ji}=1$ if nodes $i$ and $j$ are linked, and $g_{ij}=g_{ji}=0$ otherwise.\footnote{See Section \ref{sec:weighted} for a detailed discussion of weighted networks.} Let $\mathbf{E}_{ij}$ denote the matrix with $1$ at the $(i,j)$ and $(j,i)$ entries, and $0$ at all other entries. We say that a network $\mathbf{\hat{G}}$ \textit{succeeds} network $\mathbf{G}$ if $\mathbf{\hat{G}}$ can be obtained by adding a new link to $\mathbf{G}$, i.e., if there exist two nodes $i,j$ such that $g_{ij}=0$ and $\mathbf{\hat{G}}=\mathbf{G}+\mathbf{E}_{ij}$. Let $\mathbb{S}(\mathbf{G})$ denote the set of networks that succeed $\mathbf{G}$.

Starting with an empty network, a planner dynamically constructs the network over $T$ periods by adding one new link in each period. Formally, the planner's strategy is a sequence of networks 
\begin{equation*}
\mathbf{s}=\left(\mathbf{G}\left(1\right),\ldots,\mathbf{G}\left(T\right)\right)
\end{equation*}
such that $\mathbf{G}\left(t\right)\in\mathbb{S}\left(\mathbf{G}\left(t-1\right)\right)$ for any $t=1,\ldots,T$ (with $\mathbf{G}\left(0\right)=\mathbf{0}$). Let $S$ be the set of all such sequences of networks that satisfy feasibility. Since the networks are unweighted, the set $S$ is finite and $T\leq n(n-1)/2$.

\subsection{The Optimization Problem}

\label{sub:perference} The planner cares about the entire stream of networks. Given a sequence of networks $\mathbf{s}=(\mathbf{G}(t))_{t=1}^{T}\in S$, the planner evaluates this sequence according to the value function
\begin{equation*}
v(\mathbf{s}):=\sum_{t=1}^{T}D(t)u(\mathbf{G}(t)),
\end{equation*}
where $u\left(\mathbf{G}\left(t\right)\right)$ represents the instantaneous utility in period $t$ (obtained from the network $\mathbf{G}\left(t\right)$) and $D\left(t\right)\geq 0$ is the discount factor for period $t$. The value function $v(\mathbf{s})$ aggregates the discounted utilities across all periods, capturing the planner's intertemporal preferences over the network formation process.

The planner seeks to select a sequence of networks that maximizes the value function. Formally, the planner's problem is described as: 
\begin{equation}
\max_{\mathbf{s}\in S}v(\mathbf{s}).  \label{design problem}
\end{equation}
The solution to Problem \eqref{design problem} always exists because the set $S$ is finite.

We are flexible in the choice of discount factors $\left(D(t)\right)_{t=1}^{T}$, requiring only that they be nonnegative. A commonly used case is geometric discounting: $D(t)=\delta^{t}$ for some $\delta>0$. Note that this $D(t)$ is not necessarily required to decrease in $t$, as $\delta$ may exceed one.

Two limiting cases of geometric discounting are particularly interesting. On one extreme, as $\delta\rightarrow+\infty$, the normalized geometric discount factors $D(t)=\frac{\delta^{t}}{\delta^{T}}$ converge to the case of farsightedness:

\begin{defin}
\label{def-FS} The planner is farsighted if $D\left(t\right)=\begin{cases} 0 & \text{if } 1\leq t\leq T-1 \\ 1 & \text{if } t=T \end{cases}$. \label{eq:farsighted}
\end{defin}

A farsighted planner prefers $\mathbf{s}$ to $\mathbf{\hat{s}}$ if the final network $\mathbf{G}\left(T\right)$ under $\mathbf{s}$ is better than $\mathbf{\hat{G}}\left(T\right)$ under $\mathbf{\hat{s}}$. The intermediate networks $\mathbf{G}(t)$ for $t<T$ do not affect a farsighted planner's utility. The limit of geometric discounting as $\delta\rightarrow+\infty$ reflects a planner who focuses solely on achieving a desirable final network.

On the other extreme, when the geometric discounting factor $\delta\rightarrow 0^{+}$, preferences converge to myopia:

\begin{defin}
\label{def-mypoic} The planner is myopic if, for any $\mathbf{s}=\left(\mathbf{G}\left(1\right),\ldots,\mathbf{G}\left(T\right)\right)$ and $\mathbf{\hat{s}}=\left(\mathbf{\hat{G}}\left(1\right),\ldots,\mathbf{\hat{G}}\left(T\right)\right)$ with $t'$ being the first time when $u(\mathbf{G}(t'))\neq u(\hat{\mathbf{G}}(t'))$, 
\begin{equation*}
u(\mathbf{G}(t'))>u(\hat{\mathbf{G}}(t'))\implies v(\mathbf{s})>v(\hat{\mathbf{s}}).
\end{equation*}
\end{defin}

A myopic planner focuses on the immediate structure of the network and heavily discounts future payoffs. Equivalently, there exists a cutoff $\epsilon>0$ such that the planner is myopic if and only if the geometric discount factor $\delta$ is smaller than $\epsilon$.

\begin{remark}
By choosing an appropriate discount function, the model can also encompass the case where the planner is capable of establishing multiple links within a single period. For instance, when $D\left(t\right)=0$ and $D\left(t+1\right)>0$, it is as if the planner constructs two links at once. An extreme case is described in Definition \ref{def-FS}, where the planner chooses all $T$ links at once.
\end{remark}

Next, we impose the following assumption on the planner's instantaneous utility function $u(\cdot)$. Given a network $\mathbf{G}$, for any nonnegative integer $k$, denote 
\begin{equation*}
W^{k}(\mathbf{G}):=\mathbf{1}'\mathbf{G}^{k}\mathbf{1},
\end{equation*}
where $\mathbf{1}$ denotes the $n$-dimensional vector of $1$s. As is well-known in the network literature, $W^{k}(\mathbf{G})$ counts the total number of walks of length $k$ in the network. Let $\lambda_{\max}(\mathbf{G})$ denote the spectral radius of the network. We write $\mathbf{G}\cong\mathbf{\hat{G}}$ when two networks $\mathbf{G}$ and $\mathbf{\hat{G}}$ are isomorphic.

\begin{ass}
\label{ass:preference} For any two networks $\mathbf{G}$ and $\mathbf{\hat{G}}$ with the same total number of links, if 
\begin{equation*}
\begin{cases} W^{k}(\mathbf{G})>W^{k}(\mathbf{\hat{G}})\text{, for any }k\geq 2 \\ \lambda_{\max}(\mathbf{G})>\lambda_{\max}(\mathbf{\hat{G}}) \end{cases}
\end{equation*}
then $u(\mathbf{G})>u(\mathbf{\hat{G}})$. Moreover, if $\mathbf{G}\cong\mathbf{\hat{G}}$, then $u(\mathbf{G})=u(\mathbf{\hat{G}})$.
\end{ass}

Assumption \ref{ass:preference} is very mild, as it imposes restrictions only on pairs of networks where one uniformly and strictly dominates the other in terms of aggregate walks of not only arbitrary length but also in the limit as length approaches infinity (noting that $\lambda_{\max}(\mathbf{G})>\lambda_{\max}(\mathbf{\hat{G}})$ is equivalent to $\lim_{k\rightarrow\infty}\frac{W^{k}(\mathbf{G})}{W^{k}(\mathbf{\hat{G}})}=\infty$). For network pairs not comparable under this strict dominance relation, the assumption allows flexibility in preferences. Specifically, when $\lambda_{\max}(\mathbf{G})=\lambda_{\max}(\mathbf{\hat{G}})$, or there exist two positive integers $l$ and $l'$ such that $W^{l}(\mathbf{G})>W^{l}(\mathbf{\hat{G}})$ while $W^{l'}(\mathbf{G})\leq W^{l'}(\mathbf{\hat{G}})$, then the planner is able to prefer either network without violating the assumption.

\subsection{Examples and Applications}
\label{sub:examples}
Assumption \ref{ass:preference} requires the planner's preference to respect network rankings according to aggregate walks of arbitrary length. Instantaneous utilities of the form $u(\mathbf{G})=\sum_{k\geq 0}\rho_{k}\mathbf{1}'\mathbf{G}^{k}\mathbf{1}$, where $\rho_{k}>0$ and the series converges, always satisfy Assumption \ref{ass:preference}.\footnote{A sufficient condition for convergence on all networks with $n$ nodes is that the power series $\sum_{k\geq 0}\rho_{k}x^{k}$ has radius of convergence at least $n-1$, the maximum possible spectral radius of any graph.} We highlight several widely used specifications.

\begin{enumerate}
\item Assume $0<\phi<\frac{1}{\lambda_{\max}(\mathbf{G})}$. For any integer $\alpha\geq 0$, define 
\begin{equation}  \label{eq:KB&walks}
b(\alpha,\phi,\mathbf{G}):=\mathbf{1}'(\mathbf{I}-\phi\mathbf{G})^{-\alpha}\mathbf{1}=\sum_{k=0}^{\infty}\binom{\alpha+k-1}{k}\phi^{k}\mathbf{1}'\mathbf{G}^{k}\mathbf{1},
\end{equation}
where $\binom{\alpha+k-1}{k}$ is the binomial coefficient. The utility function $u(\mathbf{G})=b(\alpha,\phi,\mathbf{G})$ satisfies Assumption \ref{ass:preference} for any integer $\alpha\geq 0$.

The cases $\alpha=1$ and $\alpha=2$ are  of particular significance. \cite{Ballester2006} show that in a linear quadratic network game, the unique equilibrium profile is $\mathbf{x}^{\ast}=(\mathbf{I}-\phi\mathbf{G})^{-1}\mathbf{1}$, yielding aggregate effort $\sum_{i=1}^{n}x_{i}^{\ast}=b(1,\phi,\mathbf{G})$ and aggregate welfare $\frac{1}{2}\sum_{i=1}^{n}(x_{i}^{\ast})^{2}=\frac{1}{2}b(2,\phi,\mathbf{G})$. Thus, a planner maximizing either aggregate effort or total welfare falls within our framework.

\item For $\beta>0$, define 
\begin{equation}
\label{eq:C&walks}
c(\beta,\mathbf{G}):=\mathbf{1}'e^{\beta\mathbf{G}}\mathbf{1}=\sum_{k=0}^{\infty}\frac{\beta^{k}}{k!}\mathbf{1}'\mathbf{G}^{k}\mathbf{1}
\end{equation}
as the total network communicability \citep{Estrada2012,Benzi2013}. This measure quantifies the overall effectiveness of communication across the network by summing the communicability between all pairs of nodes. For any $\beta>0$, the utility $u(\mathbf{G})=c(\beta,\mathbf{G})$ satisfies Assumption \ref{ass:preference} since the exponential series converges and assigns positive weight $\rho_{k}=\frac{\beta^{k}}{k!}>0$ to walks of length $k$. As \cite{Benzi2013} demonstrate, this measure is particularly valuable for comparing communication efficiency across network configurations.

\item The utility function $u(\mathbf{G})=\lambda_{\max}(\mathbf{G})$, or any monotonic transformation thereof, also satisfies Assumption \ref{ass:preference}, though it is not explicitly walk-based.\footnote{Note that $W^{k}(\mathbf{G})>W^{k}(\mathbf{\hat{G}})$ for all $k\geq 2$ implies $b(\alpha,\phi,\mathbf{G})>b(\alpha,\phi,\mathbf{\hat{G}})$ and $c(\beta,\mathbf{G})>c(\beta,\mathbf{\hat{G}})$, but only $\lambda_{\max}(\mathbf{G})\geq\lambda_{\max}(\mathbf{\hat{G}})$. Equality can occur when $\mathbf{G}$ has isolated components with the largest being isomorphic to $\mathbf{\hat{G}}$. Assumption \ref{ass:preference} accommodates such cases by permitting the planner to be indifferent between such networks.} The maximal spectral radius problem thus emerges as a special case of problem \eqref{design problem} when the planner benefits from the network's spectral radius.\footnote{Finding the graph with maximum spectral radius for a given number of links was posed by \cite{Brualdi1985} and remains open after 35 years; see \cite{Radanovic2024} for recent progress.}
\end{enumerate}
Beyond centrality measures, our framework also extends to other network game models in which the Katz–Bonacich centrality, or its variants and transformations, determine equilibrium outcomes.  Let $\mathbf{b}(\mathbf{G},\delta)=(\mathbf{I}-\delta\mathbf{G})^{-1}\mathbf{1}$ denote the vector of Katz-Bonacich centrality and $b(\mathbf{G},\delta)=\mathbf{1}'(\mathbf{I}-\delta\mathbf{G})^{-1}\mathbf{1}$ its sum. We illustrate with three examples:

\begin{enumerate}
\item[(i)] (\textbf{Global Substitution}) Consider the linear quadratic network game of \cite{Ballester2006}, where player $i$'s utility is
\begin{equation*}
u_{i}(x_{i},\mathbf{x}_{-i})=x_{i}-\frac{1}{2}x_{i}^{2}-\phi\sum_{k\neq i}x_{i}x_{k}+\delta\sum_{k=1}^{n}g_{ik}x_{i}x_{k}.
\end{equation*}
The term $\phi\sum_{k\neq i}x_{i}x_{k}$ captures global interaction effects corresponding to strategic substitutability across all players, with $\phi\geq 0$ measuring the intensity of this interdependence. The equilibrium aggregate effort  is $\mathbf{1}'\mathbf{x}^{\ast}=\frac{b(\mathbf{G},\frac{\delta}{1-\phi})}{1+\phi b(\mathbf{G},\frac{\delta}{1-\phi})}$, which is monotone in $b(\mathbf{G},\frac{\delta}{1-\phi})$ and thus satisfies Assumption \ref{ass:preference}.

\item[(ii)]  (\textbf{Multiple Activities})
\cite{D2x} analyze a network model where each player $i$ chooses levels of two activities $(x_{i}^{A},x_{i}^{B})=\mathbf{x}_{i}$ with utility
\begin{equation*}
u_{i}(\mathbf{x}_{i},\mathbf{x}_{-i})=x_{i}^{A}+x_{i}^{B}-\left\{\frac{1}{2}(x_{i}^{A})^{2}+\frac{1}{2}(x_{i}^{B})^{2}+\beta x_{i}^{A}x_{i}^{B}\right\}+\delta\sum_{j}g_{ij}x_{i}^{A}x_{j}^{A}+\delta\sum_{j}g_{ij}x_{i}^{B}x_{j}^{B},
\end{equation*}
where $\frac{1}{2}(x_{i}^{A})^{2}+\frac{1}{2}(x_{i}^{B})^{2}+\beta x_{i}^{A}x_{i}^{B}$ represents the cost of actions and $\delta\sum_{j}g_{ij}x_{i}^{A}x_{j}^{A}+\delta\sum_{j}g_{ij}x_{i}^{B}x_{j}^{B}$ captures network externalities. The aggregate equilibrium activities are
\begin{equation*}
\mathbf{1}'\mathbf{x}^{A}=\sum_{t=0}^{\infty}\left(\frac{\delta^{t}}{2(1+\beta)^{t+1}}+\frac{\delta^{t}}{2(1-\beta)^{t+1}}\right)\mathbf{1}'\mathbf{G}^{t}\mathbf{1}
\end{equation*}
and
\begin{equation*}
\mathbf{1}'\mathbf{x}^{B}=\sum_{t=0}^{\infty}\left(\frac{\delta^{t}}{2(1+\beta)^{t+1}}-\frac{\delta^{t}}{2(1-\beta)^{t+1}}\right)\mathbf{1}'\mathbf{G}^{t}\mathbf{1}.
\end{equation*}
Maximizing aggregate activity  $A$ or $B$ satisfies Assumption \ref{ass:preference} whenever $\beta<0$.

\item[(iii)](\textbf{Congestion Effects})
\cite{Currarini2017} study a network game with congestion effects between distance-two neighbors, where player $i$'s payoff is
\begin{equation*}
u_{i}(x_{i},\mathbf{x}_{-i})=x_{i}-\frac{1}{2}x_{i}^{2}+\delta\sum_{k=1}^{n}g_{ik}x_{i}x_{k}-\gamma\sum_{k=1}^{n}g_{ik}^{[2]}x_{i}x_{k},
\end{equation*}
where $g_{ik}^{[2]}$ is the $ik$-th element of $\mathbf{G}^{2}$ and the term $-\gamma\sum_{k=1}^{n}g_{ik}^{[2]}x_{i}x_{k}$ captures strategic substitution between players at distance two in the network. The first-best strategy profile can be written as a linear combination of two Katz-Bonacich centralities:
\begin{equation*}
\mathbf{x}^{\ast}=\frac{\beta_{1}}{\beta_{1}-\beta_{2}}\mathbf{b}(\mathbf{G},\beta_{1})-\frac{\beta_{2}}{\beta_{1}-\beta_{2}}\mathbf{b}(\mathbf{G},\beta_{2}),
\end{equation*}
where $\beta_{1}=\frac{\delta+\sqrt{\delta^{2}-4\gamma}}{2}$ and $\beta_{2}=\frac{\delta-\sqrt{\delta^{2}-4\gamma}}{2}$. Maximizing total activity satisfies Assumption \ref{ass:preference} since
\begin{equation*}
\mathbf{1}'\mathbf{x}^{\ast}=\sum_{t=0}^{\infty}\frac{\beta_{1}^{t+1}-\beta_{2}^{t+1}}{\beta_{1}-\beta_{2}}\mathbf{1}'\mathbf{G}^{t}\mathbf{1}.
\end{equation*}
\end{enumerate}

Our framework captures optimization problems across diverse economic contexts by allowing both the planner's farsightedness $D(t)$ and the network ranking measure $u(t)$ to vary with the application. In development economics, the framework can determine optimal transportation infrastructure when a city's GDP depends on road connectivity. Links represent physical roads connecting communities, and the planner maximizes economic output by strategically choosing which roads to build in each period. In information transmission, the framework applies when a social planner constructs a communication network to maximize information diffusion, as measured by aggregate diffusion centrality \citep{Banerjee2013,Bramoulle2024}. Links here represent communication channels through which information flows. In industrial organization, when agents interact strategically, the framework characterizes optimal peer effect structures for maximizing aggregate equilibrium activity or welfare, where links represent strategic interactions between agents.

However, Assumption \ref{ass:preference} may not hold in all economically relevant settings. Consider the planner's instantaneous utility
\begin{equation*}
u(\mathbf{G})=\mathbf{1}'(\mathbf{I}+\phi\mathbf{G})^{-1}\mathbf{1}=\sum_{k=0}^{\infty}(-\phi)^{k}W^{k}(\mathbf{G})
\end{equation*}
where $\phi>0$. The alternating signs of the coefficients mean that even when $W^{k}(\mathbf{G})>W^{k}(\mathbf{\hat{G}})$ for all $k\geq 2$, we cannot conclude that $u(\mathbf{G})>u(\mathbf{\hat{G}})$, violating Assumption \ref{ass:preference}. Such non-monotonic preferences arise naturally in settings with strategic substitution, where increased connectivity can reduce equilibrium welfare \citep{Bramoulle2007,Bramoulle2014,Elliott2019}. Our framework also does not apply to production networks \citep{Acemoglu2012,acemoglu2020,elliott2022}, which are inherently directed and whose general equilibrium outcomes are not monotonic in walks of all lengths. Extending our approach to directed networks with objectives non-monotonic in walks remains an important direction for future research.

\subsection{Notations}

We conclude the model setup by introducing special network structures that play an essential role in our analysis. Denote $N_{i}(\mathbf{G})=\{j:g_{ij}=1\}$ as the set of $i$'s neighbors in network $\mathbf{G}$.

\begin{defin}
\label{def-NSG} A network $\mathbf{G}$ is called a nested split graph (NSG)\footnote{In the graph theory literature, several equivalent definitions of NSGs exist; see \cite{N.V.R.Mahadev1995} for alternative characterizations and properties. See \cite{Koenig2014}, \cite{Billand2015}, \cite{Belhaj2016}, and \cite{BILLAND2023} for economic applications.} if, for each $i\neq j$, either 
\begin{equation*}
N_{i}(\mathbf{G})\backslash\{j\}\subseteq N_{j}(\mathbf{G})\backslash\{i\}\quad\text{or}\quad N_{j}(\mathbf{G})\backslash\{i\}\subseteq N_{i}(\mathbf{G})\backslash\{j\}.
\end{equation*}
\end{defin}

\begin{figure}[ht]
\centering
\includegraphics[scale=0.5]{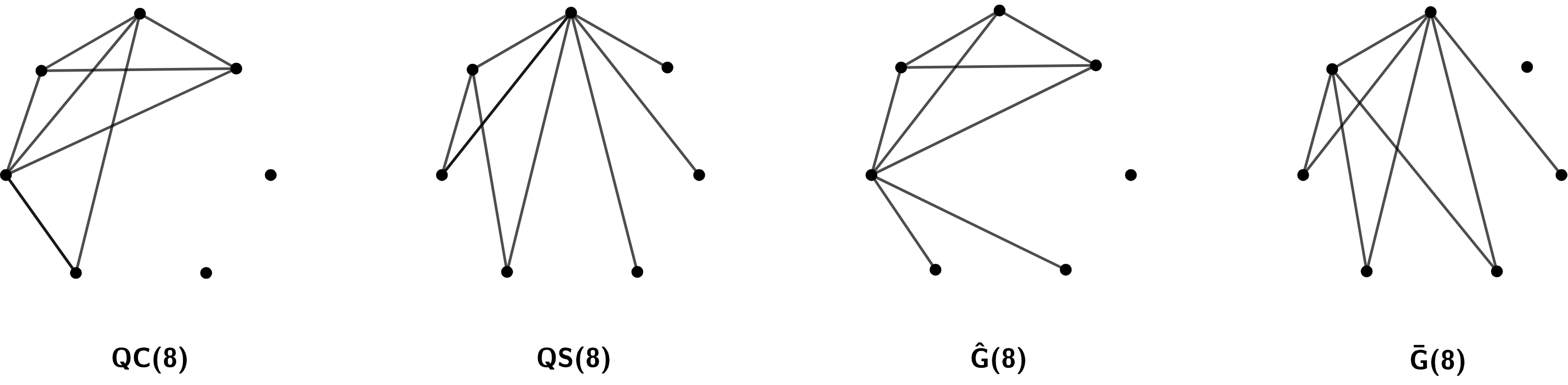}
\caption{Nested split graphs in $\mathcal{NSG}(8)$ with $n=7$ nodes and $k=8$ links}
\label{fig:NSGs}
\end{figure}

For any positive integer $k$, let $\mathcal{NSG}(k)$ denote the set of all NSGs with $k$ links. Figure \ref{fig:NSGs} illustrates all possible NSGs with $k=8$ links on $n=7$ nodes. NSGs represent a rich family of network structures with diverse topological properties.

\begin{defin}
\label{def-QC} A network $\mathbf{G}\in\mathcal{NSG}(t)$ is called a quasi-complete graph, denoted by $\mathbf{QC}(t)$, if it contains a clique of size $p$, where 
\begin{equation*}
\frac{p(p-1)}{2}\leq t<\frac{p(p+1)}{2},
\end{equation*}
and the remaining $t-\frac{p(p-1)}{2}$ links connect one additional node to nodes in the clique.
\end{defin}

\begin{figure}[ht]
\centering
\includegraphics[scale=0.6]{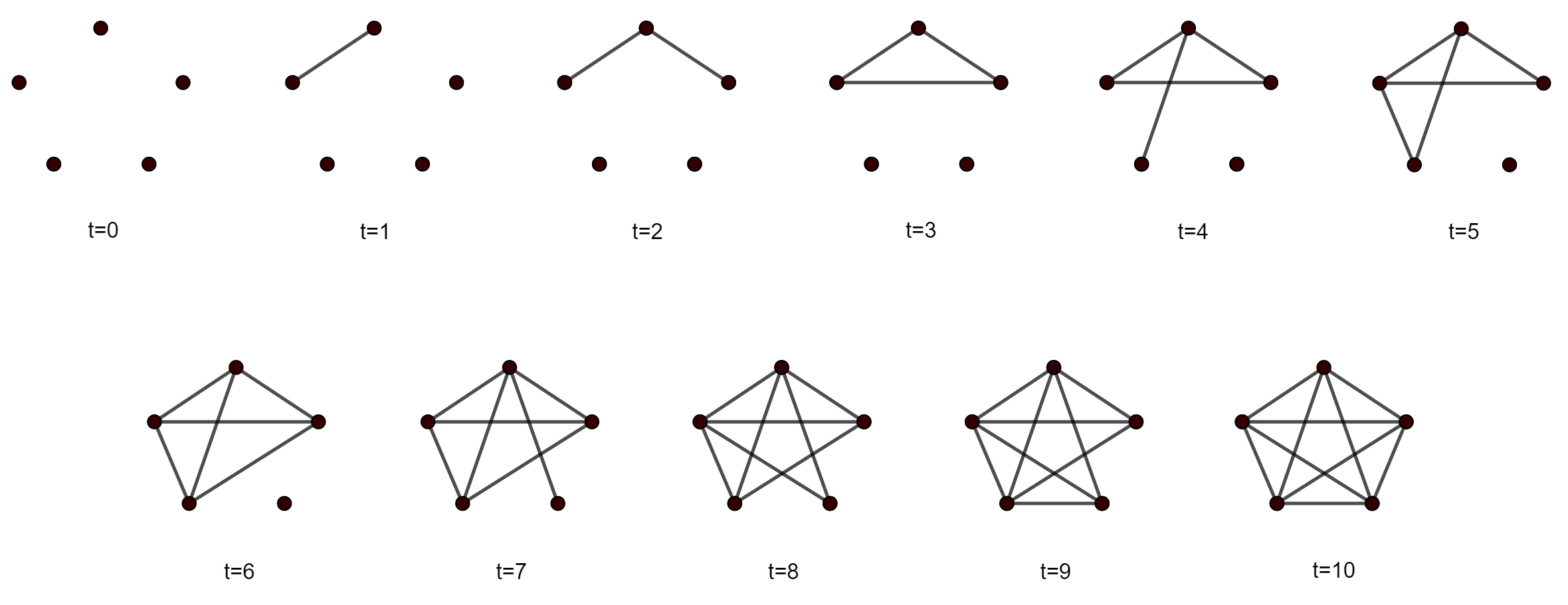}
\caption{Quasi-complete networks with $n=5$ nodes.}
\label{fig:QC}
\end{figure}

The quasi-complete (QC) graph represents a specialized subclass of NSGs characterized by having the largest possible clique among all graphs with a given number of links. Specifically, for a QC graph with $t$ links satisfying $\frac{p(p-1)}{2}\leq t<\frac{p(p+1)}{2}$, there exists a unique maximal clique of size $p$. Figure \ref{fig:QC} displays QC graphs with $n=5$ nodes for varying numbers of links $t$.Notably, for a fixed number of links, the QC graph is unique up to isomorphism. Figure \ref{fig:QC} also illustrates a dynamic network formation process across $T=10$ periods.

\section{Main Results}
\label{sec:main results}

\subsection{The Optimal Formation Process}
\label{sub:results}

\begin{thm}
\label{thm:unweighted-NSG}~

\begin{enumerate}
\item An optimal path $\mathbf{s}^*$ exists. Moreover, there always exists an optimal path $\mathbf{s}^{\ast}=(\mathbf{G}^{\ast}(t))_{t=1}^{T}$ such that $\mathbf{G}^{\ast}(t)\in\mathcal{NSG}(t)$ for all $1\leq t\leq T$;

\item If $D(r)>0$ for some period $r$, then any optimal path $\mathbf{s}^{\ast}=(\mathbf{G}^{\ast}(t))_{t=1}^{T}$ that solves \eqref{design problem} must satisfy $\mathbf{G}^{\ast}(r)\in\mathcal{NSG}(r)$.
\end{enumerate}
\end{thm}

Theorem \ref{thm:unweighted-NSG} establishes that NSGs characterize optimal network formation at each stage of the process. Part (i) shows that without loss of optimality, we can restrict attention to paths that form an NSG in every period. Part (ii)  states that if the planner assigns positive weight to some period $r$, then the network formed in that period must be an NSG along any optimal path. When applied to the instantaneous utility $u(\mathbf{G})=\lambda_{\max}(\mathbf{G})$, our result complements the literature on maximal spectral radius by incorporating a dynamic formation process. Specifically, any formation process that maximizes a weighted sum of spectral radii across periods necessarily produces an NSG in each period.

Let $\mathcal{G}(K)$ denote the set of all networks with $K$ links, and $\lambda_{\max}(K)$ denote the maximum spectral radius over all networks in $\mathcal{G}(K)$. Applying Theorem \ref{thm:unweighted-NSG} to the far-sighted planner (see Definition \ref{def-FS}) with $u=b(\alpha,\phi,\mathbf{G})$, we obtain:

\begin{proposition}
\label{prop-alpha}When $\phi\in[0,\frac{1}{\lambda_{\max}(K)})$, any solution to 
\begin{equation}
\max_{\mathbf{G}\in\mathcal{G}(K)}b(\alpha,\phi,\mathbf{G})
\label{program-KB-alpha}
\end{equation}
is an NSG for any positive integer $\alpha$.
\end{proposition}

The special cases $\alpha=1$ and $\alpha=2$ of Problem \eqref{program-KB-alpha} are studied in \cite{Belhaj2016}.

\begin{corollary}
[\citealt{Belhaj2016}] \label{cor-TE2016} The solution to 
\begin{equation*}
\text{either }\max_{\mathbf{G}\in\mathcal{G}(K)}b(1,\phi,\mathbf{G})\text{ or }\max_{\mathbf{G}\in\mathcal{G}(K)}b(2,\phi,\mathbf{G})
\end{equation*}
is an NSG.
\end{corollary}

\cite{Belhaj2016} show that when the planner maximizes either the sum of Katz-Bonacich centrality or the sum of its square, the optimal network in $\mathcal{G}(K)$ must be an NSG. Theorem \ref{thm:unweighted-NSG} and Proposition \ref{prop-alpha} extend their findings in two key dimensions. First, our results apply to any discount function, not merely those assigning positive weight only to the final period. This ensures that every intermediate network along the optimal formation path must be an NSG, not just the terminal network. Second, we accommodate a broader class of planner preferences beyond Katz-Bonacich centrality, encompassing diffusion centrality, community centrality, and spectral radius as discussed in Section \ref{sub:examples}.

We sketch the proof of Theorem \ref{thm:unweighted-NSG}, which proceeds in two main steps. In the first step, we examine how the neighbor reallocation operation introduced by \cite{Belhaj2016} affects aggregate walks of various lengths.

\begin{defin}
For any two nodes $i,j\in N$, define the operator 
\begin{equation*}
\mathcal{T}_{j\rightarrow i}:\mathcal{G}(K)\rightarrow\mathcal{G}(K)
\end{equation*}
such that $\mathcal{T}_{j\rightarrow i}(\mathbf{G})=\mathbf{G}+\sum_{l\in L}\mathbf{E}_{il}-\sum_{l\in L}\mathbf{E}_{jl}$ for any $\mathbf{G}\in\mathcal{G}(K)$, where 
\begin{equation*}
L:=\{l\in N\backslash\{i,j\}:g_{il}=0\text{ and }g_{jl}=1\}
\end{equation*}
is the set of nodes that are neighbors of $j$ but not of $i$.
\end{defin}

The network $\mathcal{T}_{j\rightarrow i}(\mathbf{G})$ is obtained by shifting \emph{all} neighbors of $j$ that are not connected to $i$ to become neighbors of $i$ instead. Three properties of the operator $\mathcal{T}_{j\rightarrow i}$ follow directly from the definition:

\begin{enumerate}
\item $\mathcal{T}_{j\rightarrow i}$ only reallocates links and preserves the total number of links.

\item $\mathcal{T}_{i\rightarrow j}(\mathbf{G}) \cong \mathcal{T}_{j\rightarrow i}(\mathbf{G})$.

\item If $\mathbf{G}\in\mathbb{S}(\mathbf{\tilde{G}})$, then $\mathcal{T}_{j\rightarrow i}(\mathbf{G})\in\mathbb{S}(\mathcal{T}_{j\rightarrow i}(\mathbf{\tilde{G}}))$.
\end{enumerate}

\begin{lem}
\label{lem:wel-increase}When $L\neq\emptyset$ and $\mathcal{T}_{j\rightarrow i}(\mathbf{G})\ncong \mathbf{G}$, then $W^{k}(\mathbf{G})<W^{k}(\mathcal{T}_{j\rightarrow i}(\mathbf{G}))$ for any integer $k\geq 2$ and $\lambda_{\max}(\mathbf{G})<\lambda_{\max}(\mathcal{T}_{j\rightarrow i}(\mathbf{G}))$.
\end{lem}

\cite{Belhaj2016} showed that the operator $\mathcal{T}_{j\rightarrow i}$ improves both the sum of Katz-Bonacich centrality and the sum of squared Katz-Bonacich centrality. Lemma \ref{lem:wel-increase} extends their result by demonstrating that this operation enhances any walk-based centrality measure satisfying Assumption \ref{ass:preference}. Furthermore, the corresponding lemma in \cite{Belhaj2016} required reallocating neighbors specifically from the node with lower Katz-Bonacich centrality to the node with higher centrality. Our Lemma \ref{lem:wel-increase} eliminates this directional constraint by the second property of $\mathcal{T}_{j\rightarrow i}$.\footnote{The reverse direction would involve reallocating neighbors from the node with higher Katz-Bonacich centrality to the node with lower centrality.}

A straightforward corollary of Lemma \ref{lem:wel-increase} is that in static network design, if the planner's preference satisfies Assumption \ref{ass:preference}, then the optimal network must be an NSG. This follows because if a network is not an NSG, there always exists a utility-improving transformation $\mathcal{T}_{j\rightarrow i}$. However, Lemma \ref{lem:wel-increase} cannot be directly applied to sequential network design, since applying the neighbor reallocation operation to a single network along a feasible path does not necessarily yield another feasible path.

To address this challenge, in the second step of the proof, we develop the following algorithm to construct a perturbed path that maintains feasibility while guaranteeing utility improvement.

\begin{algorithm}
\label{alg:dynamic-reallocation} For any strategy $\mathbf{s}=(\mathbf{G}(t))_{t=1}^{T}$, we define the following algorithm:

\begin{description}
\item[\textbf{Step 1.}] Check whether $\mathbf{G}(t)\in\mathcal{NSG}(t)$ for all $1\leq t\leq T$.

\begin{itemize}
\item If true, the algorithm terminates;

\item If false, proceed to Step 2.
\end{itemize}

\item[\textbf{Step 2.}] Find ${t}^{\prime}$ such that $\mathbf{G}(t)\in\mathcal{NSG}(t)$ for all $t\leq{t}^{\prime}-1$ and $\mathbf{G}({t}^{\prime})\notin\mathcal{NSG}({t}^{\prime})$. Find a pair of nodes $i,j\in N$ that violates nestedness at ${t}^{\prime}$ (where $i$'s neighborhood contains $j$'s neighborhood in all periods $t < t'$).

\item[\textbf{Step 3.}] Construct another strategy $\hat{\mathbf{s}}=(\hat{\mathbf{G}}(t))_{t=1}^{T}$ according to the following rules:

\begin{itemize}
\item If $t<{t}^\prime$, let $\hat{\mathbf{G}}(t)=\mathbf{G}(t)$.

\item If $t\geq{t}^\prime$ and $\mathbf{G}(t+1)=\mathbf{G}(t)+\mathbf{E}_{jl}$ for some $l\notin\{i,j\}$, then:

\begin{enumerate}
\item[i.] If $\hat{g}_{il}(t)=0$, let $\hat{\mathbf{G}}(t+1)=\hat{\mathbf{G}}(t)+\mathbf{E}_{il}$ (e.g., $t=4,5$ in Figure \ref{fig:proof});

\item[ii.] If $\hat{g}_{il}(t)=1$, let $\hat{\mathbf{G}}(t+1)=\hat{\mathbf{G}}(t)+\mathbf{E}_{jl}$.
\end{enumerate}

\item If $t\geq{t}^\prime$ and $\mathbf{G}(t+1)=\mathbf{G}(t)+\mathbf{E}_{il}$ for some $l\notin\{i,j\}$, then:

\begin{enumerate}
\item[i.] If $\hat{g}_{il}(t)=0$, let $\hat{\mathbf{G}}(t+1)=\hat{\mathbf{G}}(t)+\mathbf{E}_{il}$;

\item[ii.] If $\hat{g}_{il}(t)=1$, let $\hat{\mathbf{G}}(t+1)=\hat{\mathbf{G}}(t)+\mathbf{E}_{jl}$.
\end{enumerate}

\item If $t\geq{t}^\prime$ and $\mathbf{G}(t+1)=\mathbf{G}(t)+\mathbf{E}_{lk}$ for some $l,k\notin\{i,j\}$ or $(l,k)=(i,j)$, then let $\hat{\mathbf{G}}(t+1)=\hat{\mathbf{G}}(t)+\mathbf{E}_{lk}$ (e.g., $t=6$ in Figure \ref{fig:proof}).
\end{itemize}

\item[\textbf{Step 4.}] Set $\mathbf{s}=\hat{\mathbf{s}}$ and return to \textbf{Step 1}.
\end{description}
\end{algorithm}

We illustrate the algorithm using a six-node example in Figure \ref{fig:proof}. In this example, the original sequence $\mathbf{s}$ produces the first non-NSG at period 4, where $(i,j)$ is a pair violating nestedness. In period 4, the original process $\mathbf{s}$ connects one node to node $j$. Since this node is not connected to $i$, the algorithm switches the link to connect $i$ and this node instead (the second bullet of Step 3). A similar operation is performed in period 5. In period 6, since the original process $\mathbf{s}$ connects a pair not involving either $i$ or $j$, no perturbation occurs (the fourth bullet of Step 3). In the final period, the original process $\mathbf{s}$ connects a node to $i$. Since this node is already connected to $i$ under the perturbed process, the algorithm switches this link to connect the node and $j$ (the third bullet of Step 3). 

This example illustrates a general property of the algorithm. For any sequence $\mathbf{s}$ where $(i,j)$ is the first non-nested pair at some period, the algorithm produces a new sequence that reallocates all of $j$'s neighbors that are not $i$'s neighbors---the set $L(t)$ shown by the dotted circle in Figure \ref{fig:proof}---to node $i$ in each period $t$. Therefore, the newly constructed sequence $(\hat{\mathbf{G}}(t))_{t=1}^{T}$ induces a weakly higher payoff $v(\cdot)$ than the original sequence $(\mathbf{G}(t))_{t=1}^{T}$ under Assumption \ref{ass:preference}. With each iteration of the algorithm, either the sequence remains unchanged, or the payoff $v(\cdot)$ weakly increases by transforming at least one non-NSG into an NSG. That is, when the input and output differ, there exists at least one period $t^{\prime}$ where $\mathbf{G}(t^{\prime})$ changes. Therefore, the algorithm must terminate. At termination, the resulting sequence consists entirely of NSGs.

\begin{figure}[ht]
\centering
\includegraphics[scale=0.47]{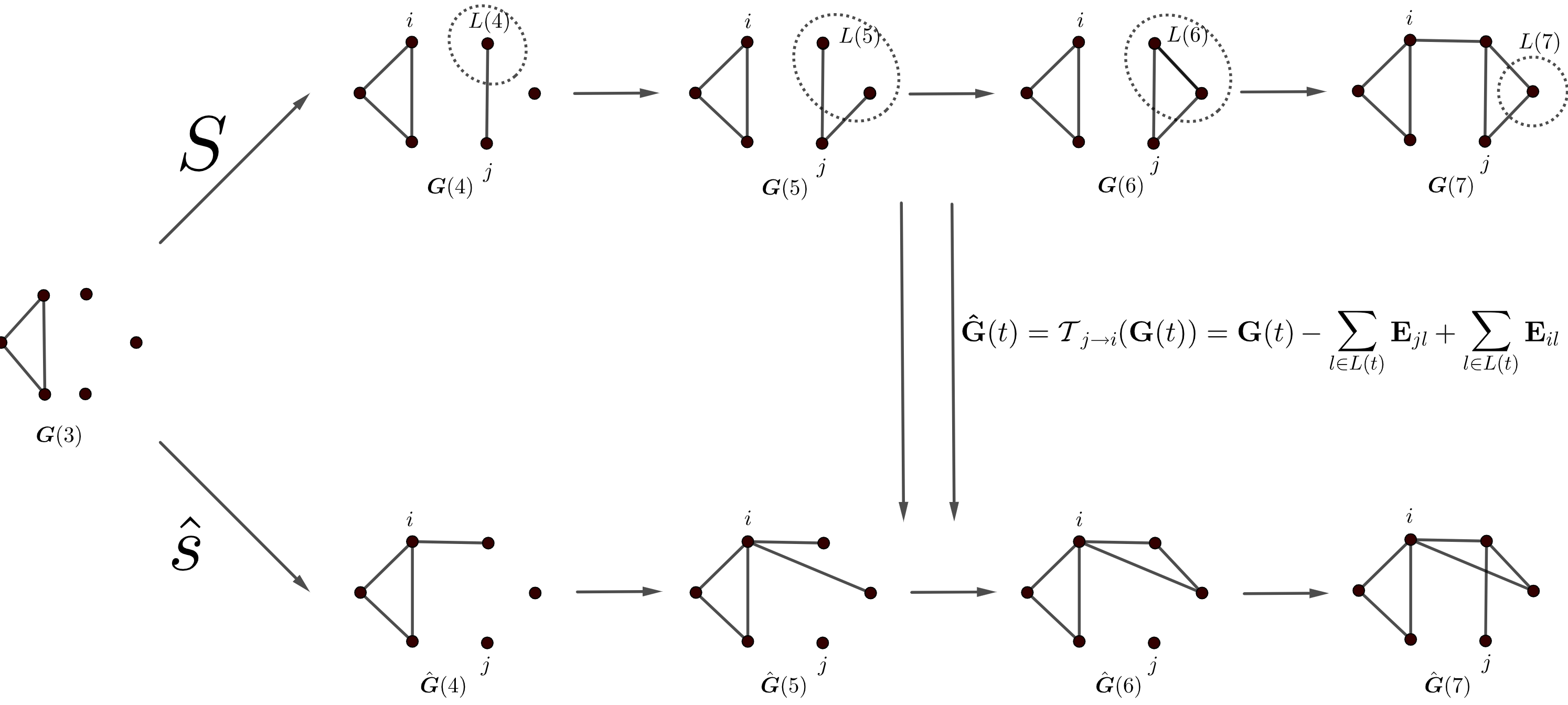}
\caption{An illustrative example for the proof of Theorem \protect\ref{thm:unweighted-NSG}}
\label{fig:proof}
\end{figure}

Before refining our characterization, we first clarify the scope of Theorem~1 and its connections to observed network structures. Our framework is most applicable in settings where distance costs are negligible. For instance, within an urban region, relative cost differences arising from geographic distance may be small compared with other constraints. Over longer distances between cities, however, geographic constraints become binding, and incorporating spatial factors remains an important direction for future research.\footnote{We thank an anonymous referee for pointing out this limitation of our framework.} 

Theorem~1 is consistent with empirical findings showing that nestedness arises across various network environments in which geographical constraints are less dominant. \cite{Uzzi1996} find that the organizational structure of the New York garment industry exhibits nestedness, reflecting patterns of collaboration and resource sharing within a geographically concentrated area. \cite{Bastos2018} demonstrate that the diffusion of specialized information on Twitter leads to a core-periphery architecture—a type of nested split graph—in a setting where physical distance is irrelevant.\footnote{Trade networks also frequently display nestedness in the context of directed networks. \citet{Akerman2014} show that the set of countries to which a large country exports arms almost always contains the set of countries to which a smaller country exports. Similarly, \citet{Ren2020} highlight the significance of nested trade network structures for high-complexity products.}

Theorem \ref{thm:unweighted-NSG} establishes the optimality of NSGs under general preference specifications. However, this generality also implies that the NSG-based characterization remains relatively coarse (recall that there may be multiple NSGs with the same total link count $T$). The next subsection provides a refined characterization of the optimal path when the planner is sufficiently myopic.

\subsection{Myopic Optimum}
\label{sub:myopic}

In this subsection, we characterize the optimal network formation path for a myopic planner. Unlike a far-sighted planner who optimizes over the entire formation path, a myopic planner follows a greedy algorithm:

\begin{defin}
\label{def:greedy strategy} 
A network formation path $\mathbf{\tilde{s}}=(\mathbf{\tilde{G}}(t))_{t=1}^{T}$ is induced by the greedy algorithm if, for any $t$,
\begin{equation*}
\mathbf{\tilde{G}}(t) \in \arg \max_{\mathbf{G}\in\mathbb{S}(\mathbf{\tilde{G}}(t-1))}u(\mathbf{G}).
\end{equation*}
\end{defin}

The greedy algorithm is intuitive: at each step, add the link that maximizes current utility. The following theorem fully characterizes the network path induced by this algorithm.

\begin{thm}
\label{thm:unweighted-QC} 
A myopic planner produces the same outcome as the greedy algorithm, $\mathbf{\tilde{s}}$. Furthermore, $\mathbf{\tilde{s}}=(\mathbf{\tilde{G}}(t))_{t=1}^{T}$ corresponds to a quasi-complete graph in each period, i.e., $\mathbf{\tilde{G}}(t)\cong \mathbf{QC}(t)$ for all $t\leq T$.
\end{thm}

Theorem \ref{thm:unweighted-QC} characterizes the network path induced by the greedy algorithm. The algorithm produces a unique sequence of networks up to isomorphism for any number of total links $T$, subject to the mild restriction on planner preferences in Assumption \ref{ass:preference}. 

This theorem has two important implications. First, from an economic perspective, when a planner heavily discounts future utility streams, the resulting network formation will follow a sequence of QC graphs. This myopic planning scenario is common in practice. For example, a mayor tasked with building roads to connect separated villages may primarily focus on economic outcomes during their term of office. Second, from a network theory perspective, this result provides a microeconomic foundation for QC graphs by demonstrating their emergence through the natural process of greedy optimization.

To prove Theorem \ref{thm:unweighted-QC}, we restrict our attention to formation paths that induce NSGs at each period by Theorem \ref{thm:unweighted-NSG}. Assume that a QC graph is formed at period $t-1$. After adding one more link to this QC graph, one can obtain a unique NSG (up to isomorphism) if $t-1=p(p-1)/2$ for some integer $p$. However, if $t-1\neq p(p-1)/2$, two non-isomorphic NSGs can result. Figure \ref{fig:QCproof} illustrates the latter case: one NSG network $\mathbf{QC}=\mathbf{G}+\mathbf{E}_{34}$ is quasi-complete, and the second NSG is $\mathbf{\hat{G}}=\mathbf{G}+\mathbf{E}_{15}$.

\begin{figure}[tph]
\begin{center}
\includegraphics[scale=0.9]{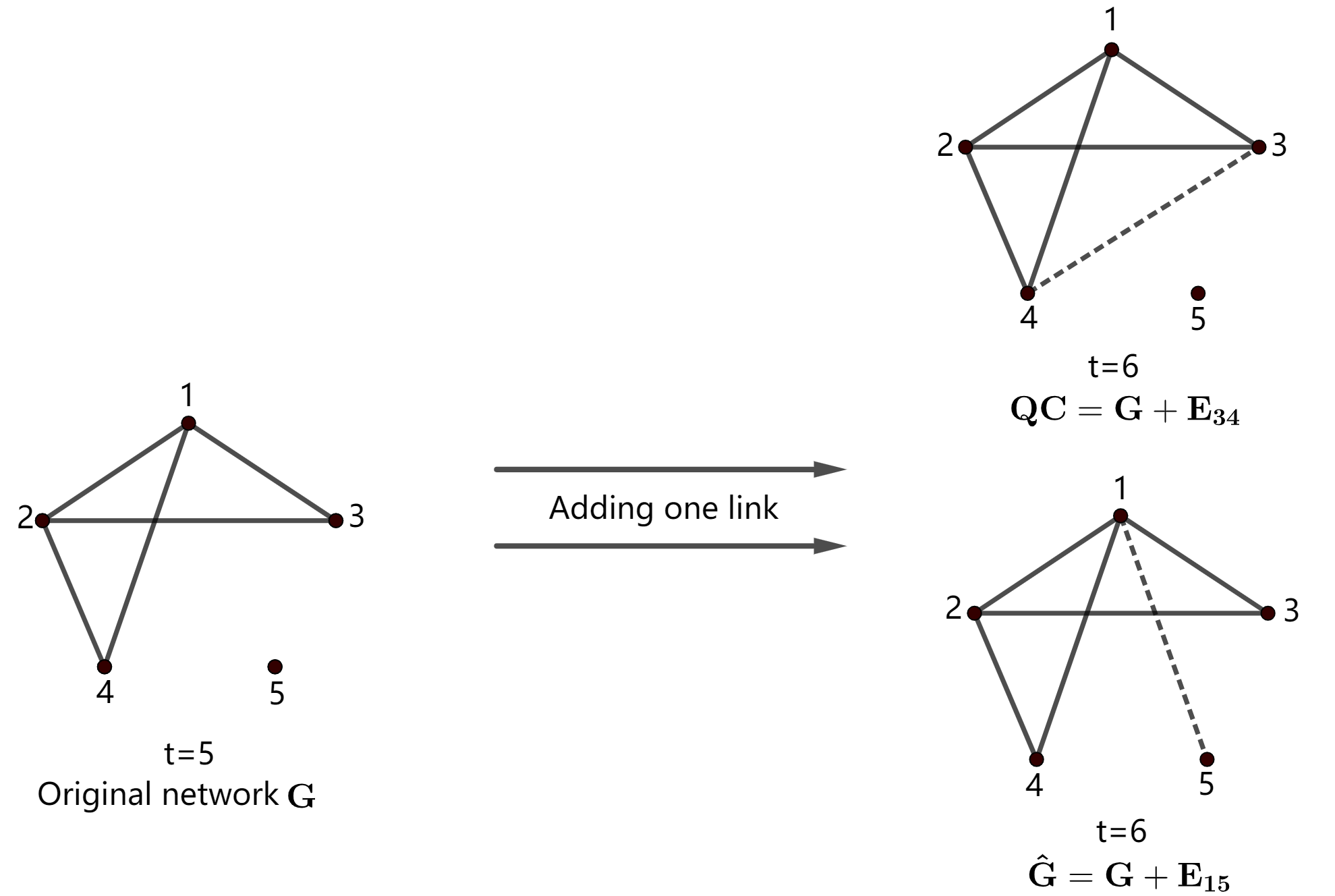}
\end{center}
\caption{Two different NSGs succeeding a quasi-complete graph}
\label{fig:QCproof}
\end{figure}

A key step in the proof is to show that the QC graph dominates the other NSG $\mathbf{\hat{G}}$ in terms of both aggregate walks of any length and the spectral radius.

\begin{lem}
\label{lem:unweighted-QC} 
Suppose $t-1\neq p(p-1)/2$ for some integer $p$. Then $W^{k}(\mathbf{QC})>W^{k}(\mathbf{\hat{G}})$ for any $k\geq 2$ and $\lambda_{\max}(\mathbf{QC})>\lambda_{\max}(\mathbf{\hat{G}})$.
\end{lem}

To prove the first part of Lemma \ref{lem:unweighted-QC}, we partition nodes into two categories based on whether their neighborhoods differ between the two NSGs. In Figure \ref{fig:QCproof}, nodes 1, 3, 4, and 5 belong to the category whose neighborhoods differ between the two NSGs, while node 2 belongs to the category whose neighborhood remains identical across both NSGs. We analyze the combined walk counts of the two categories separately and show that the total number of walks of length $k$ originating from nodes in each category in network $\mathbf{QC}$ exceeds that in $\mathbf{\hat{G}}$. 

The second part, $\lambda_{\max}(\mathbf{QC})>\lambda_{\max}(\mathbf{\hat{G}})$, requires a different technique. From the first part, $W^{k}(\mathbf{QC})>W^{k}(\mathbf{\hat{G}})$ for any $k\geq 2$ implies $\lambda_{\max}(\mathbf{QC})\geq\lambda_{\max}(\mathbf{\hat{G}})$. We then use the eigencentrality equation $\lambda_{\max}(\mathbf{G})\mathbf{z}=\mathbf{Gz}$ to show that $\lambda_{\max}(\mathbf{QC})\neq\lambda_{\max}(\mathbf{\hat{G}})$, where $\mathbf{z}$ is the vector of eigencentralities.

The following result is immediate from Lemma \ref{lem:unweighted-QC}.

\begin{corollary}
Suppose $t-1\neq p(p-1)/2$ for some integer $p$. There exist two NSGs, $\mathbf{G}$ and $\mathbf{G}^{\prime}$ in $\mathcal{G}(t)$, such that for any positive integer $\alpha$, 
\begin{equation*}
b(\alpha, \phi, \mathbf{G}) > b(\alpha, \phi, \mathbf{G}^{\prime}).
\end{equation*}
\end{corollary}

This result strengthens Corollary \ref{cor-TE2016}, and hence the main result of \cite{Belhaj2016}, by showing that within the set of NSGs, certain networks are never optimal for Problem \eqref{program-KB-alpha}. Previous literature on static network design typically stopped at showing that globally efficient networks must belong to the set of NSGs. However, such characterizations have limited ability to distinguish among NSGs. Lemma \ref{lem:unweighted-QC} takes a first step toward further refinement within the class of NSGs, though complete discrimination among all NSGs remains an open question for future research.

Finally, it is worth noting that the greedy strategy's focus on short-term gains may come at the cost of long-term efficiency. By ignoring the potential future benefits of currently sub-optimal networks, the myopic planner may miss opportunities to create more efficient network structures in the long run. The following example demonstrates the difference between networks formed by myopic and farsighted planners.

\begin{ex}
\label{ex:1}
Consider the planner's problem (\ref{design problem}) with 7 nodes and 8 periods, i.e., $n=7$ and $T=8$. By Theorem \ref{thm:unweighted-NSG}, the finally formed network must be one of the four NSGs listed in Figure \ref{fig:NSGs}. Suppose the planner is farsighted and cares about the aggregate square of KB centrality, i.e., $v(\mathbf{s})=b(2, \phi,\mathbf{G}(T))$. Table \ref{tab:Comparedelta} lists $b(2, \phi,\mathbf{G})$ induced by these four NSGs when $\phi=0.01$.

\begin{table}[!ht]
\centering
\resizebox{10cm}{!}{
\begin{tabular}{|l|l|l|l|l|}
\hline
  &$\mathbf{QC}(8)$ & $\mathbf{QS}(8)$&$\mathbf{\hat{G}}(8)$&$\mathbf{\bar{G}}(8)$ \\ \hline
$b(2, \phi,\mathbf{G})$ & 7.3370 & 7.3374* & 7.3368 & 7.3362 \\ \hline
\end{tabular}}
\caption{Comparison among different NSGs}
\label{tab:Comparedelta}
\end{table}

From this table, we observe that the optimal network for a farsighted planner is the quasi-star $\mathbf{QS}(8)$, while for a myopic planner, it is the quasi-complete graph $\mathbf{QC}(8)$, as per Theorem \ref{thm:unweighted-QC}.
\end{ex}

Example \ref{ex:1} highlights two important insights. First, it confirms the established wisdom that greedy algorithms do not necessarily yield globally optimal solutions in network design problems. Second, it demonstrates that the optimal network formation path depends critically on the planner's time preferences, with different discount factors leading to different network structures.

\subsection{Comparative Statics}

We assume the discount factor follows a geometric pattern with factor $\delta>0$. The planner's value function is then given by $v(\mathbf{s}):=\sum_{t=1}^{T}\delta^{t}u(\mathbf{G}(t))$. The planner's optimization problem is: 
\begin{equation}
\max_{\mathbf{s}\in S}v(\mathbf{s}).  \label{eq:CSproblem}
\end{equation}
In this section, we focus on two special forms of instantaneous utility: $u(\mathbf{G})=b(\alpha,\phi,\mathbf{G})$ and $u(\mathbf{G})=c(\beta,\mathbf{G})$ as discussed in Section \ref{sub:examples}. We analyze how optimal network structures vary with these parameters in extreme cases.

\begin{defin}
\label{QS}
A network $\mathbf{G}\in\mathcal{NSG}(t)$ is a quasi-star graph, denoted by $\mathbf{QS}(t)$, if it has a set of $p$ central nodes with $n-1$ links each, and the remaining $t-p(n-1)$ links are allocated to construct another central node.
\end{defin}

\begin{figure}[h]
\centering
\includegraphics[scale=0.6]{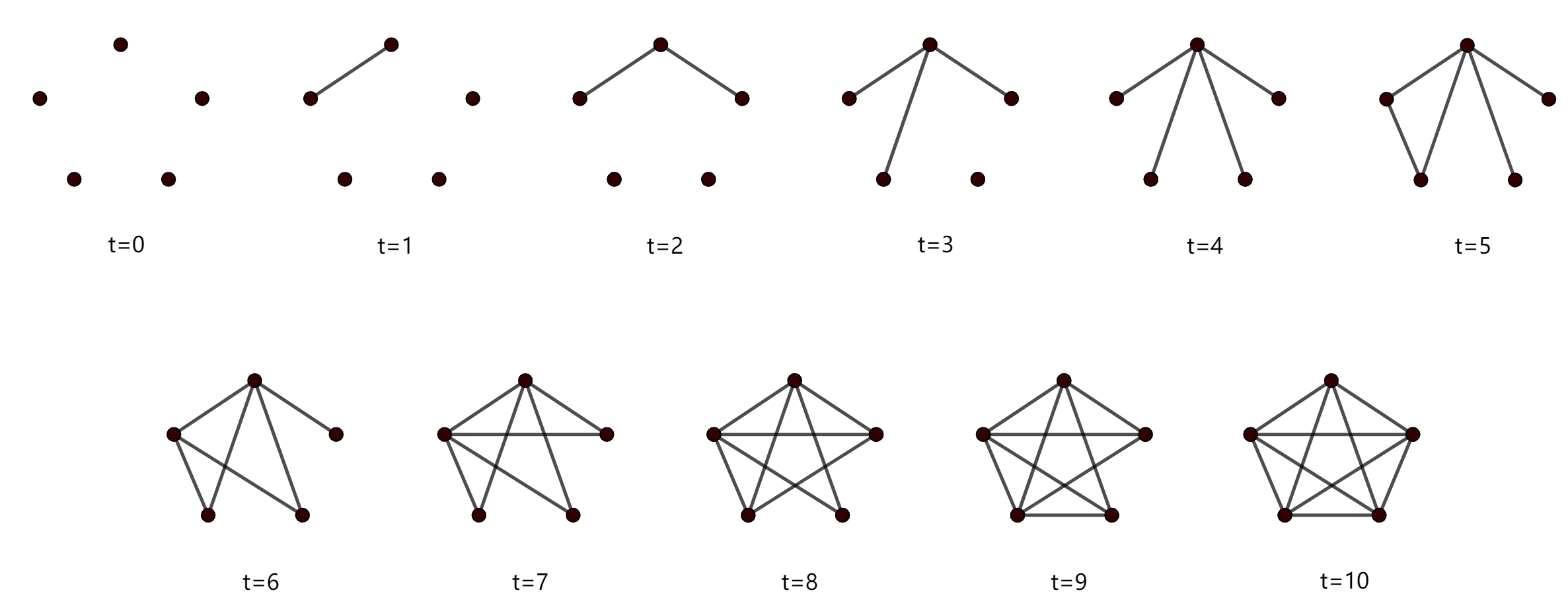}
\caption{Quasi-star networks.}
\label{fig:QS}
\end{figure}

The quasi-star is another prominent subclass of NSGs that maximizes the number of nodes with degree $n-1$. Figure \ref{fig:QS} illustrates quasi-star graphs with 5 nodes and various numbers of links. Notably, $\mathbf{QS}(t)$ is the graph complement of $\mathbf{QC}(t^{\prime})$ where $t^{\prime}=\frac{n(n-1)}{2}-t$. That is, the sum of the two adjacency matrices $\mathbf{QS}(t)+\mathbf{QC}(t^{\prime})$ equals the adjacency matrix of a complete graph after permutation.

\begin{corollary}
\label{cor:comparative} 
Suppose the number of nodes $n\geq 6$. Then the following holds for the solution to problem \eqref{eq:CSproblem}.

\begin{enumerate}
\item Suppose $u(\mathbf{G})=b(\alpha,\phi,\mathbf{G})$.

\begin{enumerate}
\item There exist $\bar{\delta}, \underline{\phi}>0$ such that, for all $\delta\geq\bar{\delta}$ and $\phi\leq\underline{\phi}$, the optimal solution at period $T$, $\mathbf{G}^{\ast}(T)$, is a quasi-star network when $3<T<\frac{n^{2}-3n}{4}$ and is a quasi-complete network when $\frac{n^{2}+n}{4}<T\leq\frac{n(n-1)}{2}$.

\item There exists $\underline{\delta}>0$ such that, for all $\delta\leq\underline{\delta}$, the optimal solution at $t$, $\mathbf{G}^{\ast}(t)$, is a quasi-complete network for any value of $\phi$ and $t$.
\end{enumerate}

\item Suppose $u(\mathbf{G})=c(\beta,\mathbf{G})$.

\begin{enumerate}
\item There exist $\bar{\delta}, \underline{\beta}>0$ such that, for all $\delta\geq\bar{\delta}$ and $\beta\leq\underline{\beta}$, the optimal solution at period $T$, $\mathbf{G}^{\ast}(T)$, is a quasi-star network when $3<T<\frac{n^{2}-3n}{4}$ and is a quasi-complete network when $\frac{n^{2}+n}{4}<T\leq\frac{n(n-1)}{2}$.

\item There exists $\bar{\beta}>0$ such that, for all $\beta\geq\bar{\beta}$, the solution to problem \eqref{eq:CSproblem} is also a solution to the dynamic maximal spectral radius problem $\max_{\mathbf{s}\in S}v(\mathbf{s})=\sum_{t=1}^{T}\delta^{t}\lambda_{\max}(\mathbf{G}(t))$.

\item There exists $\underline{\delta}>0$ such that, for all $\delta\leq\underline{\delta}$, the optimal solution at $t$, $\mathbf{G}^{\ast}(t)$, is a quasi-complete network for any value of $\beta$ and $t$.
\end{enumerate}
\end{enumerate}
\end{corollary}

This corollary demonstrates how the optimal network structure depends critically on the planner's time preference $\delta$, the specific form of instantaneous utility (through parameters $\phi$ and $\beta$), and the total number of formation periods $T$.

When the planner is far-sighted ($\delta\rightarrow+\infty$) and the parameters $\phi$ (when $u(\mathbf{G})=b(\alpha,\phi,\mathbf{G})$) or $\beta$ (when $u(\mathbf{G})=c(\beta,\mathbf{G})$) are sufficiently small, the optimal network structure transitions from quasi-star to quasi-complete as $T$ increases, as shown in the first parts of Corollary \ref{cor:comparative} for both utility forms. This transition occurs because, for a fixed total number of links $K$, as $\phi\rightarrow 0^{+}$, we have 
\begin{equation*}
\arg\max_{\mathbf{G}\in\mathcal{G}(K)}b(\alpha,\phi,\mathbf{G})=\arg\max_{\mathbf{G}\in\mathcal{G}(K)}c(\phi,\mathbf{G})=\arg\max_{\mathbf{G}\in\mathcal{G}(K)}\mathbf{1}^{\prime}\mathbf{G}^{2}\mathbf{1}
\end{equation*}
by equations \eqref{eq:KB&walks} and \eqref{eq:C&walks}. The optimization problem thus reduces to identifying the graph with the largest sum of squared degrees, and our results follow directly from \cite{Abrego2009}.

The second part of the corollary for $u(\mathbf{G})=c(\beta,\mathbf{G})$ addresses the extreme case where $\beta$ is sufficiently large. In this case, networks maximizing $c(\beta,\mathbf{G})$ coincide with spectral radius maximizers: 
\begin{equation*}
\arg\max_{\mathbf{G}\in\mathcal{G}(K)}\lim_{\beta\rightarrow+\infty}c(\beta,\mathbf{G})=\arg\max_{\mathbf{G}\in\mathcal{G}(K)}\lambda_{\max}(\mathbf{G}).
\end{equation*}
Consequently, the planner's problem reduces to dynamically maximizing the spectral radius.

Finally, the last parts of the corollary for both instantaneous utility forms follow directly from Theorem \ref{thm:unweighted-QC}, which establishes that the myopic optimum ($\delta\rightarrow 0^+$) is always a quasi-complete network at each period, provided the instantaneous utility satisfies Assumption \ref{ass:preference}.

\section{Weighted Networks}

\label{sec:weighted}

In this section, we extend our main results to settings that involve weighted networks. The distinction between weighted and unweighted networks rests on whether link quality or capacity constitutes the primary decision variable. Weighted networks are particularly relevant when planners can determine not only which nodes to connect but also the quality or strength of each connection. For example, in highway infrastructure, planners may decide both which highways to construct and how many lanes each should contain. In power grid design, decisions involve not only which transmission lines to install but also their voltage or transmission capacity.

For simplicity, we assume that the weight between any pair of nodes ranges from zero to one. Instead of adding a single discrete link in each period, the planner now allocates one unit of total link weight across multiple connections. Note that the planner’s choice set is therefore continuous (and hence infinite).

Previously, Lemma \ref{lem:wel-increase} played a crucial role in
demonstrating the optimality of NSGs. However, its counterpart for weighted
networks does not hold, as illustrated in Example \ref{ex:weighted} below.
Consequently, the optimal networks may not be (weighted) NSGs if we impose
only Assumption \ref{ass:preference} on the planner's instantaneous utility.

\begin{ex}
\label{ex:weighted} 
\begin{figure}[h]
\centering
\par
\includegraphics[scale=1]{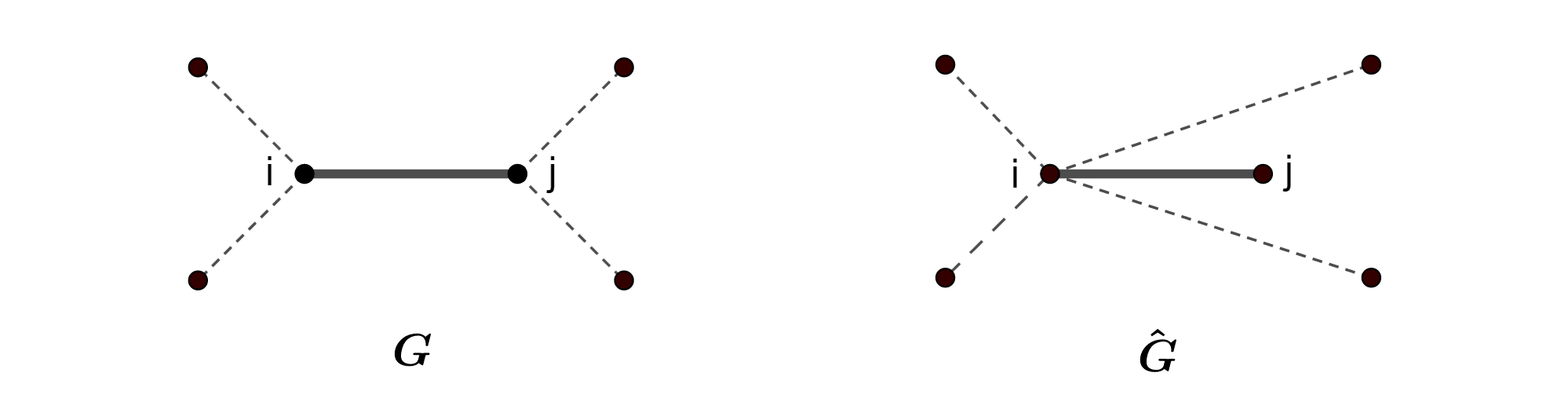}
\caption{A weight reallocation in weighted network}
\label{fig:appex}
\end{figure}
Consider the two undirected-weighted networks illustrated in Figure \ref%
{fig:appex}. The bold line and dashed line represent weights 1 and 0.1,
respectively. Note that network $\mathbf{\hat{G}}$ is obtained from $\mathbf{%
G}$ by shifting all of $j$'s neighbors to $i$, the same operation as in
Lemma \ref{lem:wel-increase}. Table \ref{tab:Compareww} compares the
aggregate (weighted) walks of each network with different lengths. 
\begin{table}[!ht]
\centering
\resizebox{6cm}{!}{
\begin{tabular}{|l|l|l|l|}
\hline
 &$k=2$ & $k=3$&$k=4$ \\ \hline
$W^k(\mathbf{G})$ & 2.92 & 2.976* & 3.0344 \\ \hline
$W^k(\hat{\mathbf{G}})$ & 3* & 2.912 & 3.12* \\ \hline
\end{tabular}}
\caption{Comparison of total walks}
\label{tab:Compareww}
\end{table}

Table \ref{tab:Compareww} illustrates that a weight reallocation from $j$ to 
$i$, such that $i$ weight-dominates $j$, may not increase the total number
of walks of a certain length: $W^k({\mathbf{G}}) > W^k(\hat{\mathbf{G}})$
when $k = 3$. This point highlights the failure of Lemma \ref%
{lem:wel-increase} in the context of weighted networks.

However, despite this inconsistency with walk counts, Table \ref%
{tab:CompareKB} illustrates that this neighborhood reallocation improves
both the aggregate KB centrality and its square: $b(1,\phi ,\mathbf{\hat{G}}%
)>b(1,\phi ,\mathbf{G})$ and $b(2,\phi ,\mathbf{\hat{G}})>b(2,\phi ,\mathbf{G%
})$ for the tested values of $\phi $.

\begin{table}[!ht]
\centering
\resizebox{10cm}{!}{
\begin{tabular}{|l|l|l|l|l|l|}
\hline
 &$\phi=0.1$ & $\phi=0.2$ &$\phi=0.3$ &$\phi=0.4$ &$\phi=0.5$ \\ \hline
 $b(1, \phi,\mathbf{G})$& 6.3125 & 6.7067 &7.2186&7.9088&8.8889 \\ \hline
  $b(1, \phi,\mathbf{\hat{G}})$& 6.3133* & 6.7095* &7.2246*&7.9194*&8.9054* \\ \hline
   $b(2, \phi,\mathbf{G})$& 6.6612 & 7.5977 &8.9825&11.1497&14.8148 \\ \hline
   
   $b(2, \phi,\mathbf{\hat{G}})$& 6.6634* & 7.6058* &9.0001*&11.1809*&14.8660* \\ \hline
\end{tabular}}
\caption{Comparison of total KB centrality}
\label{tab:CompareKB}
\end{table}
\end{ex}

In light of this example, we restrict instantaneous utility to either the
sum of KB centralities $b(1, \phi, \mathbf{G}(t))$ or the sum of the squares
of KB centralities $b(2, \phi, \mathbf{G}(t))$. This restriction is
justified on two grounds. First, as per equation (\ref{eq:KB&walks}), both $%
b\left(1, \phi, \mathbf{G}(t)\right)$ and $b\left(2, \phi, \mathbf{G}%
(t)\right)$ are weighted sums of aggregate walks of various lengths, which
aligns with the spirit of Assumption \ref{ass:preference}. Second, these
centrality measures have a strong connection to the network game literature.
When the network game is the classical linear-quadratic one introduced by
the seminal paper \cite{Ballester2006}, $b(1, \phi, \mathbf{G}(t))$ and $%
b(2, \phi, \mathbf{G}(t))$ represent the aggregate effort and utilitarian
welfare in equilibrium, respectively.

Throughout this section, the parameter $\phi$ is treated as a fixed constant
and is therefore omitted in the following for notational convenience.

Before proceeding with our analysis, we formally define the sequence of
successive weighted networks. Let $\mathcal{G}:=\{\mathbf{G}:g_{ij} =
g_{ji}\in [0,1], g_{ii} = 0, \forall i,j\in N\}$ represent the set of all
feasible weighted networks.

\begin{defin}
\label{def:weighted-feasibility} For any weighted network $\mathbf{G} \in 
\mathcal{G}$, denote 
\begin{equation*}
\mathbb{S}_{w}(\mathbf{G}):=\{\mathbf{\hat{G}} \in \mathcal{G}: \exists 
\mathbf{W}\geq \mathbf{0},\ \text{s.t.}\ \mathbf{W} = \mathbf{W}^\prime, 
\mathbf{1}^\prime \mathbf{W}\mathbf{1} = 2, \mathbf{\hat{G}} = \mathbf{G} + 
\mathbf{W}\},
\end{equation*}
the set of networks succeeding $\mathbf{G}$. Denote ${S}_{w}:=\{\mathbf{s}=(%
\mathbf{G}(t))^{T}_{t=1}|\mathbf{G}(t)\in \mathbb{S}_w(\mathbf{G}%
(t-1)),\forall t=1,\ldots, T\}$ the set of feasible network formation paths.%
\footnote{$\mathbf{G}(0)$ is the empty network.}
\end{defin}

We use the subscript "w" to distinguish the weighted network cases from
unweighted ones. Note that unweighted networks are just a special case of
weighted networks; therefore $S\subset S_{w}$. The flexibility in forming
weighted networks is expected to weakly improve the planner's utilities.

\subsection{Maximizing Aggregate Katz-Bonacich Centrality}

The problem of sequentially allocating a unit weight to maximize discounted
sum of KB centrality can be formulated as follows, 
\begin{equation}
\underset{\mathbf{s}\in S_{w}}{\max }\underset{t=1}{\overset{T}{\sum }}%
D\left(t\right)\cdot b\left( 1,\phi ,\mathbf{G}\left( t\right) \right) \text{%
.}  \label{design problem weighted 1}
\end{equation}
The main results, Theorems \ref{thm:unweighted-NSG} and \ref%
{thm:unweighted-QC}, can then be extended to weighted network design.

\begin{pro}
\label{pro:weighted-KB2-KB} If $\mathbf{s}_{w}^{\ast }=\left( \mathbf{G}%
^{\ast }\left(t\right) \right) _{t=1}^{T}$ is a solution to Problem %
\eqref{design problem weighted 1}, then $\mathbf{G}^{\ast }\left( t\right)$
is an (\textbf{unweighted)} NSG whenever $D\left( t\right) >0$. Furthermore,
when the planner is myopic, $\mathbf{G}^{\ast }\left( t\right) $ is (\textbf{%
unweighted)} quasi-complete.
\end{pro}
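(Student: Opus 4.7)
The plan is to combine three ingredients: convexity of $b(\phi,\cdot)$ as a function of the underlying (possibly weighted) network (established in the companion paper \cite{Sun2023}), the polytopal structure of the feasible set $S_w$ whose extreme points coincide with the unweighted paths in $S$, and the conclusions of Theorems~\ref{thm:unweighted-NSG} and~\ref{thm:unweighted-QC} applied to those extreme points. Setting $\mathbf{W}(t):=\mathbf{G}(t)-\mathbf{G}(t-1)$, the feasible set $S_w$ is a convex polytope carved out by $\mathbf{W}(t)\geq\mathbf{0}$, symmetry, a per-period unit sum $\sum_{i<j}w_{ij}(t)=1$, and a per-pair cumulative cap $\sum_{t}w_{ij}(t)\leq 1$. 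Writing $v(\mathbf{s})=\sum_{t}D(t)\,b\bigl(\phi,\sum_{r\leq t}\mathbf{W}(r)\bigr)$, I note that $v$ is convex in $(\mathbf{W}(t))_{t}$ because $b(\phi,\cdot)$ is convex in $\mathbf{G}$, composition with the affine cumulative map preserves convexity, and $D(t)\geq 0$ preserves convexity under summation.

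First, I would show that the extreme points of $S_w$ are exactly the integer-valued paths, i.e., paths in which each $\mathbf{W}(t)=\mathbf{E}_{i(t),j(t)}$ for a single fresh pair $(i(t),j(t))$. The direction ``integer paths are extreme'' is a direct vertex count in the polytope; the converse is a transportation-style cycle-rerouting argument showing that any fractional candidate can be decomposed as a nontrivial convex combination of two feasible paths. Because a convex function on a convex compact set attains its maximum at an extreme point, there exists an optimizer $\mathbf{s}^{\ast}\in S$ consisting of unweighted networks. Applying Theorem~\ref{thm:unweighted-NSG} to this optimizer yields that $\mathbf{G}^{\ast}(t)$ is an NSG whenever $D(t)>0$; under myopia, Theorem~\ref{thm:unweighted-QC} further forces $\mathbf{G}^{\ast}(t)=\mathbf{QC}(t)$.

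To upgrade existence to the stronger claim that \emph{every} optimizer $\mathbf{G}^{\ast}(t)$ with $D(t)>0$ is unweighted, I would exploit strict convexity of $b(\phi,\cdot)$ in each individual coordinate $g_{ij}$. Expanding $b(\phi,\mathbf{G})=\sum_{k\geq 0}\phi^{k}\mathbf{1}'\mathbf{G}^{k}\mathbf{1}$, the $k=2$ term contributes a strictly positive second derivative with respect to $g_{ij}$, so along the segment joining $g_{ij}=0$ and $g_{ij}=1$ (other entries fixed) the function is strictly convex for any $\phi>0$. If some $\mathbf{G}^{\ast}(t^{\ast})$ had a fractional entry $g_{ij}(t^{\ast})\in(0,1)$ at a period with $D(t^{\ast})>0$, writing $\mathbf{G}^{\ast}(t^{\ast})$ as a convex combination of the two $g_{ij}$-roundings and perturbing the path to shift weight onto one rounded alternative (with compensating reallocation at $t^{\ast}$ and later periods to preserve the per-period unit sum and the cumulative cap) would yield a strict improvement in $v$, contradicting optimality.

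The main obstacle is constructing this path-level perturbation cleanly: the compensations forced in $\mathbf{W}(t^{\ast})$ and in later $\mathbf{W}(t)$ can in principle cancel the gain coming from the single-coordinate strict convexity, so one must choose the cycle of compensations so that the strict inequality at the $(i,j)$ coordinate survives at time $t^{\ast}$. I expect a cycle-rerouting argument parallel to the one used in the extreme-point classification to resolve this. The extreme-point characterization itself is the other non-routine step because the defining equalities and inequalities become redundant when $T$ is close to $n(n-1)/2$, but this is handled by the same cycle argument.
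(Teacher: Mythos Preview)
Your approach coincides with the paper's: establish that $S_w$ is convex with extreme points precisely $S$ (the paper does this as Claim~\ref{claim:convex}, via an explicit two-sided perturbation of any fractional path that is exactly your ``cycle-rerouting''), invoke Lemma~\ref{lem:convexity} to conclude that the convex objective $v$ attains its maximum over $S_w$ on $S$, and then apply Theorems~\ref{thm:unweighted-NSG} and~\ref{thm:unweighted-QC}. Your third and fourth paragraphs---using coordinate-wise strict convexity to upgrade from ``some optimizer lies in $S$'' to ``every optimizer is unweighted at periods with $D(t)>0$''---go beyond what the paper's proof actually carries out; the appendix stops at the existence statement (``it is without loss of optimality to restrict to $S$'') and does not address the strictness issue you correctly flag.
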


Proposition \ref{pro:weighted-KB2-KB} implies that the optimal path of
network formation results in an unweighted network at each period.
Therefore, the flexibility of forming weighted networks does not provide any
additional improvement to the planner, given their objective to maximize the
discounted sum of aggregate KB centralities.

The proof of Proposition \ref{pro:weighted-KB2-KB} relies on the following
lemma from \cite{Sun2023}, which demonstrates the convexity of aggregate KB
centrality with respect to the network structure.

\begin{lem}[Lemma A.2 in \citealt{Sun2023}]
\label{lem:convexity} Let $\mathcal{O}$ denote the set of $n\times n$
symmetric positive-definite matrices. Then, the function $V\left( \mathbf{A}%
\right) =\mathbf{1}^{\prime }\mathbf{A}^{-1}\mathbf{1}$ is convex in $%
\mathbf{A}\in \mathcal{O}$.
\end{lem}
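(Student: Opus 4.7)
The plan is to represent $V(\mathbf{A})=\mathbf{1}'\mathbf{A}^{-1}\mathbf{1}$ as a pointwise supremum of affine functions of $\mathbf{A}$ and then invoke the classical fact that any such supremum is convex. Concretely, I would begin by establishing the Legendre-style identity
\begin{equation*}
\mathbf{1}'\mathbf{A}^{-1}\mathbf{1}\;=\;\sup_{\mathbf{x}\in\mathbb{R}^{n}}\bigl(2\mathbf{x}'\mathbf{1}-\mathbf{x}'\mathbf{A}\mathbf{x}\bigr),\qquad \mathbf{A}\in\mathcal{O}.
\end{equation*}
Since $\mathbf{A}$ is positive definite, the quadratic $\mathbf{x}\mapsto 2\mathbf{x}'\mathbf{1}-\mathbf{x}'\mathbf{A}\mathbf{x}$ is strictly concave with gradient $2\mathbf{1}-2\mathbf{A}\mathbf{x}$; setting it to zero yields the unique maximizer $\mathbf{x}^{\ast}=\mathbf{A}^{-1}\mathbf{1}$, and substituting back gives value $\mathbf{1}'\mathbf{A}^{-1}\mathbf{1}$, confirming the identity.

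With the identity in hand, convexity is immediate: for each fixed $\mathbf{x}$, the map $\mathbf{A}\mapsto 2\mathbf{x}'\mathbf{1}-\mathbf{x}'\mathbf{A}\mathbf{x}$ is affine in the entries of $\mathbf{A}$, and $V$ is the pointwise supremum of this family of affine functions on $\mathcal{O}$, hence convex. Because this representation is intrinsic and does not require the segment $(1-\lambda)\mathbf{A}+\lambda\mathbf{B}$ to leave $\mathcal{O}$, the argument behaves correctly on the open convex cone $\mathcal{O}$.

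As a self-contained cross-check, I would verify convexity via a direct Hessian computation. For any symmetric direction $\mathbf{H}$, the standard matrix-calculus identity gives $\tfrac{d^{2}}{dt^{2}}\big|_{t=0}(\mathbf{A}+t\mathbf{H})^{-1}=2\mathbf{A}^{-1}\mathbf{H}\mathbf{A}^{-1}\mathbf{H}\mathbf{A}^{-1}$. Setting $\mathbf{y}:=\mathbf{A}^{-1}\mathbf{1}$ and $\mathbf{z}:=\mathbf{H}\mathbf{y}$,
\begin{equation*}
\frac{d^{2}}{dt^{2}}\bigg|_{t=0}V(\mathbf{A}+t\mathbf{H})\;=\;2\mathbf{1}'\mathbf{A}^{-1}\mathbf{H}\mathbf{A}^{-1}\mathbf{H}\mathbf{A}^{-1}\mathbf{1}\;=\;2\mathbf{z}'\mathbf{A}^{-1}\mathbf{z}\;\geq\;0,
\end{equation*}
since $\mathbf{A}^{-1}$ is positive definite. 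So $V$ is convex along every line segment inside $\mathcal{O}$, which is equivalent to convexity on the convex set $\mathcal{O}$.

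I do not anticipate a real obstacle; the proof is essentially a one-line consequence of either the variational formula or the Hessian computation. The main "thought cost" is spotting the variational identity in the first place, and the only minor care point is that $\mathcal{O}$ is an open cone rather than a vector space, which is handled automatically because both arguments check convexity along segments that remain inside $\mathcal{O}$.
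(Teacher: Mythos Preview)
Your proposal is correct. Both the variational representation $\mathbf{1}'\mathbf{A}^{-1}\mathbf{1}=\sup_{\mathbf{x}}(2\mathbf{x}'\mathbf{1}-\mathbf{x}'\mathbf{A}\mathbf{x})$ and the second-derivative check are standard and fully rigorous on the open convex cone $\mathcal{O}$.

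As for comparison: the paper does not actually prove this lemma in the text; it invokes it as Lemma~A.2 of the companion paper \cite{Sun2023} and only uses the conclusion (in the proof of Proposition~\ref{pro:weighted-KB2-KB}). So there is no in-paper argument to compare against. Your variational-formula route is the cleanest self-contained justification, and the Hessian computation is a fine redundancy check; either one would serve as a complete replacement for the external citation.
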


Lemma \ref{lem:convexity} implies that the planner's utility, a weighted sum
of instantaneous utilities, is convex in the paths of network formation. The
remainder of the proof of Proposition \ref{pro:weighted-KB2-KB} is to show
that the set $S_w$ of feasible formation paths of weighted networks is a
convex set, and the set $S$ of feasible formation paths of unweighted
networks constitutes the extreme points of $S_w$.

\subsection{Maximizing  Aggregate Square of Katz-Bonacich Centrality}

\label{subsec:square-kb}

The problem of sequentially allocating unit weights to maximize the
discounted sum of squared KB centrality can be formulated as: 
\begin{equation}
\underset{\mathbf{s}\in S_{w}}{\max } \underset{t=1}{\overset{T}{\sum }}%
D\left(t\right)\cdot b\left(2, \mathbf{G}\left( t\right) \right)
\label{design problem weighted 2}
\end{equation}

Before presenting the main result of this subsection, we extend the
definition of NSG to weighted networks.

\begin{defin}
\label{def-weightedNSG} A weighted undirected network $\mathbf{G}$ is a
weighted nested split graph if for any two distinct nodes $i,j$, either $%
g_{ik}\geq g_{jk}$ $\forall k\notin \left\{ i,j\right\} $ or the converse.
\end{defin}

It can be easily verified that an (unweighted) NSG (as defined in Definition %
\ref{def-NSG}) satisfies the definition above.\footnote{%
In \cite{LI2023}, the concept of a generalized NSG is proposed for weighted
and directed networks, following the same spirit as Definition \ref%
{def-weightedNSG}.} The main result of this subsection is as follows:

\begin{pro}
\label{pro:weighted-KB2} The following holds,

\begin{enumerate}
\item[(i)] For any solution $\mathbf{s}_{w}^{\ast }=\left( \mathbf{G}^{\ast
}\left( t\right)\right) _{t=1}^{T}$ of Problem \ref{design problem weighted
2}, $\mathbf{G}^{\ast }\left( t\right) $ is a weighted NSG whenever $%
D\left(t\right)>0$. Moreover, for any node $i$, there is no two distinct
agents $j,k$ such that both $g_{ij}^{\ast}\left( t\right) $ and $%
g_{ik}^{\ast }\left( t\right) $ belong to $\left(0,1\right)$.

\item[(ii)] When the planner is myopic, $\mathbf{G}^{\ast }\left( t\right) $
is (\textbf{unweighted)} quasi-complete.
\end{enumerate}
\end{pro}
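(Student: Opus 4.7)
The plan is to mirror the two-step architecture of Theorem~\ref{thm:unweighted-NSG} followed by Theorem~\ref{thm:unweighted-QC}, carried out in the weighted setting. Part~(i) will be proved by a path-perturbation argument extending Algorithm~\ref{alg:dynamic-reallocation} to continuous weight, together with a separate local-convexity argument for the ``moreover'' clause. Part~(ii) will then follow by induction on $t$: part~(i) restricts the period-$t$ subproblem to a one-dimensional convex segment of weighted NSGs, convexity pushes the optimum to an endpoint, and Lemma~\ref{lem:unweighted-QC} identifies the winning endpoint.

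For the weighted-NSG clause of part~(i), suppose for contradiction that $\mathbf{G}^{\ast}(t^{*})$ fails weighted nestedness at some $t^{*}$ with $D(t^{*})>0$; then there exist nodes $i,j$ and third parties $k,l$ with $g^{\ast}_{ik}(t^{*})>g^{\ast}_{jk}(t^{*})$ and $g^{\ast}_{jl}(t^{*})>g^{\ast}_{il}(t^{*})$. I would invoke the weight-reallocation lemma from the companion paper~\cite{Sun2023}, under which transferring weight from $j$'s advantage edges to $i$'s strictly increases $b^{[2]}$, and then propagate this improvement forward in the spirit of Algorithm~\ref{alg:dynamic-reallocation}: at each subsequent period, whenever the original path adds weight $\delta$ to an edge $(j,m)$ with $m\notin\{i,j\}$, the perturbed path redirects as much of $\delta$ as possible to $(i,m)$, respecting the ceiling $\hat{g}_{im}(t)\leq 1$ and returning the residual (if any) to $(j,m)$. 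Applying the reallocation lemma period by period delivers a feasible path $\hat{\mathbf{s}}$ with $b^{[2]}(\hat{\mathbf{G}}(t))\geq b^{[2]}(\mathbf{G}^{\ast}(t))$ and strict inequality at $t=t^{*}$, contradicting optimality. For the ``moreover'' clause, assume $g^{\ast}_{ij}(t^{*}),g^{\ast}_{ik}(t^{*})\in(0,1)$ for some $i$ and distinct $j,k$. A two-period swap---adding $\varepsilon(\mathbf{E}_{ij}-\mathbf{E}_{ik})$ at period $t^{*}$ and subtracting it at an appropriately chosen later period $t^{**}$ where the increment permits---alters only the networks $\mathbf{G}^{\ast}(t)$ for $t^{*}\leq t<t^{**}$, and the aggregate payoff change is convex in $\varepsilon$ by the same template as Lemma~\ref{lem:convexity}. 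Strict convexity of the quadratic form defining $b^{[2]}$, together with the interior assumption on the two edge weights, implies that one sign of $\varepsilon$ strictly improves the payoff, again contradicting optimality.

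For part~(ii) I would induct on $t$. The base case $t=1$ follows directly: by the ``moreover'' clause, any spread of the initial unit of weight across more than one edge is strictly dominated by concentrating it on a single edge, which is $\mathbf{QC}(1)$. For the inductive step, assume $\mathbf{G}^{\ast}(t-1)=\mathbf{QC}(t-1)$ is unweighted. Myopia reduces the period-$t$ problem to $\max\{b^{[2]}(\mathbf{G}):\mathbf{G}\in\mathbb{S}_{w}(\mathbf{QC}(t-1))\}$; part~(i) restricts the maximizer to a weighted NSG with at most one fractional edge per node. Direct inspection of $\mathbb{S}_{w}(\mathbf{QC}(t-1))$ shows that, up to permutation, such configurations constitute the one-dimensional segment of convex combinations of the at most two unweighted NSGs succeeding $\mathbf{QC}(t-1)$ displayed in Figure~\ref{fig:QCproof}, namely $\mathbf{QC}(t)$ and the other NSG $\hat{\mathbf{G}}$. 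Convexity of $b^{[2]}$ places the maximum at one of the two endpoints, and Lemma~\ref{lem:unweighted-QC} selects $\mathbf{QC}(t)$, closing the induction.

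The main obstacle is the continuous analogue of Algorithm~\ref{alg:dynamic-reallocation} in the weighted-NSG clause. In the unweighted case every period adds a single indivisible link, so the ``reroute'' rule is discrete; in the weighted case, the infinitesimal weight added to a $j$-incident edge must be split between $(i,m)$ and $(j,m)$ whenever the $i$-channel approaches its cap of $1$, and verifying that this continuous split still yields weakly higher walks of every length $k\geq 2$ at every period (with strict inequality at $t^{*}$) requires an integrated, period-by-period application of the companion paper's reallocation lemma together with careful bookkeeping of the signed difference $\hat{\mathbf{G}}(t)-\mathbf{G}^{\ast}(t)$. Ensuring feasibility of the compensating swap in the ``moreover'' clause, in the presence of the monotonicity constraints on edge weights across periods, is the subsidiary technical point.
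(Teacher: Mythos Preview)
Your dynamic path-perturbation argument for the weighted-NSG clause of part~(i) is essentially the paper's: redirect each period's $(j,m)$-weight to $(i,m)$ up to the cap, spill the residual back to $(j,m)$, and invoke Lemma~\ref{lem:sun2023} period by period. That part is fine.

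The gap is your repeated appeal to convexity of $b^{[2]}$, in both the ``moreover'' clause of~(i) and the endpoint argument of~(ii). The paper explicitly notes (and it is the reason a separate treatment is needed here versus Proposition~\ref{pro:weighted-KB2-KB}) that $b^{[2]}(\mathbf{G})=\mathbf{1}'(\mathbf{I}-\phi\mathbf{G})^{-2}\mathbf{1}$ is \emph{not} convex in $\mathbf{G}$ in general; Lemma~\ref{lem:convexity} concerns $b$, not $b^{[2]}$, and your ``quadratic form'' description of $b^{[2]}$ is quadratic in the centralities $b_i$, which are themselves nonlinear in $\mathbf{G}$. So neither your two-period $\varepsilon$-swap for the ``moreover'' clause nor your endpoint selection in~(ii) is justified as written. (The feasibility worry you flag for the swap is secondary; the convexity premise is the real failure.)

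The paper handles both steps without convexity. For the ``moreover'' clause, it reapplies Lemma~\ref{lem:sun2023}, but now to the pair $(j,k)$ of fractional neighbors of $i$: in a weighted NSG one of $j,k$ weight-dominates the other, say $j$, so shifting weight from edge $(i,k)$ to edge $(i,j)$ is a reallocation from node $k$ to node $j$ that preserves $\hat g_{jl}\geq \hat g_{kl}$ for all $l$, and Lemma~\ref{lem:sun2023} then gives a strict increase in $b^{[2]}$---a discrete improvement, no local analysis needed. For part~(ii), the paper bypasses convexity by proving the much stronger statement that $\mathbf{QC}$ dominates \emph{every} point of the segment $\{\alpha\,\mathbf{QC}+(1-\alpha)\hat{\mathbf{G}}:\alpha\in[0,1]\}$ in $\mathbf{1}'(\cdot)^k\mathbf{1}$ for all $k\geq 2$ (Lemma~\ref{lem:weighted-KB2}(ii)). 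This requires a direct inductive walk-count comparison against the whole $\alpha$-family---substantially more work than Lemma~\ref{lem:unweighted-QC}, which only handles the endpoint $\alpha=0$---and is where the real content of part~(ii) lies.
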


The first part of the proposition argues that, if the planner cares about
the instantaneous utility generated at period $t$, the network formed at
that period must be a weighted NSG. Furthermore, it demonstrates that no
node can maintain two distinct links with weights strictly between 0 and 1.
The second part shows that, if the planner is myopic, the network formed at
each period under the optimal path should be an (unweighted) QC graph.

\bigskip \noindent\textbf{The Proof Sketch of Proposition \ref%
{pro:weighted-KB2} (i)}. In proving the first part of Proposition \ref%
{pro:weighted-KB2}, we note that the sum of squared KB centralities is not
necessarily convex in the underlying network, making the techniques from the
previous subsection inapplicable. Instead, we have to invoke the following
lemma from \cite{Sun2023}, which can be viewed as an extension of Lemma \ref%
{lem:wel-increase} to weighted networks.

\begin{lem}[Proposition 4 in \citealt{Sun2023}]
\label{lem:sun2023} Consider two nodes $i,j$ in a weighted network $\mathbf{G%
}$ such that $b_{i}\left(1, \mathbf{G}\right) >b_{j}\left(1, \mathbf{G}%
\right) $. Suppose a weight reallocation from $j$ to $i$ is such that in the
post-reallocation network $\mathbf{\hat{G}}$, $\hat{g}_{ik}\geq \hat{g}_{jk}$
for any $k\notin \left\{ i,j\right\} $. Then, $b(2, \mathbf{\hat{G}})
>b\left(2, \mathbf{G}\right) $. Moreover, if $b_{i}\left(1, \mathbf{G}%
\right)<b_{j}\left(1, \mathbf{G}\right) $, a weight reallocation from $j$ to 
$i$ such that $\hat{g}_{ik}\geq g_{jk}$ for any $k\notin \left\{ i,j\right\} 
$ leads to $b(2, \mathbf{\hat{G}}) >b\left(2, \mathbf{G}\right) $.
\end{lem}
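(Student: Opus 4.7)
The plan is to prove Case~1 (where $b_i > b_j$) directly via a closed-form manipulation of the increment in $b^{[2]}$, and then derive Case~2 (where $b_i < b_j$) from Case~1 by a label-swap reduction that exploits the stronger post-condition $\hat{g}_{ik}\geq g_{jk}$.

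For Case~1, I set $\delta_k := \hat{g}_{ik}-g_{ik} = g_{jk}-\hat{g}_{jk}\geq 0$ for $k\notin\{i,j\}$, so that the perturbation matrix is $\mathbf{W}:=\hat{\mathbf{G}}-\mathbf{G}=\sum_k \delta_k(\mathbf{E}_{ik}-\mathbf{E}_{jk})$. With $\hat{\mathbf{M}}:=(\mathbf{I}-\phi\hat{\mathbf{G}})^{-1}$, the resolvent identity gives $\hat{\mathbf{b}}-\mathbf{b}=\phi\hat{\mathbf{M}}\mathbf{W}\mathbf{b}$, hence, writing $\tilde{\mathbf{p}}:=\hat{\mathbf{M}}(\hat{\mathbf{b}}+\mathbf{b})$ and using symmetry of $\mathbf{W}$ and $\hat{\mathbf{M}}$,
\[
b^{[2]}(\hat{\mathbf{G}})-b^{[2]}(\mathbf{G})=(\hat{\mathbf{b}}-\mathbf{b})'(\hat{\mathbf{b}}+\mathbf{b})=\phi\,\mathbf{b}'\mathbf{W}\tilde{\mathbf{p}}.
\]
Expanding $\mathbf{W}\tilde{\mathbf{p}}$ with the $(\mathbf{E}_{ik}-\mathbf{E}_{jk})$-structure produces the compact form
\[
b^{[2]}(\hat{\mathbf{G}})-b^{[2]}(\mathbf{G})=\phi\Big[(b_i-b_j)\sum_k\delta_k\tilde{p}_k+(\tilde{p}_i-\tilde{p}_j)\sum_k\delta_k b_k\Big].
\]
In Case~1 the first bracketed summand is strictly positive whenever the reallocation is non-trivial, since $b_i-b_j>0$, $\delta_k\geq 0$ with at least one strict, and $\tilde{p}_k>0$. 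Hence it suffices to show $\tilde{p}_i-\tilde{p}_j\geq 0$.

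To control $\tilde{p}_i-\tilde{p}_j$ I establish an auxiliary identity: subtracting the $i$- and $j$-rows of $(\mathbf{I}-\phi\hat{\mathbf{G}})\hat{\mathbf{M}}=\mathbf{I}$ and collecting the $\{i,j\}$-column contributions (using $\hat{g}_{ii}=\hat{g}_{jj}=0$ and $\hat{g}_{ij}=\hat{g}_{ji}$) gives, for every $l$,
\[
(1+\phi\hat{g}_{ij})(\hat{M}_{il}-\hat{M}_{jl})=(\delta_{il}-\delta_{jl})+\phi\sum_{k\notin\{i,j\}}(\hat{g}_{ik}-\hat{g}_{jk})\hat{M}_{kl}.
\]
Summing this against the weight vector $\mathbf{1}$ yields $(1+\phi\hat{g}_{ij})(\hat{b}_i-\hat{b}_j)=\phi\sum_{k\notin\{i,j\}}(\hat{g}_{ik}-\hat{g}_{jk})\hat{b}_k\geq 0$ under post-dominance. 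Summing it against the weight vector $\hat{\mathbf{b}}+\mathbf{b}$ yields
\[
(1+\phi\hat{g}_{ij})(\tilde{p}_i-\tilde{p}_j)=(\hat{b}_i-\hat{b}_j)+(b_i-b_j)+\phi\sum_{k\notin\{i,j\}}(\hat{g}_{ik}-\hat{g}_{jk})\tilde{p}_k.
\]
Every term on the right is non-negative, and $b_i-b_j>0$, so $\tilde{p}_i-\tilde{p}_j>0$, completing Case~1.

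For Case~2 I reduce to Case~1. Let $\sigma$ be the permutation swapping $i$ and $j$ and set $\mathbf{G}_\sigma:=\mathbf{P}_\sigma\mathbf{G}\mathbf{P}_\sigma$. Relabelling is an isomorphism of networks, so $b^{[2]}(\mathbf{G}_\sigma)=b^{[2]}(\mathbf{G})$, while $b_i^\sigma=b_j>b_i=b_j^\sigma$, matching Case~1's pre-condition at $\mathbf{G}_\sigma$. Reading off the entries, the transition $\mathbf{G}_\sigma\to\hat{\mathbf{G}}$ is itself a weight reallocation from $j$ to $i$ with non-negative amounts $\delta_k':=\hat{g}_{ik}-g_{jk}\geq 0$---non-negativity is precisely Case~2's hypothesis. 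That same hypothesis also gives $\hat{g}_{ik}\geq g_{jk}\geq \hat{g}_{jk}$, so the Case~1 post-dominance condition holds for $\hat{\mathbf{G}}$ relative to $\mathbf{G}_\sigma$. Applying Case~1 to the pair $(\mathbf{G}_\sigma,\hat{\mathbf{G}})$ then delivers $b^{[2]}(\hat{\mathbf{G}})>b^{[2]}(\mathbf{G}_\sigma)=b^{[2]}(\mathbf{G})$.

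The main obstacle is the auxiliary identity for $\hat{M}_{il}-\hat{M}_{jl}$ and the way it is aggregated. Edge-level post-dominance $\hat{g}_{ik}\geq\hat{g}_{jk}$ does \emph{not} propagate to componentwise monotonicity of the rows of $\hat{\mathbf{M}}$---typically $\hat{M}_{ij}<\hat{M}_{jj}$, so the contributions at $l\in\{i,j\}$ have the ``wrong'' sign. What saves the argument is that these ambiguous diagonal/cross contributions collapse into the source term $(b_i-b_j)$ on the right-hand side when the identity is summed against $\hat{\mathbf{b}}+\mathbf{b}$, and this term is positive by hypothesis. Once this algebraic packaging is in hand the remaining steps (the resolvent identity, the expansion of $\mathbf{W}\tilde{\mathbf{p}}$, and the swap in Case~2) are essentially bookkeeping.
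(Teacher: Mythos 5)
Your proposal is correct, but it does not follow the paper's route, and the difference is worth recording. The paper never proves the first case ($b_i>b_j$) in this document at all: it imports it verbatim as Proposition~4 of the companion paper \cite{Sun2023}, and the appendix proof consists solely of the second case, which it handles by constructing $\bar{\mathbf{G}}$ with $\bar{g}_{jk}=\hat{g}_{ik}$ and $\bar{g}_{ik}=\hat{g}_{jk}$, applying the first case with the roles of $i$ and $j$ exchanged, and invoking the isomorphism $\bar{\mathbf{G}}\cong\hat{\mathbf{G}}$ --- which is exactly your $\mathbf{P}_\sigma$ label-swap reduction in different clothing, down to the chain $\bar{g}_{jk}=\hat{g}_{ik}\geq g_{jk}\geq\hat{g}_{jk}=\bar{g}_{ik}$ that you also use to verify post-dominance. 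So the genuinely new content in your write-up is the self-contained proof of the first case, and that argument checks out: the resolvent identity $\hat{\mathbf{b}}-\mathbf{b}=\phi\hat{\mathbf{M}}\mathbf{W}\mathbf{b}$, the factorization of the increment as $(\hat{\mathbf{b}}-\mathbf{b})'(\hat{\mathbf{b}}+\mathbf{b})$, the expansion into $(b_i-b_j)\sum_k\delta_k\tilde{p}_k+(\tilde{p}_i-\tilde{p}_j)\sum_k\delta_k b_k$, and the row-difference identity that simultaneously yields $\hat{b}_i\geq\hat{b}_j$ (summing against $\mathbf{1}$) and $\tilde{p}_i>\tilde{p}_j$ (summing against $\hat{\mathbf{b}}+\mathbf{b}$, where the ambiguous $l\in\{i,j\}$ terms collapse into the strictly positive source $b_i-b_j$) are all sound, with $\tilde{p}_k>0$ guaranteed by $\hat{\mathbf{M}}\geq\mathbf{I}$ entrywise. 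What your version buys is independence from the companion paper plus an explicit closed form for the welfare increment; what the paper buys by outsourcing is brevity. One caveat, which you share with the paper rather than introduce: strictness needs the reallocation to be non-degenerate. In your Case~2 reduction, if $\hat{g}_{ik}=g_{jk}$ for every $k\notin\{i,j\}$ then $\hat{\mathbf{G}}=\mathbf{G}_\sigma$ is isomorphic to $\mathbf{G}$ and the strict inequality cannot hold; the paper's construction of $\bar{\mathbf{G}}$ degenerates to $\mathbf{G}$ in exactly the same situation, so this is a blemish in the lemma's statement, not in your argument.
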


Analogous to Lemma \ref{lem:wel-increase}, Lemma \ref{lem:sun2023}
identifies a class of \emph{improving} weight-reallocation operations. While
the link-allocation operations in Lemma \ref{lem:wel-increase} represent a
subset of these weight-reallocation operations, the directionality of these
operations imposes distinct requirements depending on the nodes' relative
centrality. When reallocating weights from a node $j$ with lower KB
centrality to node $i$, the requirement is that node $i$ must uniformly
dominate node $j$ in the post-perturbation network. Conversely, when
reallocating from a node $j$ with higher KB centrality to node $i$, a more
stringent condition applies: node $i$'s weights in the perturbed network
must uniformly dominate node $j$'s weights in the original network. If we
restrict to unweighted networks, these two requirements turn out to be the
same and boil down to the requirement for link-reallocation operations.

A direct corollary of Lemma \ref{lem:sun2023} is that the optimal static
weighted network must be a weighted NSG. To apply this insight to sequential
weighted network design, we extend Algorithm \ref{alg:dynamic-reallocation}
to weighted networks (see the Appendix for details). In essence, if a path $%
\mathbf{s}_{w}=\left(\mathbf{G}\left(t\right)\right)_{t=1}^{T}$ does not
produce weighted NSGs at some period, with $t^\prime$ being the first such
period where neither node $i$ weight-dominates node $j$ nor vice versa, we
construct a new path. For each period $t\geq t^\prime$, we create $\mathbf{%
\hat{G}}\left(t\right)$ by transferring as much weight as possible from $j$
to $i$ in $\mathbf{G}\left(t\right)$, ensuring that either $\hat{g}%
_{ik}\left(t\right)=1$ or $\hat{g}_{jk}\left(t\right)=0$ for all $k\notin
\left\{i,j\right\}$. By Lemma \ref{lem:sun2023}, this new network $\mathbf{%
\hat{G}}\left(t\right)$ yields higher payoff than $\mathbf{G}\left(t\right)$%
, and the path $\hat{\mathbf{s}}=(\hat{\mathbf{G}}(t))^T_{t=1}$ remains
feasible.

Furthermore, Lemma \ref{lem:sun2023} excludes a significant subset of
weighted NSGs from optimality. Specifically, if a node $i$ has two links
with strictly intermediate weights ($0<g_{ij}\left(t\right),g_{ik}\left(t%
\right)<1$), transferring weight from link $ik$ to link $ij$ increases
payoff, even though $\mathbf{G}\left(t\right)$ is a weighted NSG.
Consequently, in an optimal network, no node can maintain two weighted links
with intermediate values.

The first part of Proposition \ref{pro:weighted-KB2} aligns with the main
finding in \cite{LI2023}, which showed that optimal complementary networks
should be weighted NSGs when the planner's utility function is
differentiable in nodes' equilibrium efforts. Our contribution extends this
result to dynamic weighted network design while making two notable advances.
First, our approach relies on discrete weight reallocation rather than
marginal adjustments, eliminating the need for differentiability
assumptions. This allows Proposition \ref{pro:weighted-KB2} to be extended
to any convex function of nodes' equilibrium efforts. Second, our
weight-reallocation operations exclude certain weighted NSGs from
optimality, providing the foundation for the second part of the proposition.

\bigskip \noindent\textbf{The Proof Sketch of Proposition \ref%
{pro:weighted-KB2} (ii)}. To establish the second part of Proposition \ref%
{pro:weighted-KB2}, we must identify which weighted NSGs that succeed a
quasi-complete (QC) graph maximize the sum of squared KB centralities.
Figure \ref{fig:weighted QC} illustrates three representative classes of
weighted NSGs formed by adding one unit of weight to a QC graph with 6 nodes
and 4 links. Dotted lines indicate weighted links ($0 < $ weight $< 1$),
while solid lines represent links with weight 1. The third typical class of
weighted NSGs is exactly the (unweighted) QC graph, unique up to
isomorphism. The second class includes the other unweighted NSG by setting $%
\alpha=0$, and any strictly convex combination of this unweighted NSG and
the QC graph. The first class is the rest weighted NSGs succeeding the QC
graph. For the first class of weighted NSGs, there always exists a node with
more than two strictly weighted links. In the example, node $1$ has two
strictly weighted links $(1,5)$ and $(1,6)$ and node $4$ has two strictly
weighted links $(4,2)$ and $(4,6)$. In the first step, by iteratively
adopting the weight reallocation operation proposed in Lemma \ref%
{lem:sun2023} (in the example, it is switching weights from $(1,5)$ to $%
(1,6) $ and weights from $(4,2)$ to $(4,6)$), we can show that the third
class of NSGs is strictly dominated by the union of the first and second
classes. In the second step, we extend Lemma \ref{lem:unweighted-QC} by
showing that the QC graph dominates any graph in the second class in
aggregate walks for any length $k\geq 2$. These two steps are formally
presented by Lemma \ref{lem:weighted-KB2} in the Appendix.

\begin{figure}[tph]
\begin{center}
\includegraphics[scale=0.6]{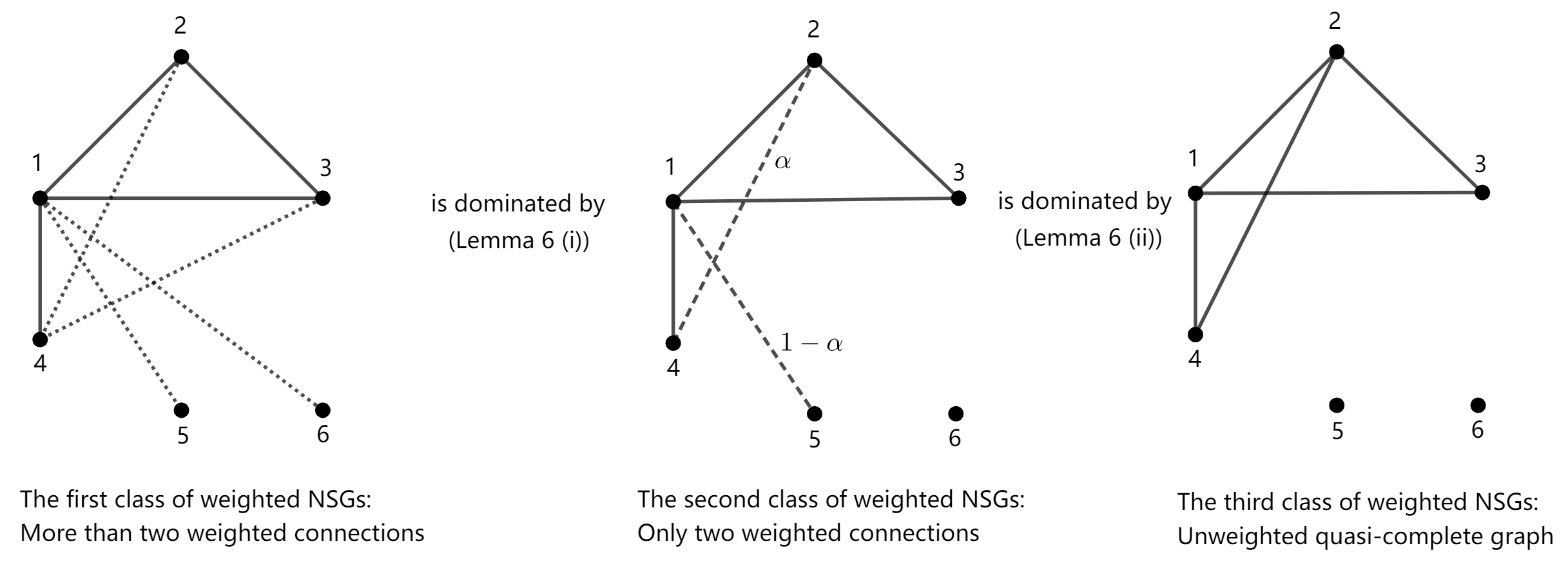}
\end{center}
\caption{The weighted NSGs obtained by adding one unit of weight.}
\label{fig:weighted QC}
\end{figure}

\section{Conclusion and Discussion}

\label{sec-conclusion}

This paper examines sequential network design within a general framework.
The social planner constructs a network over $T$ periods by adding one link
to the previously formed network in each period. The planner's preference
between network formation processes is determined by which process generates
networks with higher total numbers of walks of arbitrary lengths in each
period. This preference structure captures both the planner's varying degree
of farsightedness and the objective of maximizing diverse aggregate
centrality measures. Our analysis establishes that the optimal formation
process yields an NSG in each period. Specifically, when the planner is
myopic, the optimal strategy produces a QC graph in each period. We further
extend these results to weighted network design problems, where the planner
aims to maximize the stream of aggregate (or squared) Katz-Bonacich
centrality generated throughout the formation process.

This results can be (partially) extended to the case of heterogeneous nodes.
Specifically, consider the utility function of agent~$i$ given by 
\begin{equation*}
u_i(\mathbf{a}, \mathbf{G}) = \theta_i a_i - \frac{1}{2}a_i^2 + \phi \sum_{j
\in N} g_{ij} a_i a_j,
\end{equation*}
where the intrinsic marginal utilities $\boldsymbol{\theta} = (\theta_i)_{i
\in N}$ of agents are distinct. Then the structure of network $\mathbf{G}$
and intrinsic marginal utilities $\boldsymbol{\theta}$ jointly determine the
equilibrium, which is given by the weighted Katz-Bonacich centrality $%
\mathbf{a}^*(\mathbf{G}, \boldsymbol{\theta}) = (\mathbf{I} - \phi \mathbf{G}%
)^{-1} \boldsymbol{\theta}$. Assume the planner forms the network
dynamically to maximize the aggregate equilibrium activities. To analyze the
dynamic model of network formation with agent heterogeneity, we need to
modify Lemma~\ref{lem:wel-increase} as follows:

When $\theta_{i}\geq \theta_{j}$ and $\mathcal{T}_{j \rightarrow i}(\mathbf{G%
})$ is not isomorphic to $\mathbf{G}$, 
\begin{equation*}
\mathbf{1}^{\prime }\mathbf{G}^k \boldsymbol{\theta} < \mathbf{1}^{\prime }(%
\mathcal{T}_{j \rightarrow i}(\mathbf{G}))^k \boldsymbol{\theta} \text{ for
any integer } k \geq 2.
\end{equation*}

The proof is analogous to that of Lemma 1, with agent heterogeneity further amplifying the resulting inequalities. According to this result, the social planner tends to connect agents with high $\theta$ values first before connecting agents with divergent or polarized $\theta$ values. Consequently, the optimal strategy forms networks in which the neighbors of an agent with low $\theta$ are nested within those of an agent with high $\theta$. As a result, the formed network is also an NSG in each period, and Theorem 1  continues to holds. However, Theorem 2 does not hold under heterogeneity: if an agent's $\theta$ is substantially larger than others', then the optimal network formed in period $t = n - 1$, denoted $\mathbf{G}^*(n - 1)$, is a star network centered on the the high-$\theta$ agent.

Our developed framework is distinctive for its breadth, showing that
networks optimized for maximizing different walk-based centrality measures,
such as Katz-Bonacich centrality, diffusion centrality, and community
centrality, exhibit similar structural characteristics. As demonstrated in
Lemma \ref{lem:non-linear} in the Appendix, our primary finding (Theorem \ref%
{thm:unweighted-NSG}) applies even to situations where the planner's
immediate utility is derived from strategic complementary network games with
convex best response functions. Nevertheless, our approach has limitations
in contexts where walk-based measures inadequately reflect network value.
Exploring these challenges necessitates additional research and alternative
methodological approaches.

In our analysis, we have distinguished between two different NSGs based on
their spectral radius. The challenge of identifying the graph with the
maximum spectral radius was first presented by \cite{Brualdi1985} and has
remained unsolved in mathematics for over three decades. Our work provides
an initial contribution toward differentiating between NSGs (Lemma \ref%
{lem:unweighted-QC}), offering a foundation for further exploration in this
area.

\newpage

\section{Appendix A: Proofs}

\noindent\textbf{Proof of Lemma \ref{lem:wel-increase}.} We prove a stronger
lemma here, which can be used to show the robustness of Theorem \ref%
{thm:unweighted-NSG} when planner's instantaneous utility is micro-founded
by a network game with a convex best response function.

Consider a connected network $\mathbf{G}$ and two distinct nodes $i,j$ such
that $N_{j}\left( \mathbf{G}\right) \backslash \{i\}\subset N_{i}\left( 
\mathbf{G}\right) \backslash \{j\}$. Given an arbitrary set of nodes $%
L=\{l_{1},...,l_{k}\}\subseteq N\backslash \{i,j\}$ such that $L\cap
N_{i}\left( \mathbf{G}\right) =\emptyset $, denote $\mathbf{\hat{G}}:=%
\mathbf{G+}\underset{l\in L}{\sum }\mathbf{E}_{il}$ and $\mathbf{\bar{G}}:=%
\mathbf{G+}\underset{l\in L}{\sum }\mathbf{E}_{jl}$. For any non-constant function $\psi
\left( \cdot \right) :\mathbb{R}_{0}^{+}\rightarrow \mathbb{R}_{0}^{+}$
define the operater $\Lambda _{\mathbf{G},\psi }:\mathbb{R}^{n}\rightarrow 
\mathbb{R}^{n}$ by $\left( \Lambda _{\mathbf{G},\psi }\left( x\right)
\right) _{i}=\psi \left( \underset{k\in N_{i}\left( \mathbf{G}\right) }{\sum 
}x_{k}\right) $, $\forall i\in N$. Let $\mathbf{x}^{\left( m\right)
}=\Lambda _{\mathbf{\hat{G}},\psi }\left( \mathbf{x}^{\left( m-1\right)
}\right) $ and $\mathbf{y}^{\left( m\right) }=\Lambda _{\mathbf{\bar{G}}%
,\psi }\left( \mathbf{y}^{\left( m-1\right) }\right) $ denote the $m$-th
iterations starting from the unit vector under networks $\mathbf{\hat{G}}$
and $\mathbf{\bar{G}}$, respectively. Let $\mathbf{x}^{0
}>\mathbf{0}$ and $\mathbf{y}^{0
}>\mathbf{0}$.

\begin{lem}
\label{lem:non-linear}When $\psi \left( \cdot \right) $ is convex and $\psi
^{\prime }(\cdot )\in (0,1]$, we have 
\begin{equation*}
\sum_{k\in N}x_{k}^{\left( m\right) }>\sum_{k\in N}y_{k}^{\left( m\right) }%
\text{, }\forall m\geq 2\text{.}
\end{equation*}
\end{lem}

When $\psi \left( \cdot \right) $ is the identity function and $\mathbf{x}%
^{\left( 0\right) }=\mathbf{y}^{\left( 0\right) }=\mathbf{1}$, $%
x_{k}^{\left( m\right) }$ and $y_{k}^{\left( m\right) }$ count the total
number of walks of length $m$ starting from node $k$ in the network $\mathbf{%
\hat{G}}$ and $\mathbf{\bar{G}}$, respectively. Therefore, Lemma \ref%
{lem:non-linear} covers Lemma \ref{lem:wel-increase} by setting $\psi \left(
\cdot \right) $ as the identity function and viewing the original network $%
\mathbf{G}$ in the statement of Lemma \ref{lem:wel-increase} as $\mathbf{%
\bar{G}}$ in Lemma \ref{lem:non-linear}.

We use mathematical induction to show that for all $m\geq 0$, the following
four statements hold:

1. $x_{k}^{\left( m\right) }\geq y_{k}^{\left( m\right) }$, $\forall k\neq j$%
;

2. $x_{i}^{\left( m\right) }\geq y_{j}^{\left( m\right) }$;

3. $x_{i}^{\left( m\right) }+x_{j}^{\left( m\right) }\geq y_{i}^{\left(
m\right) }+y_{j}^{\left( m\right) }$;

4. $x_{k}^{\left( m\right) }$ and $y_{k}^{\left( m\right) }$ is strictly
increasing in $m$ for any $k$.

When $m=0$ and $1$, the four arguments trivially hold. Assume that these
four arguments hold for any $m\leq m^{\prime }$ where $m^{\prime }\geq 1$.
The forth argument holds straightforwardly by the definitions of $%
x_{k}^{\left( m^{\prime }\right) }$, $y_{k}^{\left( m^{\prime }\right) }$
and the inductive assumption since $\psi ^{\prime }(\cdot )>0$.

We first show that $x_{k}^{\left( m^{\prime }+1\right) }\geq y_{k}^{\left(
m^{\prime }+1\right) }$ for any $k\notin \left\{ i,j\right\} \cup L$. 
\begin{eqnarray*}
x_{k}^{\left( m^{\prime }+1\right) } &=&\psi ( \underset{k^{\prime }\notin
\left\{ i,j\right\} }{\sum }g_{kk^{\prime }}x_{k^{\prime }}^{\left(
m^{\prime }\right) }+g_{ki}x_{i}^{\left( m^{\prime }\right)
}+g_{kj}x_{j}^{\left( m^{\prime }\right) }) \geq \psi ( \underset{k^{\prime
}\notin \left\{ i,j\right\} }{\sum }g_{kk^{\prime }}y_{k^{\prime }}^{\left(
m^{\prime }\right) }+g_{ki}x_{i}^{\left( m^{\prime }\right)
}+g_{kj}x_{j}^{\left( m^{\prime }\right) }) \\
&\geq &\psi ( \underset{k^{\prime }\notin \left\{ i,j\right\} }{\sum }%
g_{kk^{\prime }}y_{k^{\prime }}^{\left( m^{\prime }\right)
}+g_{ki}y_{i}^{\left( m^{\prime }\right) }+g_{kj}y_{j}^{\left( m^{\prime
}\right) }) =y_{k}^{\left( m^{\prime }+1\right) }\text{.}
\end{eqnarray*}%
The first inequality follows from the fact that $x_{k}^{\left( m^{\prime
}\right) }\geq y_{k}^{\left( m^{\prime }\right) }$, $\forall k\notin \left\{
i,j\right\} $. The second inequality follows from the inductive assumption
and the fact that $g_{ki}\geq g_{kj}$. Note that, for some node $k$ such
that $g_{ki}=1$ and $g_{kj}=0$, we have $x_{k}^{\left( m^{\prime }+1\right)
}>y_{k}^{\left( m^{\prime }+1\right) }$.

Then, we are going to show that, for any $l\in L$, $x_{l}^{\left( m^{\prime
}+1\right) }\geq y_{l}^{\left( m^{\prime }+1\right) }$. The inequality holds
since%
\begin{equation*}
x_{l}^{\left( m^{\prime }+1\right) }=\psi ( \underset{k^{\prime }}{\sum }%
g_{lk^{\prime }}x_{k^{\prime }}^{\left( m^{\prime }\right) }+x_{i}^{\left(
m^{\prime }\right) }) \geq \psi ( \underset{k^{\prime }}{\sum }g_{lk^{\prime
}}y_{k^{\prime }}^{\left( m^{\prime }\right) }+y_{j}^{\left( m^{\prime
}\right) }) =y_{l}^{\left( m^{\prime }+1\right) }\text{.}
\end{equation*}

Third, we compare $x_{i}^{\left( m^{\prime }+1\right) }$ and $y_{i}^{\left(
m^{\prime }+1\right) }$. Note that 
\begin{equation*}
x_{i}^{\left( m^{\prime }+1\right) }=\psi (g_{ij}x_{j}^{\left( m^{\prime
}\right) }+\underset{k\neq j}{\sum }g_{ik}x_{k}^{\left( m^{\prime }\right) }+%
\underset{l\in L}{\sum }x_{l}^{\left( m^{\prime }\right) })\text{.}
\end{equation*}%
When $g_{ij}=0$, we have $x_{i}^{\left( m^{\prime }+1\right) }\geq
y_{i}^{\left( m^{\prime }+1\right) }$ since $x_{k}^{\left( m^{\prime
}\right) }\geq y_{k}^{\left( m^{\prime }\right) }>1$ (by the fact that $%
x_{k}^{\left( 0\right) }=y_{k}^{\left( 0\right) }=1$ and $x_{k}^{\left(
m\right) }$, $y_{k}^{\left( m\right) }$ increasing in $m$), $\forall k\neq j$%
. When $g_{ij}=1$, we have 
\begin{eqnarray*}
x_{i}^{\left( m^{\prime }+1\right) } &=&\psi (x_{j}^{\left( m^{\prime
}\right) }+\underset{k\neq j}{\sum }g_{ik}x_{k}^{\left( m^{\prime }\right) }+%
\underset{l\in L}{\sum }x_{l}^{\left( m^{\prime }\right) }) \\
&=&\psi (\psi (x_{i}^{\left( m^{\prime }-1\right) }+\underset{k\neq i}{\sum }%
g_{jk}x_{k}^{\left( m^{\prime }-1\right) })+\underset{k\neq j}{\sum }%
g_{ik}x_{k}^{\left( m^{\prime }\right) }+\underset{l\in L}{\sum }%
x_{l}^{\left( m^{\prime }\right) }) \\
&\geq &\psi (\psi (x_{i}^{\left( m^{\prime }-1\right) }+\underset{k\neq i}{%
\sum }g_{jk}x_{k}^{\left( m^{\prime }-1\right) })+\underset{k\neq j}{\sum }%
g_{ik}x_{k}^{\left( m^{\prime }\right) }+\underset{l\in L}{\sum }%
x_{l}^{\left( m^{\prime }-1\right) }) \\
&\geq &\psi (\psi (y_{i}^{\left( m^{\prime }-1\right) }+\underset{k\neq i}{%
\sum }g_{jk}y_{k}^{\left( m^{\prime }-1\right) })+\underset{k\neq j}{\sum }%
g_{ik}y_{k}^{\left( m^{\prime }\right) }+\underset{l\in L}{\sum }%
y_{l}^{\left( m^{\prime }-1\right) }) \\
&\geq &\psi (\psi (y_{i}^{\left( m^{\prime }-1\right) }+\underset{k\neq i}{%
\sum }g_{jk}y_{k}^{\left( m^{\prime }-1\right) }+\underset{l\in L}{\sum }%
y_{l}^{\left( m^{\prime }-1\right) })+\underset{k\neq j}{\sum }%
g_{ik}y_{k}^{\left( m^{\prime }\right) })=y_{i}^{\left( m^{\prime }+1\right)
}\text{.}
\end{eqnarray*}%
The first inequality follows from the fact that $x_{k}^{\left( m\right) }$
is increasing in $m$. The second inequality follows from $x_{k}^{\left(
m\right) }\geq y_{k}^{\left( m\right) }$, $\forall k\neq j$ and $m\leq
m^{\prime }$. The third inequality follows from the fact that $\psi ^{\prime
}\left( \cdot \right) \leq 1$.

We further show that $x_{i}^{\left( m^{\prime }+1\right) }\geq y_{j}^{\left(
m^{\prime }+1\right) }$. The argument trivially holds when $g_{ij}=0$, and
we focus on the case of $g_{ij}=1$.

Decomposing $x_{i}^{\left( m^{\prime }+1\right) }$ and using the inductive
assumptions, 
\begin{eqnarray*}
x_{i}^{\left( m^{\prime }+1\right) } &=&\psi (\psi (x_{i}^{\left( m^{\prime
}-1\right) }+\underset{k\neq i}{\sum }g_{jk}x_{k}^{\left( m^{\prime
}-1\right) })+\underset{k\neq j}{\sum }g_{ik}x_{k}^{\left( m^{\prime
}\right) }+\underset{l\in L}{\sum }x_{l}^{\left( m^{\prime }\right) }) \\
&\geq &\psi (\psi (y_{j}^{\left( m^{\prime }-1\right) }+\underset{k\neq i}{%
\sum }g_{jk}x_{k}^{\left( m^{\prime }-1\right) })+\underset{k\neq j}{\sum }%
g_{ik}x_{k}^{\left( m^{\prime }\right) }+\underset{l\in L}{\sum }%
x_{l}^{\left( m^{\prime }\right) }) \\
&\geq &\psi (\psi (y_{j}^{\left( m^{\prime }-1\right) }+\underset{k\neq i}{%
\sum }g_{jk}y_{k}^{\left( m^{\prime }-1\right) })+\underset{k\neq j}{\sum }%
g_{ik}y_{k}^{\left( m^{\prime }\right) }+\underset{l\in L}{\sum }%
y_{l}^{\left( m^{\prime }\right) }) \\
&=&\psi (\psi (y_{j}^{\left( m^{\prime }-1\right) }+\underset{k\neq i}{\sum }%
g_{jk}y_{k}^{\left( m^{\prime }-1\right) })+\underset{k\neq j,k\in
N_{i}\left( \mathbf{G}\right) \backslash N_{j}\left( \mathbf{G}\right) }{%
\sum }y_{k}^{\left( m^{\prime }\right) }+\underset{k\neq i}{\sum }%
g_{jk}y_{k}^{\left( m^{\prime }\right) }+\underset{l\in L}{\sum }%
y_{l}^{\left( m^{\prime }\right) })\text{.}
\end{eqnarray*}%
Moreover, since $y_{k}^{\left( m\right) }$ is increasing in $m$, we can get 
\begin{eqnarray*}
x_{i}^{\left( m^{\prime }+1\right) } &\geq &\psi (\psi (y_{j}^{\left(
m^{\prime }-1\right) }+\underset{k\neq i}{\sum }g_{jk}y_{k}^{\left(
m^{\prime }-1\right) })+\underset{k\neq j,k\in N_{i}\left( \mathbf{G}\right)
\backslash N_{j}\left( \mathbf{G}\right) }{\sum }y_{k}^{\left( m^{\prime
}-1\right) }+\underset{k\neq i}{\sum }g_{jk}y_{k}^{\left( m^{\prime }\right)
}+\underset{l\in L}{\sum }y_{l}^{\left( m^{\prime }\right) }) \\
&\geq &\psi (\psi (y_{j}^{\left( m^{\prime }-1\right) }+\underset{k\neq i}{%
\sum }g_{jk}y_{k}^{\left( m^{\prime }-1\right) }+\underset{k\neq j,k\in
N_{i}\left( \mathbf{G}\right) \backslash N_{j}\left( \mathbf{G}\right) }{%
\sum }y_{k}^{\left( m^{\prime }-1\right) })+\underset{k\neq i}{\sum }%
g_{jk}y_{k}^{\left( m^{\prime }\right) }+\underset{l\in L}{\sum }%
y_{l}^{\left( m^{\prime }\right) }) \\
&=&\psi (\psi (y_{j}^{\left( m^{\prime }-1\right) }+\underset{k\neq j}{\sum }%
g_{ik}y_{k}^{\left( m^{\prime }-1\right) })+\underset{k\neq i}{\sum }%
g_{jk}y_{k}^{\left( m^{\prime }\right) }+\underset{l\in L}{\sum }%
y_{l}^{\left( m^{\prime }\right) })=y_{j}^{\left( m^{\prime }+1\right) }%
\text{,}
\end{eqnarray*}%
where the last inequality comes from $\psi ^{\prime }\left( \cdot \right)
\leq 1$.

Finally, we show that $x_i^{(m^{\prime }+1)} + x_j^{(m^{\prime }+1)} \geq
y_i^{(m^{\prime }+1)} + y_j^{(m^{\prime }+1)}$. Note that since $\psi(\cdot)$
is convex and increasing, for any four real numbers $a, b, c, d$, we have $%
\psi(a) + \psi(b) \geq \psi(c) + \psi(d)$ if $a + b \geq c + d$ and $%
\max\{a, b\} \geq \max\{c, d\}$. Therefore, 
\begin{align}
&x_i^{(m^{\prime }+1)} + x_j^{(m^{\prime }+1)} \geq y_i^{(m^{\prime }+1)} +
y_j^{(m^{\prime }+1)}  \notag \\
\Leftrightarrow \quad &\psi\left(\sum_k g_{ik} x_k^{(m^{\prime })} + \sum_{l
\in L} x_l^{(m^{\prime })}\right) + \psi\left(\sum_k g_{jk} x_k^{(m^{\prime
})}\right)  \notag \\
&\qquad\qquad \geq \psi\left(\sum_k g_{ik} y_k^{(m^{\prime })}\right) +
\psi\left(\sum_k g_{jk} y_k^{(m^{\prime })} + \sum_{l \in L} y_l^{(m^{\prime
})}\right)  \notag \\
\Leftarrow \quad &\sum_k g_{ik} x_k^{(m^{\prime })} + \sum_{l \in L}
x_l^{(m^{\prime })} + \sum_k g_{jk} x_k^{(m^{\prime })} \geq \sum_k g_{jk}
y_k^{(m^{\prime })} + \sum_{l \in L} y_l^{(m^{\prime })} + \sum_k g_{ik}
y_k^{(m^{\prime })}  \label{eq:thm11} \\
\text{and} \quad &\sum_k g_{ik} x_k^{(m^{\prime })} + \sum_{l \in L}
x_l^{(m^{\prime })} \geq \max\left\{\sum_k g_{jk} y_k^{(m^{\prime })} +
\sum_{l \in L} y_l^{(m^{\prime })}, \sum_k g_{ik} y_k^{(m^{\prime
})}\right\}.  \label{eq:thm12}
\end{align}
Equation \eqref{eq:thm11} holds by the inductive assumption that $%
x_i^{(m^{\prime })} + x_j^{(m^{\prime })} \geq y_i^{(m^{\prime })} +
y_j^{(m^{\prime })}$ and $x_k^{(m^{\prime })} \geq y_k^{(m^{\prime })}$ for
all $k \neq j$.

To show equation \eqref{eq:thm12}, note that we have shown that $%
x_{i}^{(m^{\prime }+1)}\geq max\{y_{j}^{(m^{\prime }+1)},y_{i}^{(m^{\prime
}+1)}\}$. By the monotonicity of $\phi(\cdot)$, we have the equality.

\bigskip

The strictness when $m\geq 2$ comes from the fact that 
\begin{equation*}
x_{i}^{m^{\prime }}>y_{j}^{m^{\prime }}\implies x_{l}^{m^{\prime
}+1}>y_{l}^{m^{\prime }+1}\implies x_{i}^{m^{\prime }+1}>y_{j}^{m^{\prime
}+1}
\end{equation*}%
by the decomposition of $x_{l}^{m^{\prime }+1}$, $y_{l}^{m^{\prime }+1}$ and
inductive assumptions. Moreover, when $L\neq \emptyset $, the inequality $%
x_{i}^{1}>y_{j}^{1}$ holds for the degrees of nodes $i$ and $j$. Combined
with the other weak inequalities, we have $\sum_{k}x_{k}^{m}>%
\sum_{k}y_{k}^{m}$ when $m>2$.

The result that $\lambda_{\max}(\mathbf{G}) < \lambda_{\max}(\mathcal{T}_{j
\to i}(\mathbf{G}))$ follows directly from Theorem 1 in \cite{Wu2005}.
Specifically, Theorem 1 in \cite{Wu2005} states that in a network $\mathbf{G}
$ where node $i$'s eigenvector centrality is weakly larger than that of node 
$j$, and there exist nodes $\{v_1, \ldots, v_s\}$ that are connected to $j$
but not to $i$, the following holds. Let $\mathbf{G}^*$ be the network
obtained from $\mathbf{G}$ by deleting links $(j, v_k)$ and adding links $%
(i, v_k)$ for $1 \leq k \leq s$. Then $\lambda_{\max}(\mathbf{G}) <
\lambda_{\max}(\mathbf{G}^*)$.

In Lemma~\ref{lem:wel-increase}, since $\mathbf{G}$ and $\mathcal{T}_{j \to
i}(\mathbf{G})$ are not isomorphic, the set $L = \{l \in N \setminus \{i,j\}
: g_{il} = 0 \text{ and } g_{jl} = 1\}$ is nonempty. Therefore, $%
\lambda_{\max}(\mathbf{G}) < \lambda_{\max}(\mathcal{T}_{j \to i}(\mathbf{G}%
)) = \lambda_{\max}(\mathcal{T}_{i \to j}(\mathbf{G}))$ by Theorem 1 in \cite%
{Wu2005}. Note that Theorem 1 in \cite{Wu2005} requires that node $i$'s
eigenvector centrality be weakly higher than that of node $j$. Here, we do
not need to impose such a requirement since $\mathcal{T}_{j \to i}(\mathbf{G}%
) \cong \mathcal{T}_{i \to j}(\mathbf{G})$. When $i$'s eigenvector
centrality is larger, we apply Theorem 1 in \cite{Wu2005} to $\mathcal{T}_{j
\to i}(\mathbf{G})$; if $j$'s eigenvector centrality is larger, we apply it
to $\mathcal{T}_{i \to j}(\mathbf{G})$. \hfill $\Box $

\bigskip

\noindent\textbf{Proof of Theorem~\ref{thm:unweighted-NSG}.} Suppose by
contradiction that there is an optimal path $\mathbf{s}=(\mathbf{G}%
(t))_{t=1}^T$ with some $t$ for which $\mathbf{G}(t)\notin \mathcal{NSG}(t)$%
. Let $t^{\prime }$ be the first time at which $\mathbf{G}(t^{\prime
})\notin \mathcal{NSG}(t^{\prime })$ along $\mathbf{s}$ (so $\mathbf{G}(t)\in%
\mathcal{NSG}(t)$ for all $t<t^{\prime }$). Since $\mathbf{G}(t^{\prime })$
fails to be a nested split graph, there exist distinct $i,j\in N$ such that
the neighborhoods were nested up to $t^{\prime }$ (i.e., $N_j(\mathbf{G}%
(t))\setminus\{i\}\subseteq N_i(\mathbf{G}(t))\setminus\{j\}$ for all $%
t<t^{\prime }$), but at time $t^{\prime }$ nesting fails: 
\begin{equation*}
N_j(\mathbf{G}(t^{\prime }))\setminus\{i\}\not\subseteq N_i(\mathbf{G}%
(t^{\prime }))\setminus\{j\}.
\end{equation*}
Equivalently, there exists $\ell\notin\{i,j\}$ with $g_{j\ell}(t^{\prime
})=1 $ and $g_{i\ell}(t^{\prime })=0$.

\smallskip We construct a new feasible path $\hat{\mathbf{s}}=(\hat{\mathbf{G%
}}(t))_{t=1}^T$ that coincides with $\mathbf{s}$ up to $t^{\prime }-1$ and,
from $t^{\prime }$ onward, simulates the same sequence of link additions
except that whenever $\mathbf{s}$ adds a link involving $j$ and some $%
\ell\notin\{i,j\}$, the modified path adds that link to $i$ if available;
otherwise it follows the original addition. Formally, set $\hat{\mathbf{G}}%
(t)=\mathbf{G}(t)$ for $t<t^{\prime }$, and for $t\ge t^{\prime }$ define
recursively:

\begin{enumerate}
\item If $\mathbf{G}(t+1)=\mathbf{G}(t)+\mathbf{E}_{j\ell}$ with $%
\ell\notin\{i,j\}$, then 
\begin{equation*}
\hat{\mathbf{G}}(t+1)= 
\begin{cases}
\hat{\mathbf{G}}(t)+\mathbf{E}_{i\ell}, & \text{if }\hat{g}_{i\ell}(t)=0, \\ 
\hat{\mathbf{G}}(t)+\mathbf{E}_{j\ell}, & \text{if }\hat{g}_{i\ell}(t)=1.%
\end{cases}%
\end{equation*}

\item If $\mathbf{G}(t+1)=\mathbf{G}(t)+\mathbf{E}_{i\ell}$ with $%
\ell\notin\{i,j\}$, then 
\begin{equation*}
\hat{\mathbf{G}}(t+1)= 
\begin{cases}
\hat{\mathbf{G}}(t)+\mathbf{E}_{i\ell}, & \text{if }\hat{g}_{i\ell}(t)=0, \\ 
\hat{\mathbf{G}}(t)+\mathbf{E}_{j\ell}, & \text{if }\hat{g}_{i\ell}(t)=1.%
\end{cases}%
\end{equation*}

\item If $\mathbf{G}(t+1)=\mathbf{G}(t)+\mathbf{E}_{\ell k}$ with $%
\{\ell,k\}\cap\{i,j\}=\emptyset$, or $(\ell,k)=(i,j)$, then set $\hat{%
\mathbf{G}}(t+1)=\hat{\mathbf{G}}(t)+\mathbf{E}_{\ell k}$.
\end{enumerate}

\smallskip We show by induction that at each step the required addition is
feasible, i.e., when we prescribe adding $\mathbf{E}_{ab}$ we indeed have $%
\hat{g}_{ab}(t)=0$.

- For case (3), feasibility is immediate since we mirror the original step
on a pair $(\ell,k)$ with $\{\ell,k\}\cap\{i,j\}=\emptyset$ (or $(i,j)$),
and outside $\{i,j\}$ we never alter adjacency relative to $\mathbf{G}$
except by copying the same addition, so $\hat{g}_{\ell k}(t)=g_{\ell k}(t)=0$%
.

- For case (1), suppose $\mathbf{G}(t+1)=\mathbf{G}(t)+\mathbf{E}_{j\ell}$
with $\ell\notin\{i,j\}$. If $\hat{g}_{i\ell}(t)=0$, we add $(i,\ell)$;
feasibility is clear. If $\hat{g}_{i\ell}(t)=1$, we instead add $(j,\ell)$.
Assume toward contradiction that $\hat{g}_{j\ell}(t)=1$ already. Since $%
g_{j\ell}(t)=0$ (the original adds it at $t+1$), there must exist an earlier
time $\tau\in\{t^{\prime },\dots,t\}$ at which $\hat{\mathbf{G}}$ added $%
(j,\ell)$. Tracing back the rules, an addition of $(j,\ell)$ can only occur
in case (2) when the original added $(i,\ell)$ and $\hat{g}_{i\ell}=1$, or
in case (1) when the original added $(j,\ell)$ but $\hat{g}_{i\ell}=1$. In
either subcase, one finds a strictly earlier time at which $(i,\ell)$ had
been added in $\hat{\mathbf{G}}$ while it was absent in $\mathbf{G}$, which
(by the minimality of $t^{\prime }$ and the rule (1a)) forces an earlier
original addition of $(j,\ell)$, contradicting the fact that
the first original addition of $(j,\ell)$ occurs at $t+1$. Hence $\hat{g}%
_{j\ell}(t)=0$ and the step is feasible.

- For case (2), suppose $\mathbf{G}(t+1)=\mathbf{G}(t)+\mathbf{E}_{i\ell}$
with $\ell\notin\{i,j\}$. If $\hat{g}_{i\ell}(t)=0$, feasibility is trivial.
If $\hat{g}_{i\ell}(t)=1$, we add $(j,\ell)$. If this were infeasible, we
would have $\hat{g}_{j\ell}(t)=1$. Since $g_{i\ell}(t)=0$ (the original adds
it at $t+1$) and $\hat{g}_{i\ell}(t)=1$, there must have been a time $%
\tau\in\{t^{\prime },\dots,t\}$ when $\hat{\mathbf{G}}$ added $(i,\ell)$ via
rule (1a), which requires that the original at that $\tau$ added $(j,\ell)$.
But then $g_{j\ell}(\tau-1)=0$, and by the same logic as above we cannot
have $\hat{g}_{j\ell}(t)=1$ before $t+1$ without contradicting that the
original first adds $(i,\ell)$ at $t+1$. Hence $\hat{g}_{j\ell}(t)=0$ and
the step is feasible.

Thus every step adds a previously absent link, so $\hat{\mathbf{s}}\in S$ is
a valid successive path. Moreover, $\hat{\mathbf{G}}(t^{\prime })$ is
obtained from $\mathbf{G}(t^{\prime })$ by shifting some neighbors of $j$ to 
$i$ while keeping the total number of links at $t^{\prime }$, i.e., 
\begin{equation*}
\hat{\mathbf{G}}(t^{\prime }) \;=\; \mathcal{T}_{j\to i}\bigl(\,\mathbf{G}%
(t^{\prime })\,\bigr),
\end{equation*}
with $L:=\{\ell\notin\{i,j\}: g_{j\ell}(t^{\prime })=1,\,
g_{i\ell}(t^{\prime })=0\}\neq\emptyset$ by the choice of $t^{\prime }$ and $%
(i,j)$.

By construction, for each $t\ge t^{\prime }$ the set 
\begin{equation*}
L(t):=\{\ell\notin\{i,j\}: \hat{g}_{i\ell}(t)>g_{i\ell}(t)\}
\end{equation*}
coincides with $\{\ell\notin\{i,j\}: g_{j\ell}(t)>\hat{g}_{j\ell}(t)\}$, and
we can write 
\begin{equation*}
\hat{\mathbf{G}}(t) \;=\; \mathcal{T}_{j\to i}\bigl(\,\mathbf{G}(t)\,\bigr),
\quad\text{with } L(t)=\{\ell\notin\{i,j\}: g_{i\ell}(t)<g_{j\ell}(t)\}.
\end{equation*}
Indeed, whenever the original process creates a strict advantage of $j$ over 
$i$ at some neighbor $\ell$, rule (1a) immediately assigns $(i,\ell)$ in $%
\hat{\mathbf{G}}$; conversely, whenever $\hat{g}_{i\ell}(t)>g_{i\ell}(t)$,
that link must have been added by (1a) at the first time the original added $%
(j,\ell)$. Therefore, $L(t)$ is exactly the set of neighbors of $j$ missing
from $i$ at time $t$ in $\mathbf{G}(t)$.

Since $L(t)\neq\emptyset$ at $t=t^{\prime }$ and $\hat{\mathbf{G}}(t)$ is
not isomorphic to $\mathbf{G}(t)$ when a reallocation occurs, Lemma~\ref%
{lem:wel-increase} applies period by period. By Assumption~\ref%
{ass:preference}, this implies $u\bigl(\hat{\mathbf{G}}(t)\bigr)\geq u\bigl(%
\mathbf{G}(t)\bigr)$ for all $t\ge t^{\prime }$.

Finally, starting from any optimal path, we can repeatedly apply the above
modification at the first non-NSG period to obtain an optimal path $\mathbf{s%
}^*=(\mathbf{G}^*(t))_{t=1}^T$ such that $\mathbf{G}^*(t)\in\mathcal{NSG}(t)$
for all $t$ (the process terminates in finitely many steps since $S$ is
finite). This proves both parts of the theorem. \qed

\bigskip

\noindent\textbf{Proof of Proposition \ref{prop-alpha} and Corollary \ref%
{cor-TE2016}.} Both results follow directly from Theorem \ref%
{thm:unweighted-NSG}.\hfill$\Box$

\bigskip \noindent\textbf{Proof of Lemmas~\ref{lem:unweighted-QC} and~\ref%
{lem:weighted-KB2}(ii).} We prove Lemma~\ref{lem:weighted-KB2}(ii); Lemma~%
\ref{lem:unweighted-QC} follows as a special case.

Consider an unweighted QC graph $\mathbf{G}$ with $t$ links containing a
clique on $p\ge 2$ nodes. We first show that there are at most two
unweighted NSGs that succeed $\mathbf{G}$.

When $t=\frac{p(p-1)}{2}$ or $t=\frac{p(p+1)}{2}$, $\mathbf{G}$ consists of
a $p$-clique and isolated nodes. Then $\mathbb{S}(\mathbf{G})$ is a
singleton: the unique successor is obtained by adding one link between a
clique node and an isolated node; this graph is QC and, in particular, an
NSG.

Now suppose $\bar p < t < \frac{p(p+1)}{2}$, where $\bar p:=\frac{p(p-1)}{2}$
is the number of links in the clique. Thus the first $p$ nodes form a
clique; node $p+1$ connects to the first $t-\bar p$ nodes; and nodes $%
p+2,\dots,n$ (if any) are isolated. Classify nodes by (weakly) decreasing
degree into six classes:

\textbf{Class 1:} node $1$ (e.g., node 1 in $\mathbf{G}$ in Figure~\ref%
{fig:QCproof});

\textbf{Class 2:} nodes $2$ to $t - \bar{p}$; we use $i$ to represent Class
2 nodes, and the total number of nodes in Class 2 is denoted by $\beta = t - 
\bar{p} - 1$ (e.g., node 2 in $\mathbf{G}$ in Figure \ref{fig:QCproof});

\textbf{Class 3:} node $t - \bar{p} + 1$ (e.g., node 3 in $\mathbf{G}$ in
Figure \ref{fig:QCproof});

\textbf{Class 4:} nodes $t - \bar{p} + 2$ to $p$; we use $j$ to represent
Class 4 nodes, and the total number of nodes in Class 4 is denoted by $%
\gamma = p - 2 - \beta$ (no class 4 nodes in the example of $\mathbf{G}$ in
Figure \ref{fig:QCproof});

\textbf{Class 5:} node $p + 1$ (e.g., node 4 in $\mathbf{G}$ in Figure \ref%
{fig:QCproof}); and

\textbf{Class 6:} node $p + 2$ (e.g., node 5 in $\mathbf{G}$ in Figure \ref%
{fig:QCproof}), which is isolated.

Nodes $p+3,\dots,n$ are also isolated and play no role, so we omit them.

Let $D_{0},\dots,D_{m}$ be the degree partition of $\mathbf{G}$ from low to high. Then $\mathbf{G}$ is an (unweighted) NSG if and only if for any two nodes $i\in D_{x}$ and $j\in D_{y}$, we have $g_{ij}=1$ if and only if $x+y>m$ (see Theorem 1.2.4, point 6 in \cite{N.V.R.Mahadev1995}). In particular, adding a single edge preserves the NSG property only when it connects two
nodes whose degree pair lies on the Pareto frontier of degree sums. Among
all potential successors of $\mathbf{G}$, the only such edges are (i)
between Class~3 and Class~5 (nodes $t-\bar{p}+1$ and $p+1$), and (ii) between
Class~1 and Class~6 (nodes $1$ and $p+2$). Any other added edge violates
nestedness.

Therefore, there are exactly two unweighted NSG successors of $\mathbf{G}$: 
\begin{equation*}
\mathbb{S}(\mathbf{G}) \cap \mathcal{NSG} = \big\{\, \mathbf{G} + \mathbf{E}%
_{\,t-\bar p+1,\;p+1}\,,\ \ \mathbf{G} + \mathbf{E}_{\,1,\;p+2}\,\big\}.
\end{equation*}

For $\alpha\in[0,1]$, define the matrix $\mathbf{E}(\alpha)$ by 
\begin{equation*}
\mathbf{E}_{\,t-\bar p+1,\;p+1}(\alpha)=\mathbf{E}_{\,p+1,\;t-\bar
p+1}(\alpha)=\alpha,\qquad \mathbf{E}_{\,1,\;p+2}(\alpha)=\mathbf{E}%
_{\,p+2,\;1}(\alpha)=1-\alpha,
\end{equation*}
with all other entries zero. Then 
\begin{equation*}
\mathrm{conv}\big(\mathbb{S}(\mathbf{G})\cap \mathcal{NSG}\big) =\Big\{\, 
\mathbf{G}[\alpha] := \mathbf{G} + \mathbf{E}(\alpha) \;:\; \alpha\in[0,1] \,%
\Big\}.
\end{equation*}
This characterizes the convex hull of unweighted NSG successors and
completes the structural part used in Lemma~\ref{lem:weighted-KB2}(ii).

Note that, $\mathbf{G}\left[1\right] $ is quasi-complete and $\mathbf{G}%
\left[ 0\right] $ is another unweighted NSG succeeds $\mathbf{G}$.
Therefore, the proof of Lemma \ref{lem:weighted-KB2} (ii) covers that of
Lemma \ref{lem:unweighted-QC} which compares unweighted NSGs $\mathbf{G}%
\left[1\right] $ and $\mathbf{G}\left[ 0\right] $.

Define vectors $\mathbf{x}^{k}=(\mathbf{G}[1])^{k}\mathbf{1}$ and $\mathbf{y}%
^{k}=(\mathbf{G}[\alpha])^{k}\mathbf{1}$ for any $k$. We list all
decompositions that will be used:

\begin{align*}
&x_{i}^{m+1} = x_{1}^{m} + x_{t-\bar{p}+1}^{m} + x_{p+1}^{m} +
(\beta-1)x_{i}^{m} + \gamma x_{j}^{m}, & &x_{j}^{m+1} = x_{1}^{m} + x_{t-%
\bar{p}+1}^{m} + \beta x_{i}^{m} + (\gamma-1)x_{j}^{m}, \\
&x_{1}^{m+1} = x_{t-\bar{p}+1}^{m} + x_{p+1}^{m} + \beta x_{i}^{m} + \gamma
x_{j}^{m}, & &x_{t-\bar{p}+1}^{m+1} = x_{1}^{m} + x_{p+1}^{m} + \beta
x_{i}^{m} + \gamma x_{j}^{m}, \\
&x_{p+1}^{m+1} = x_{1}^{m} + x_{t-\bar{p}+1}^{m} + \beta x_{i}^{m}, &  \\
&y_{i}^{m+1} = y_{1}^{m} + y_{t-\bar{p}+1}^{m} + y_{p+1}^{m} +
(\beta-1)y_{i}^{m} + \gamma y_{j}^{m}, & &y_{j}^{m+1} = y_{1}^{m} + y_{t-%
\bar{p}+1}^{m} + \beta y_{i}^{m} + (\gamma-1)y_{j}^{m}, \\
&y_{1}^{m+1} = y_{t-\bar{p}+1}^{m} + y_{p+1}^{m} + (1-\alpha)y_{p+2}^{m} +
\beta y_{i}^{m} + \gamma y_{j}^{m}, & &y_{t-\bar{p}+1}^{m+1} = y_{1}^{m} +
\alpha y_{p+1}^{m} + \beta y_{i}^{m} + \gamma y_{j}^{m}, \\
&y_{p+1}^{m+1} = y_{1}^{m} + \alpha y_{t-\bar{p}+1}^{m} + \beta y_{i}^{m}, & 
&y_{p+2}^{m+1} = (1-\alpha)y_{1}^{m}
\end{align*}

We use mathematical induction to prove the following three claims. It is
easy to verify that these claims hold when $k=0,1$. Therefore, we assume the
claims hold for any $k\geq m$ and show that the statements hold for $m+1$.

\noindent\textbf{Claim 1.} For any $k$, $x_{1}^{k}+x_{t-\bar{p}+1}^{k}\geq
y_{1}^{k}+y_{t-\bar{p}+1}^{k}$, $x_{i}^{k}\geq y_{i}^{k}$ and $%
(\beta-1)x_{i}^{k}+\gamma x_{j}^{k}\geq (\beta-1)y_{i}^{k}+\gamma y_{j}^{k}$.

\begin{proof}
    
Note that the first two inequalities in Claim 1 imply 
\begin{equation}
x_{p+1}^{m+1}\geq y_{p+1}^{m+1}  \label{eq:6}
\end{equation}
since $x_{p+1}^{m+1}=x_{1}^{m}+x_{t-\bar{p}+1}^{m}+\beta x_{i}^{m}$ and $%
y_{p+1}^{m+1}=y_{1}^{m}+\alpha y_{t-p+1}^{m}+\beta y_{i}^{m}$ where $\alpha
\in [0,1]$. Moreover, 
\begin{eqnarray*}
x_{i}^{m+1} &=&x_{1}^{m}+x_{t-\bar{p}+1}^{m}+x_{p+1}^{m}+(\beta-1)x_{i}^{m}+%
\gamma x_{j}^{m} \\
&\geq &y_{1}^{m}+y_{t-\bar{p}+1}^{m}+y_{p+1}^{m}+(\beta-1)y_{i}^{m}+\gamma
y_{j}^{m}=y_{i}^{m+1}.
\end{eqnarray*}

We then show the last inequality in Claim 1. Decomposing both sides, we have 
\begin{eqnarray*}
(\beta -1)x_{i}^{m+1}+\gamma x_{j}^{m+1} 
&=&(\beta +\gamma -1)(x_{1}^{m}+x_{t-\bar{p}+1}^{m})+(\beta -1)x_{p+1}^{m} \\
&&+(\beta -1)^{2}x_{i}^{m}+\gamma \beta x_{i}^{m}+(\beta +\gamma -2)\gamma x_{j}^{m} \\
(\beta -1)y_{i}^{m+1}+\gamma y_{j}^{m+1} 
&=&(\beta +\gamma -1)(y_{1}^{m}+y_{t-\bar{p}+1}^{m})+(\beta -1)y_{p+1}^{m} \\
&&+(\beta -1)^{2}y_{i}^{m}+\gamma \beta y_{i}^{m}+(\beta +\gamma -2)\gamma y_{j}^{m}.
\end{eqnarray*}
Moreover, since 
\begin{eqnarray*}
&&((\beta -1)^{2}+\gamma \beta )x_{i}^{m}+(\beta +\gamma -2)\gamma
x_{j}^{m}+(1-\beta -\gamma )x_{i}^{m} \\
&=&(\beta +\gamma -2)(\beta -1)x_{i}^{m}+(\beta +\gamma -2)\gamma x_{j}^{m},
\end{eqnarray*}%
by the inductive assumption that $(\beta -1)x_{i}^{m}+\gamma x_{j}^{m}\geq
(\beta -1)y_{i}^{m}+\gamma y_{j}^{m}$, it is sufficient to show that 
\begin{equation*}
(\beta +\gamma -1)(x_{1}^{m}+x_{t-\bar{p}+1}^{m}-x_{i}^{m})+(\beta
-1)x_{p+1}^{m}\geq (\beta +\gamma -1)(y_{1}^{m}+y_{t-\bar{p}%
+1}^{m}-y_{i}^{m})+(\beta -1)y_{p+1}^{m}.
\end{equation*}%
Further decomposing both sides, we have 
\begin{align*}
\text{LHS}& =(\beta +\gamma -1)(x_{i}^{m-1}+x_{p+1}^{m-1}+\beta
x_{i}^{m-1}+\gamma x_{j}^{m-1})+(\beta -1)(x_{1}^{m-1}+x_{t-\bar{p}%
+1}^{m-1}+\beta x_{i}^{m-1}) \\
\text{RHS}& =(\beta +\gamma -1)(y_{i}^{m-1}+(1-\alpha )y_{p+2}^{m-1}+\alpha
y_{p+1}^{m-1}+\beta y_{i}^{m-1}+\gamma y_{j}^{m-1})+(\beta
-1)(y_{1}^{m-1}+\alpha y_{t-\bar{p}+1}^{m-1}+\beta y_{i}^{m-1})
\end{align*}%
We can show that $LHS\geq RHS$ by the inductive assumptions and the fact
that $y_{p+1}^{k}\geq y_{p+2}^{k}$ for any $k$.

Now, we prove $x_{1}^{m+1}+x_{t-\bar{p}+1}^{m+1}\geq y_{1}^{m+1}+y_{t-\bar{p}%
+1}^{m+1}$. 
\begin{eqnarray*}
x_{1}^{m+1}+x_{t-\bar{p}+1}^{m+1} &=&2x_{1}^{m}+x_{t-\bar{p}%
+1}^{m}+2x_{p+1}^{m}+x_{i}^{m}+(2\beta -2)x_{i}^{m}+2\gamma x_{j}^{m} \\
y_{1}^{m+1}+y_{t-\bar{p}+1}^{m+1} &=&y_{1}^{m}+y_{t-\bar{p}+1}^{m}+(1+\alpha
)y_{p+1}^{m}+(1-\alpha )y_{p+2}^{m}+2\beta y_{i}^{m}+2\gamma y_{j}^{m} \\
&\geq &y_{1}^{m}+y_{t-\bar{p}+1}^{m}+2y_{p+1}^{m}+2y_{i}^{m}+(2\beta
-2)y_{i}^{m}+2\gamma y_{j}^{m},
\end{eqnarray*}%
where in the last equality we use the fact that $y_{p+1}^{k}\geq y_{p+2}^{k}$
for any $k$. By the inductive assumptions, $x_{1}^{m+1}+x_{t-\bar{p}%
+1}^{m+1}\geq y_{1}^{m+1}+y_{t-\bar{p}+1}^{m+1}$ whenever $%
x_{1}^{m}-x_{i}^{m}\geq 0$. This holds since node $1$ and node $i$ have the
same set of neighbors and thus $x_{1}^{k}=x_{i}^{k}$ for any $k$ (nodes 1
and 2 in $\mathbf{G}$ in Figure \ref{fig:QCproof}) in the unweighted network 
$\mathbf{G}[1]$.

\end{proof}

\noindent\textbf{Claim 2.} For any $k$, $x_{1}^{k}+x_{p+1}^{k}\geq
y_{1}^{k}+y_{p+1}^{k}$.

\begin{proof}
    
Note that if $x_{1}^{m}+x_{p+1}^{m}\geq y_{1}^{m}+y_{p+1}^{m}$, together
with Claim 1, we have 
\begin{align}
x_{t-\bar{p}+1}^{m+1}& =x_{1}^{m}+x_{p+1}^{m}+\beta x_{i}^{m}+\gamma
x_{j}^{m}  \notag \\
& \geq y_{1}^{m}+\alpha y_{p+1}^{m}+\beta y_{i}^{m}+\gamma y_{j}^{m}=y_{t-%
\bar{p}+1}^{m+1}.  \label{eq:7}
\end{align}%
Therefore, 
\begin{eqnarray*}
x_{1}^{m+1}+x_{p+1}^{m+1} &=&x_{1}^{m}+x_{p+1}^{m}+2x_{t-\bar{p}%
+1}^{m}+2x_{i}^{m}+(2\beta -2)x_{i}^{m}+\gamma x_{j}^{m} \\
&\geq &y_{1}^{m}+y_{p+1}^{m}+2y_{t-p+1}^{m}+2y_{i}^{m}+(2\beta
-2)y_{i}^{m}+\gamma y_{j}^{m} \\
&\geq &y_{1}^{m}+y_{p+1}^{m}+(1+\alpha )y_{t-p+1}^{m}+(1-\alpha
)y_{p+2}^{m}+2\beta y_{i}^{m}+\gamma y_{j}^{m} \\
&=&y_{1}^{m+1}+y_{p+1}^{m+1},
\end{eqnarray*}%
where the first inequality comes from the inductive assumption and the
second inequality follows from the fact that $y_{t-p+1}^{k}\geq y_{p+2}^{k}$
which is straightforward due to neighborhood nesting.

\end{proof}

\noindent\textbf{Claim 3.} For any $k$, $x_{t-p+1}^{k}+x_{p+1}^{k}\geq
y_{1}^{k}+y_{p+2}^{k}$.

\begin{proof}

Note that, since $\alpha \in \lbrack 0,1]$,  
\begin{equation*}
y_{1}^{m+1}+y_{p+2}^{m+1}\leq y_{1}^{m}+y_{p+2}^{m}+y_{p+1}^{m}+y_{t-\bar{p}%
+1}^{m}+\beta y_{i}^{m}+\gamma y_{j}^{m}\text{.}
\end{equation*}%
Moreover 
\begin{equation*}
x_{t-p+1}^{m+1}+x_{p+1}^{m+1}=2x_{1}^{m}+x_{p+1}^{m}+x_{t-\bar{p}%
+1}^{m}+2\beta x_{i}^{m}+\gamma x_{j}^{m}\text{.}
\end{equation*}%
By Claims 1 and 2, it is sufficient to show that 
\begin{equation*}
x_{1}^{m}+x_{t-\bar{p}+1}^{m}\geq y_{p+2}^{m}+y_{t-\bar{p}+1}^{m}.
\end{equation*}%
This holds since by Claim 1, $x_{1}^{m}+x_{t-\bar{p}+1}^{m}\geq
y_{1}^{m}+y_{t-\bar{p}+1}^{m}$ and $y_{1}^{m}\geq y_{p+2}^{m}$.

\end{proof}

\bigskip

Now, we are going to prove that $\sum_{k\in N}x_{k}^{m}\geq \sum_{k\in
N}y_{k}^{m}$ with the three claims. 
\begin{eqnarray*}
\sum_{k\in N}x_{k}^{m} &=&x_{1}^{m}+x_{p+1}^{m}+x_{t-\bar{p}+1}^{m}+\beta
x_{i}^{m}+\gamma x_{j}^{m} \\
\sum_{k\in N}y_{k}^{m} &=&y_{1}^{m}+y_{p+1}^{m}+y_{t-\bar{p}%
+1}^{m}+y_{p+2}^{m}+\beta y_{i}^{m}+\gamma y_{j}^{m}
\end{eqnarray*}
By Claim 1, we have $\sum_{k\in N}x_{k}^{m}\geq \sum_{k\in N}y_{k}^{m}$
whenever

\begin{equation}
x_{1}^{m}+x_{p+1}^{m}+x_{t-\bar{p}+1}^{m}+x_{i}^{m}\geq
y_{1}^{m}+y_{p+1}^{m}+y_{t-\bar{p}+1}^{m}+y_{p+2}^{m}+y_{i}^{m}  \label{eq:8}
\end{equation}

Decomposing the right-hand side 
\begin{eqnarray*}
RHS &=&(4-\alpha )y_{1}^{m-1}+(2+\alpha )y_{t-\bar{p}+1}^{m-1}+(2+\alpha
)y_{p+1}^{m-1}+(1-\alpha )y_{p+2}^{m-1}+(4\beta -1)y_{i}^{m-1}+3\gamma
y_{j}^{m-1} \\
&=&(1-\alpha )(y_{1}^{m-1}+y_{p+2}^{m-1})+(2+\alpha )(y_{t-\bar{p}%
+1}^{m-1}+y_{p+1}^{m-1})+3y_{1}^{m-1}+(4\beta -1)y_{i}^{m-1}+3\gamma
y_{j}^{m-1} \\
&\leq &(1-\alpha )(x_{t-p+1}^{m-1}+x_{p+1}^{m-1})+(2+\alpha )(x_{t-\bar{p}%
+1}^{m-1}+x_{p+1}^{m-1})+3y_{1}^{m-1}+(4\beta -1)y_{i}^{m-1}+3\gamma
y_{j}^{m-1} \\
&=&3(x_{t-p+1}^{m-1}+x_{p+1}^{m-1})+3y_{1}^{m-1}+(4\beta
-1)y_{i}^{m-1}+3\gamma y_{j}^{m-1}
\end{eqnarray*}%
where the inequality above comes from Claim 3. The left-hand side of
equation \eqref{eq:8} is 
\begin{equation*}
LHS=3x_{1}^{m-1}+3x_{t-\bar{p}+1}^{m-1}+3x_{p+1}^{m-1}+(4\beta
-1)x_{i}^{m-1}+3\gamma x_{j}^{m-1}.
\end{equation*}%
Therefore, by Claim 1 and inequality \eqref{eq:6}, $LHS\geq RHS$.

For strictness, note that the decomposition of RHS above implies that when $%
y_{1}^{m-1}+y_{p+2}^{m-1}<x_{t-p+1}^{m-1}+x_{p+1}^{m-1}$, $\sum_{k\in
N}x_{k}^{m}>\sum_{k\in N}y_{k}^{m}$. For the case of $m=2$, it is
straightforward to show that $%
y_{1}^{1}+y_{p+2}^{1}<x_{t-p+1}^{1}+x_{p+1}^{1} $ when $\alpha <1$. We
complete the proof for the strict inequality in the proposition by showing
that 
\begin{equation*}
x_{1}^{m}+x_{p+1}^{m}>y_{1}^{m}+y_{p+1}^{m}\Rightarrow
x_{1}^{m+1}+x_{p+1}^{m+1}>y_{1}^{m+1}+y_{p+1}^{m+1}\text{.}
\end{equation*}%
Then, since $x_{1}^{m+1}=x_{t-p+1}^{m+1}$ and $y_{p+1}^{m+1}>y_{p+2}^{m+1}$,
it implies $x_{t-p+1}^{m-1}+x_{p+1}^{m-1}>y_{1}^{m+1}+y_{p+2}^{m+1}$. The
proof is similar to that of Claim 3, except in the last step we strength the
inequality from $\geq $ to $>$ since $y_{1}^{m}>y_{p+2}^{m}$ when $\alpha <1$%
.

\hfill $\Box $

Now, we will show that $\lambda_{\max}(\mathbf{QC})=\lambda>\lambda_{\max}(%
\mathbf{\hat{G}})=\hat{\lambda}$. Let $\mathbf{x}$ and $\mathbf{y}$ be the
eigencentrality vectors of nodes in $\mathbf{QC}$ and $\mathbf{\hat{G}}$
respectively. Then we have 
\begin{equation}
\lambda \mathbf{x=QC\cdot x}\text{ and }\hat{\lambda}\mathbf{y=\hat{G}y}%
\text{.}  \label{eq:eigen}
\end{equation}

Let $\bar{p}=\frac{p(p-1)}{2}$ as in the previous proof. In the
quasi-complete graph, nodes are classified into three classes according to
automorphic equivalence: nodes $1$ to $t-\bar{p}+1$, nodes $t-\bar{p}+2$ to $%
p$, and node $p+1$. We use index $1$ for the first class of nodes and index $%
i$ for the second class. There are $\beta =t-\bar{p}+1\geq 1$ nodes in the
first class and $\gamma =p-\beta \geq 1$ nodes in the second class. By
equation \eqref{eq:eigen}, the following equalities hold for the
quasi-complete graph: 
\begin{eqnarray*}
\lambda x_{1} &=&(\beta -1)x_{1}+x_{p+1}+\gamma x_{j} \\
\lambda x_{j} &=&\beta x_{1}+(\gamma -1)x_{j} \\
\lambda x_{p+1} &=&\beta x_{1}
\end{eqnarray*}

Solving this linear equation system yields: 
\begin{equation}
\lambda=\beta-1+\frac{\gamma\beta}{\lambda-(\gamma-1)}+\frac{\beta}{\lambda}%
=f(\lambda)  \label{eq:lamuda}
\end{equation}

For the other graph $\mathbf{\hat{G}}$, there are five classes of nodes
according to automorphic equivalence: Node 1; nodes 2 to node $t-\bar{p}$;
nodes $t-\bar{p}+1$ to $p$; and nodes $p+1$ and $p+2$. We use index $i$ to
represent the second class, which contains $\beta -2$ nodes. Index $j$
represents the third class, containing $\gamma +1$ nodes. By equation %
\eqref{eq:eigen}, the following equalities hold for graph $\mathbf{\hat{G}}$%
: 
\begin{eqnarray}
\hat{\lambda}y_{1} &=&(\beta -2)y_{i}+(\gamma +1)y_{j}+y_{p+1}+y_{p+2}
\label{eq:spe1} \\
\hat{\lambda}y_{i} &=&y_{1}+(\beta -3)y_{i}+(\gamma +1)y_{j}+y_{p+1}
\label{eq:spe2} \\
\hat{\lambda}y_{j} &=&y_{1}+(\beta -2)y_{i}+\gamma y_{j}  \label{eq:spe3} \\
\hat{\lambda}y_{p+1} &=&y_{1}+(\beta -2)y_{i}  \label{eq:spe4} \\
\hat{\lambda}y_{p+2} &=&y_{1}  \label{eq:spe5}
\end{eqnarray}

Subtracting equation \eqref{eq:spe2} from equation \eqref{eq:spe1} leads to: 
$\left( \hat{\lambda}+1-\frac{1}{\hat{\lambda}}\right) y_{1}=\left( \hat{%
\lambda}+1\right) y_{i}$

Combining equations \eqref{eq:spe4} and \eqref{eq:spe5}, we get $y_{i}=\frac{%
\hat{\lambda}y_{p+1}-y_{1}}{(\beta-2)}$. Substituting this into the equation
above yields: 
\begin{equation*}
y_{p+1}=\frac{(\beta-2)(\hat{\lambda}+1-\frac{1}{\hat{\lambda}})y_{1}}{\hat{%
\lambda}(\hat{\lambda}+1)}+\frac{y_{1}}{\hat{\lambda}}
\end{equation*}

From equations \eqref{eq:spe3} and \eqref{eq:spe4}, we derive $(\hat{\lambda}%
-\gamma)y_{j}=\hat{\lambda}y_{p+1}$. Combining with the equality above: 
\begin{equation*}
y_{j}=\frac{\hat{\lambda}}{\hat{\lambda}-\gamma}y_{p+1}=\frac{(\beta-2)(\hat{%
\lambda}+1-\frac{1}{\hat{\lambda}})y_{1}}{(\hat{\lambda}-\gamma)(\hat{\lambda%
}+1)}+\frac{y_{1}}{\hat{\lambda}-\gamma}
\end{equation*}

Substituting these expressions for $y_{p+1}$ and $y_{j}$ into equation %
\eqref{eq:spe1} yields: 
\begin{equation}
\hat{\lambda}=\frac{(\beta-1)(\gamma+1)(\hat{\lambda}+1-\frac{1}{\hat{\lambda%
}})+(\gamma+1)(1+\frac{1}{\hat{\lambda}})}{(\hat{\lambda}-\gamma)(\hat{%
\lambda}+1)}+\frac{2}{\hat{\lambda}}=g(\hat{\lambda})  \label{eq:lamudah}
\end{equation}

Since $\mathbf{1}^{\prime }\mathbf{QC}^{k}\mathbf{1}>\mathbf{1}^{\prime }%
\mathbf{\hat{G}}^{k}\mathbf{1}$ for any $k\geq 2$, we have $\lambda \geq 
\hat{\lambda}$. Therefore, we conclude that $\lambda >\hat{\lambda}$
whenever $f(x)-g(x)>0$ for any $x>\hat{\lambda}$, where the functions $f(x)$
and $g(x)$ are defined by equations \eqref{eq:lamuda} and \eqref{eq:lamudah}%
, respectively.

It can be shown that: 
\begin{equation*}
f(x)-g(x)=(\beta -2)A+\frac{(\gamma +x+1)(x+1)}{(x-\gamma +1)(x-\gamma )}
\end{equation*}%
where 
\begin{align*}
A& =\frac{x+1}{x-\gamma +1}-\frac{(\gamma +1)(x+1-\frac{1}{x})}{(x-\gamma
)(x+1)}+\frac{1}{x} \\
& >\frac{x+1}{x-\gamma +1}-\frac{\gamma +1}{x-\gamma }+\frac{1}{x} \\
& =\frac{x^{2}+\gamma ^{2}-2\gamma x-\gamma -1}{(x-\gamma +1)(x-\gamma )}+%
\frac{1}{x} \\
& \geq \frac{1}{x}-\frac{\gamma +1}{(x-\gamma +1)(x-\gamma )}
\end{align*}

As a result: 
\begin{equation*}
f(x)-g(x)>\frac{\beta -2}{x}+\frac{(\gamma +x+1)(x+1)}{(x-\gamma
+1)(x-\gamma )}-\frac{(\beta -2)(\gamma+1) }{(x-\gamma +1)(x-\gamma )}\text{.%
}
\end{equation*}%
When $\beta \geq 2$, the right-hand side is non-negative since $x>\gamma
+\beta -1$, where $\gamma +\beta -1$ is the spectral radius of the clique
with size $\gamma +\beta $. When $\beta =1$, the right-hand side is given by%
\begin{equation*}
\frac{x^{3}+x^{2}+\gamma x^{2}+4\gamma x+x-\gamma ^{2}+\gamma }{(x-\gamma
+1)(x-\gamma )}\geq 0\text{.}
\end{equation*}
\hfill $\Box $

\bigskip

\noindent \textbf{Proof of Theorem \ref{thm:unweighted-QC}.} 

The first part of the theorem, that the myopic optimum coincides with the greedy algorithm, follows directly from the definition of myopic optimality.

The second part of Theorem \ref{thm:unweighted-QC}, that the greedy algorithm forms a QC graph at each step, follows directly from Lemma \ref{lem:unweighted-QC} which discriminates the two NSGs. \hfill$\Box$

\bigskip

\noindent \textbf{Proof of Corollary \ref{cor:comparative}.} Recall that for 
$0<\phi <1/\lambda _{\max }(\mathbf{G})$ and any integer $\alpha \geq 0$, 
\begin{equation*}
b(\alpha ,\phi ,\mathbf{G})=\sum_{k=0}^{\infty }\binom{\alpha +k-1}{k}\phi ^{k}\,W^{k}(\mathbf{G}),
\end{equation*}%
and for $\beta >0$, 
\begin{equation*}
c(\beta ,\mathbf{G})=\sum_{k=0}^{\infty }\frac{\beta ^{k}}{k!}\,W^{k}(%
\mathbf{G}).
\end{equation*}%
We will also use the complete graph $\mathbf{C}$ on $n$ nodes, for which $%
W^{k}(\mathbf{C})=\mathbf{1}^{\prime }\mathbf{C}^{k}\mathbf{1}=n(n-1)^{k}$.

We first analyze the case $\delta\to+\infty$ (a far-sighted planner). In
that case, the dynamic problem at a given total number of links $T$ reduces
to a static problem: 
\begin{equation*}
\max_{\mathbf{G}\in\mathcal{G}(T)} u(\mathbf{G}),\quad \text{where }
u\in\{b(\alpha,\phi,\cdot),\,c(\beta,\cdot)\}.
\end{equation*}

\smallskip By Theorem 2 and Corollary 2 of \cite{Abrego2009}, for $n\ge 6$:
- when $T\in[4,\frac{n^2-3n}{4})$, the quasi-star network $\mathbf{QS}(T)$
uniquely maximizes $\sum_i d_i^2=W^2(\mathbf{G})$ among $\mathcal{G}(T)$; -
when $T\in(\frac{n^2+n}{4},\frac{n^2-n}{2}]$, the quasi-complete network $%
\mathbf{QC}(T)$ uniquely maximizes $W^2(\mathbf{G})$ among $\mathcal{G}(T)$.
Hence, there exist gaps $\varepsilon_1,\varepsilon_2>0$ such that for all $%
\mathbf{G}\in\mathcal{G}(T)$: 
\begin{equation*}
\begin{cases}
W^2(\mathbf{QS}(T)) \ge W^2(\mathbf{G})+\varepsilon_1, & \text{if } T\in[4,%
\frac{n^2-3n}{4}), \\[2pt] 
W^2(\mathbf{QC}(T)) \ge W^2(\mathbf{G})+\varepsilon_2, & \text{if } T\in(%
\frac{n^2+n}{4},\frac{n^2-n}{2}].%
\end{cases}%
\end{equation*}

\smallskip Let $(\rho_k)_{k\ge 0}$ be any nonnegative weights with $\rho_k>0$
for all $k\ge 0$. For any $\mathbf{G}\in\mathcal{G}(T)$, 
\begin{equation*}
\sum_{k=0}^{\infty}\rho_k W^k(\mathbf{G}) = \sum_{k=0}^{2}\rho_k W^k(\mathbf{%
G}) + \sum_{k=3}^{\infty}\rho_k W^k(\mathbf{G}) \le \sum_{k=0}^{2}\rho_k W^k(%
\mathbf{G}) + \sum_{k=3}^{\infty}\rho_k W^k(\mathbf{C}),
\end{equation*}
since $W^k(\mathbf{G})\le W^k(\mathbf{C})$ for all $k$ (the complete graph
maximizes walk counts of every length at fixed $n$). Define 
\begin{equation*}
S(\rho):=\sum_{k=3}^{\infty}\rho_k W^k(\mathbf{C})=
n\sum_{k=3}^{\infty}\rho_k (n-1)^k.
\end{equation*}
Then for all $\mathbf{G}\in\mathcal{G}(T)$, 
\begin{equation}  \label{eq:master-bound}
\sum_{k=0}^{\infty}\rho_k W^k(\mathbf{G}) \le \sum_{k=0}^{2}\rho_k W^k(%
\mathbf{G}) + S(\rho).
\end{equation}

\smallskip \emph{Case 1: $u(\mathbf{G})=b(\alpha,\phi,\mathbf{G})$.} Here $%
\rho_k=\binom{\alpha+k-1}{k}\phi^k$ and, for $|\phi(n-1)|<1$, 
\begin{equation*}
\sum_{k=0}^{\infty}\rho_k (n-1)^k = \sum_{k=0}^{\infty}\binom{\alpha+k-1}{k} %
\bigl(\phi(n-1)\bigr)^k = (1-\phi(n-1))^{-\alpha}.
\end{equation*}
Therefore, 
\begin{equation*}
S(\rho)= n\sum_{k=3}^{\infty}\rho_k (n-1)^k = n\Bigl[(1-\phi(n-1))^{-\alpha}
- \sum_{k=0}^{2}\binom{\alpha+k-1}{k}\bigl(\phi(n-1)\bigr)^k\Bigr] < n(\frac{%
1}{(1-\phi(n-1))^{\alpha}}-1).
\end{equation*}
Using \eqref{eq:master-bound}, we get for all $\mathbf{G}$: 
\begin{equation*}
b(\alpha,\phi,\mathbf{G}) < \sum_{k=0}^{2}\rho_k W^k(\mathbf{G}) + n(\frac{1%
}{(1-\phi(n-1))^{\alpha}}-1).
\end{equation*}
Fix $T\in[4,\frac{n^2-3n}{4})$. Let $\varepsilon_1>0$ be the gap for $W^2$
identified above. If $\phi>0$ is small enough so that 
\begin{equation*}
n(\frac{1}{(1-\phi(n-1))^{\alpha}}-1) \le \varepsilon_1
\quad\Longleftrightarrow\quad (1-\phi(n-1))^{-\alpha}\le \frac{\varepsilon_1%
}{n}-1 \quad\Longleftrightarrow\quad \phi \le \frac{1-(%
\varepsilon_1/n-1)^{-1/\alpha}}{n-1},
\end{equation*}
then for any $\mathbf{G}\in\mathcal{G}(T)$, 
\begin{equation*}
b(\alpha,\phi,\mathbf{G}) < \sum_{k=0}^{2}\rho_k W^k(\mathbf{G}) +
\varepsilon_1 \le \sum_{k=0}^{2}\rho_k W^k\bigl(\mathbf{QS}(T)\bigr) \le b%
\bigl(\alpha,\phi,\mathbf{QS}(T)\bigr).
\end{equation*}

The proof of the remaining part of 1(a) is analogous: for $T\in (\frac{%
n^{2}+n}{4},\frac{n^{2}-n}{2}]$, choosing $\phi >0$ small enough so that $%
n(1-\phi (n-1))^{-\alpha }\leq \varepsilon _{2}$ yields $\mathbf{QC}(T)$ as
the maximizer.

\smallskip \emph{Case 2: $u(\mathbf{G})=c(\beta ,\mathbf{G})$.} Here $\rho
_{k}=\beta ^{k}/k!$. Thus 
\begin{equation*}
S(\rho )=n\sum_{k=3}^{\infty }\frac{\beta ^{k}}{k!}(n-1)^{k}=n\Bigl(e^{\beta
(n-1)}-1-\beta (n-1)-\tfrac{1}{2}\beta ^{2}(n-1)^{2}\Bigr)<ne^{\beta
(n-1)}-n.
\end{equation*}%
Therefore, by \eqref{eq:master-bound}, 
\begin{equation*}
c(\beta ,\mathbf{G})<\sum_{k=0}^{2}\frac{\beta ^{k}}{k!}W^{k}(\mathbf{G}%
)+ne^{\beta (n-1)}-n.
\end{equation*}%
Fix $T\in \lbrack 4,\frac{n^{2}-3n}{4})$, and let $\varepsilon _{1}>0$ be as
above. If $\beta >0$ satisfies 
\begin{equation*}
ne^{\beta (n-1)}-n\leq \varepsilon _{1}\quad \Longleftrightarrow \quad \beta
\leq \frac{1}{n-1}\ln \!\Bigl(\frac{\varepsilon _{1}+n}{n}\Bigr),
\end{equation*}%
then for any $\mathbf{G}\in \mathcal{G}(T)$, 
\begin{equation*}
c(\beta ,\mathbf{G})<\sum_{k=0}^{2}\frac{\beta ^{k}}{k!}W^{k}(\mathbf{G}%
)+\varepsilon _{1}\leq \sum_{k=0}^{2}\frac{\beta ^{k}}{k!}W^{k}\bigl(\mathbf{%
QS}(T)\bigr)\leq c\bigl(\beta ,\mathbf{QS}(T)\bigr).
\end{equation*}%
Thus $\mathbf{QS}(T)$ is optimal. The proof of the remaining part is
analogous: for $T\in (\frac{n^{2}+n}{4},\frac{n^{2}-n}{2}]$, if $\beta \leq 
\frac{1}{n-1}\ln (\varepsilon _{2}/n+1)$, then $\mathbf{QC}(T)$ is optimal.

\smallskip \emph{Myopic case $\delta \rightarrow 0^{+}$.} Statements 1(b)
and 2(c) follow from Theorem \ref{thm:unweighted-QC}, which asserts that the
myopic maximizer is a quasi-complete network, irrespectively of $\phi $ (for 
$b$) or $\beta $ (for $c$).

\smallskip \emph{Spectral dominance for large $\beta $ (statement 2(b)).} We
claim that if $\lambda _{\max }(\mathbf{G})<\lambda _{\max }(\hat{\mathbf{G}}%
)$, then there exists $\bar{\beta}<\infty $ such that for all $\beta \geq 
\bar{\beta}$, $c(\beta ,\hat{\mathbf{G}})>c(\beta ,\mathbf{G})$. By
Perron--Frobenius asymptotics, 
\begin{equation*}
\lim_{k\rightarrow \infty }\frac{W^{k}(\hat{\mathbf{G}})}{W^{k}(\mathbf{G})}%
=\infty \quad \text{iff}\quad \lambda _{\max }(\hat{\mathbf{G}})>\lambda
_{\max }(\mathbf{G}).
\end{equation*}%
Thus for any $M>1$, there exist $\bar{k}\in \mathbb{N}$ such that for all $%
k\geq \bar{k}$, $W^{k}(\hat{\mathbf{G}})\geq M\,W^{k}(\mathbf{G})$. Then 
\begin{align*}
c(\beta ,\hat{\mathbf{G}})-c(\beta ,\mathbf{G})& =\sum_{k=0}^{\infty }\frac{%
\beta ^{k}}{k!}\bigl(W^{k}(\hat{\mathbf{G}})-W^{k}(\mathbf{G})\bigr) \\
& \geq \sum_{k=0}^{\bar{k}-1}\frac{\beta ^{k}}{k!}\bigl(W^{k}(\hat{\mathbf{G}%
})-W^{k}(\mathbf{G})\bigr)\;+\;(M-1)\sum_{k=\bar{k}}^{\infty }\frac{\beta
^{k}}{k!}W^{k}(\mathbf{G}).
\end{align*}%
Let 
\begin{equation*}
A:=\sum_{k=0}^{\bar{k}-1}\frac{1}{k!}\bigl|W^{k}(\hat{\mathbf{G}})-W^{k}(%
\mathbf{G})\bigr|\quad \text{and}\quad B:=\frac{W^{\bar{k}}(\mathbf{G})}{%
\bar{k}!}>0.
\end{equation*}%
Then for all $\beta \geq 1$, 
\begin{equation*}
\left\vert \sum_{k=0}^{\bar{k}-1}\frac{\beta ^{k}}{k!}\bigl(W^{k}(\hat{%
\mathbf{G}})-W^{k}(\mathbf{G})\bigr)\right\vert \leq \beta ^{\bar{k}%
-1}A,\qquad \sum_{k=\bar{k}}^{\infty }\frac{\beta ^{k}}{k!}W^{k}(\mathbf{G}%
)\geq \frac{\beta ^{\bar{k}}}{\bar{k}!}W^{\bar{k}}(\mathbf{G})=\beta ^{\bar{k%
}}B.
\end{equation*}%
Hence, if 
\begin{equation*}
\beta \;\geq \;\bar{\beta}:=\frac{A}{(M-1)B},
\end{equation*}%
we obtain $c(\beta ,\hat{\mathbf{G}})-c(\beta ,\mathbf{G})>0$. This proves
2(b).

\smallskip Combining all parts yields the corollary. \qed
\bigskip

\noindent \textbf{Proof of Proposition \ref{pro:weighted-KB2-KB}.} The gist
of the proof is to show that the solution of problem (\ref{design problem
weighted 1}) an extreme point of the set of feasible network formation paths 
$S_{w}$, which coincides with the set of sequences of unweighted networks.

\noindent

\begin{claim}
\label{claim:convex} The set $S_{w}$ is a convex set, and the extreme points
of $S_{w}$ are unweighted networks.
\end{claim}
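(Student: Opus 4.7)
The plan is to handle convexity by a direct inspection of the defining linear constraints, and then to identify the extreme points by passing to increment coordinates and invoking the classical fact that the transportation polytope is integral.

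I would first check convexity. A path $\mathbf{s}=(\mathbf{G}(t))_{t=1}^{T}$ is in $S_{w}$ iff each $\mathbf{G}(t)$ is symmetric with zero diagonal and entries in $[0,1]$, and each increment $\mathbf{W}(t)=\mathbf{G}(t)-\mathbf{G}(t-1)$ is symmetric, non-negative, zero-diagonal, with $\mathbf{1}'\mathbf{W}(t)\mathbf{1}=2$. All of these constraints are linear in the entries of the $\mathbf{G}(t)$, so $S_{w}$ is a polytope and in particular convex. The easy direction of the extremal characterization is equally direct: if $\mathbf{s}\in S\subseteq S_{w}$ is unweighted and $\mathbf{s}=\lambda\mathbf{s}_{1}+(1-\lambda)\mathbf{s}_{2}$ with $\lambda\in(0,1)$ and $\mathbf{s}_{1},\mathbf{s}_{2}\in S_{w}$, the extremality of $0$ and $1$ inside $[0,1]$ forces $\mathbf{G}_{1}(t)=\mathbf{G}_{2}(t)=\mathbf{G}(t)$ entry-by-entry and for every $t$.

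For the harder direction I would change variables to the upper-triangular increments $w_{ij}(t):=g_{ij}(t)-g_{ij}(t-1)$ for $i<j$ and $t=1,\ldots,T$. The condition $\mathbf{1}'\mathbf{W}(t)\mathbf{1}=2$ becomes $\sum_{i<j}w_{ij}(t)=1$, and the conditions $g_{ij}(t)\in[0,1]$ at all intermediate $t$ collapse, thanks to $w\geq 0$, into the single end-of-horizon bound $\sum_{s=1}^{T}w_{ij}(s)\leq 1$. Thus in $w$-coordinates $S_{w}$ is exactly
\begin{equation*}
\Bigl\{w\geq 0:\ \sum_{i<j}w_{ij}(t)=1\ \forall t,\ \sum_{s=1}^{T}w_{ij}(s)\leq 1\ \forall(i,j)\Bigr\},
\end{equation*}
the standard transportation polytope with $T$ sources of supply $1$ (one per period) and $\binom{n}{2}$ sinks of capacity $1$ (one per unordered pair of nodes). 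Its constraint matrix is totally unimodular, so every extreme point has integer entries; combined with the row-sum equalities this means each extreme $w$ places a single $1$ per row, which translates back into an unweighted formation path in $S$.

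The one step that needs some care is the collapse from the full family of cumulative constraints $g_{ij}(t)=\sum_{s\leq t}w_{ij}(s)\in[0,1]$ at every intermediate $t$ down to a single final column-sum bound. Non-negativity of the increments makes this trivial, yet it is exactly what lets me invoke the off-the-shelf transportation-polytope machinery; without it one would face a nested family of partial-sum inequalities that is not visibly totally unimodular and the integrality of vertices would have to be argued from scratch.
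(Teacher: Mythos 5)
Your proof is correct, but it takes a genuinely different route from the paper's for the extreme-point characterization. The paper argues by explicit perturbation: it takes any path containing a strictly weighted network, locates the first such period $t^{\prime}$, notes that integrality of the total weight $\mathbf{1}^{\prime}\mathbf{G}(t^{\prime})\mathbf{1}$ forces at least \emph{two} distinct pairs with fractional weight, and then builds two feasible paths $\mathbf{s}^{+},\mathbf{s}^{-}$ by shifting a small amount $\Delta$ of weight between those two pairs in every period from $t^{\prime}$ onward (with max/min clipping at the capacity bounds), exhibiting the original path as a non-trivial convex combination and hence as non-extreme. Your route instead passes to increment coordinates, collapses the nested partial-sum constraints $\sum_{s\le t}w_{ij}(s)\le 1$ into the single horizon constraint (valid precisely because $w\ge 0$, as you correctly flag), and recognizes the result as a transportation-type polytope whose bipartite incidence constraint matrix is totally unimodular, so integrality of vertices comes off the shelf. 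What your approach buys: it is shorter modulo the citation, it delivers both inclusions at once (you actually verify $S\subseteq\mathrm{ext}(S_{w})$ via extremality of $0$ and $1$ in $[0,1]$, which the paper asserts without proof), and it avoids the delicate clipping bookkeeping needed to keep the paper's perturbations inside $S_{w}$. What the paper's approach buys: it is elementary and self-contained, using no polyhedral combinatorics, and the explicit perturbation shows concretely where extremality fails. The convexity part is essentially identical in both arguments, since the defining constraints of $S_{w}$ are linear.
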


To show the convexity of $S_{w}$, consider two elements $\mathbf{s}%
_{w}=\left( \mathbf{G}(t)\right)_{t=1}^{T},\mathbf{\hat{s}}_{w}=\left(
\mathbf{\hat{G}}(t)\right)_{t=1}^{T}\in S_{w}$ and a constant $\alpha \in
(0,1)$. It is easy to verify that for any $t\geq 1$,
\begin{equation*}
\mathbf{1}^{\prime}\left[\alpha\mathbf{G}(t)+(1-\alpha)\mathbf{\hat{G}}%
(t)-\alpha\mathbf{G}(t-1)-(1-\alpha)\mathbf{\hat{G}}(t-1)\right]\mathbf{1}%
=2.
\end{equation*}%
That is, $\alpha\mathbf{G}(t)+(1-\alpha)\mathbf{\hat{G}}(t)$ can be
obtained by adding one unit of weight from $\alpha\mathbf{G}(t-1)+(1-\alpha
)\mathbf{\hat{G}}(t-1)$. As a result, $\alpha\mathbf{s}_{w}+(1-\alpha)%
\mathbf{\hat{s}}_{w}\in S_{w}$, and thus $S_{w}$ is convex.

Next, we show that the extreme points of $S_{w}$ satisfy ext$(S_{w})=S$%
, where $S$ is the set of feasible unweighted network formation paths. Since
$S\subseteq$ext$\left(S_{w}\right)$, we only need to prove ext$\left(
S_{w}\right)\subseteq S$ by showing that $(S_{w}\backslash S)\cap\text{ext}%
(S_{w})=\emptyset$. Let $\mathbf{s}_{w}\in S_{w}\backslash S$. Then there
exists some $t\leq T$ such that $\mathbf{G}(t)$ is a strictly weighted
network. Let $t^{\prime}$ be the earliest such period. Hence, there exist
some $i,j,k,l$ such that $g_{ij}(t^{\prime})=g_{ji}(t^{\prime})\in(0,1)$,
$g_{kl}(t^{\prime})=g_{lk}(t^{\prime})\in(0,1)$, and $(i,j)\neq(k,l)$. We
construct two sequences of weighted networks $\mathbf{s}^{+}=\left(
\mathbf{G}^{+}\left(t\right)\right)_{t=1}^{T}$ and $\mathbf{s}^{-}=\left(
\mathbf{G}^{-}\left(t\right)\right)_{t=1}^{T}$ in $S_{w}$ such that $%
\mathbf{s}_{w}=\frac{1}{2}\left(\mathbf{s}^{+}+\mathbf{s}^{-}\right)$.

\medskip
\noindent\textbf{Case 1.} $g_{ij}(t)=g_{ji}(t)\in(0,1)$ and $g_{kl}(t)=g_{lk}(t)\in
(0,1)$ for all $t^{\prime}\leq t\leq T$. 

Let $\mathbf{G}^{+}(t)=%
\mathbf{G}^{-}(t)=\mathbf{G}(t)$ for any $t<t^{\prime}$, and for any $t\geq
t^{\prime}$ let
\begin{eqnarray}
g_{ij}^{+}(t) &=&g_{ji}^{+}(t)=g_{ij}(t)-\Delta;\quad
g_{kl}^{+}(t)=g_{lk}^{+}(t)=g_{kl}(t)+\Delta \notag \\
g_{ij}^{-}(t) &=&g_{ji}^{-}(t)=g_{ij}(t)+\Delta;\quad
g_{kl}^{-}(t)=g_{lk}^{-}(t)=g_{kl}(t)-\Delta \notag \\
g_{pq}^{+}(t) &=&g_{pq}^{-}(t)=g_{pq}(t),\quad\forall(p,q)\notin\{(i,j),(k,l)\},
\label{eq:construction}
\end{eqnarray}
where $\Delta=\min\{g_{ij}(t^{\prime}), g_{kl}(t^{\prime}), 1-g_{ij}(T), 1-g_{kl}(T)\}$.

By this construction, both $\mathbf{s}^{+}$ and $\mathbf{s}%
^{-}$ belong to $S_{w}$, and $\frac{1}{2}\left(\mathbf{G}^{+}\left(t\right)
+\mathbf{G}^{-}\left(t\right)\right)=\mathbf{G}(t)$ for any $t$. Thus, $%
\mathbf{s}_{w}\notin$ext$(S_{w})$.

\medskip
\noindent\textbf{Case 2.} Links $\left(i,j\right)$ and $\left(k,l\right)$ reach weight $1$
at the same time, i.e., $g_{ij}(\bar{t})=g_{kl}(\bar{t})=1$ and $%
g_{ij}(t)\in\left(0,1\right), g_{kl}(t)\in\left(0,1\right)$ for all $%
t^{\prime}\leq t<\bar{t}$. 

Let $\mathbf{G}^{+}(t)=\mathbf{G}^{-}(t)=%
\mathbf{G}(t)$ for any $t<t^{\prime}$ and $t\geq\bar{t}$. For $t^{\prime
}\leq t<\bar{t}$, construct $\mathbf{G}^{+}(t)$ and $\mathbf{G}^{-}(t)$ as in (\ref{eq:construction}). Then it is straightforward to show that both $\mathbf{s}^{+}$
and $\mathbf{s}^{-}$ belong to $S_{w}$, and $\frac{1}{2}\left(\mathbf{G}%
^{+}\left(t\right)+\mathbf{G}^{-}\left(t\right)\right)=\mathbf{G}(t)$
for any $t$.

\medskip
\noindent\textbf{Case 3.} One or both of links $\left(i,j\right)$ and $\left(k,l\right)$ reach weight $1$ at the end $\mathbf{G}(T)$, but at different time.

Without loss of generality, assume $\left(i,j\right)$ reaches $1$ at time $\bar{t}$ before $\left(k,l\right)$. That
is, $g_{ij}(\bar{t})=g_{ji}(\bar{t})=1$, $g_{ij}(t)=g_{ji}(t)\in(0,1)$ for
all $t^{\prime}\leq t<\bar{t}$, and $g_{kl}(\bar{t})=g_{lk}(\bar{t})\in(0,1)
$. Let $\mathbf{G}^{+}(t)=\mathbf{G}^{-}(t)=\mathbf{G}(t)$ for any $%
t<t^{\prime}$, and for any $t^{\prime}\leq t<\bar{t}$, construct $\mathbf{G}%
^{+}(t)$ and $\mathbf{G}^{-}(t)$ as in (\ref{eq:construction}). At time $\bar{t}$, since $%
g_{kl}(\bar{t})\in\left(0,1\right)$, there must exist another pair of
nodes $\left(i^{\prime},j^{\prime}\right)$ such that $g_{i^{\prime
}j^{\prime}}(\bar{t})\in\left(0,1\right)$. We then repeat the construction for $t>\bar{t}$ by analyzing the pairs $\left(i^{\prime
}, j^{\prime}\right)$ and $\left(k,l\right)$.

Consequently, we have $\mathbf{s}_{w}\notin\text{ext}(S_{w})$.

We complete the proof of Proposition \ref{pro:weighted-KB2-KB} using Claim %
\ref{claim:convex}. Given two sequences of weighted networks $\mathbf{s}%
_{w},\mathbf{\hat{s}}_{w}\in S_{w}$ and a constant $\alpha\in(0,1)$, the
following holds:
\begin{align*}
v\left(\alpha\mathbf{s}_{w}+\left(1-\alpha\right)\mathbf{\hat{s}}%
_{w}\right) &=\sum_{t=1}^{T}D(t)\cdot b\left(\phi,\alpha
\mathbf{G}(t)+\left(1-\alpha\right)\mathbf{\hat{G}}(t)\right) \\
&\leq\sum_{t=1}^{T}D(t)\cdot\left[\alpha b\left(\phi,\mathbf{G}(t)\right)+(1-\alpha)b\left(\phi,\mathbf{\hat{G}}(t)\right)\right] \\
&=\alpha v(\mathbf{s}_{w})+(1-\alpha)v(\mathbf{\hat{s}}_{w}).
\end{align*}%
The first inequality follows from Lemma \ref{lem:convexity}, the proof of
which is given by \cite{Sun2023}'s Lemma A.2. Combining this inequality with
Claim \ref{claim:convex}, we conclude that, even if we allow for weighted
networks, it is without loss of optimality to restrict to the set of
unweighted network sequences $S$ in solving the optimization problem. \hfill$\Box$

\bigskip

\noindent\textbf{Proof of Lemma \ref{lem:sun2023}.} We split the argument
into two cases.

\smallskip \emph{Case 1: $b_i(1,\mathbf{G}) > b_j(1,\mathbf{G})$.} This is
exactly Proposition 4 in \cite{Sun2023}: transferring weight from $j$ to $i$
so that in the post-reallocation network $\widehat{\mathbf{G}}$ one has $%
\hat g_{ik} \ge \hat g_{jk}$ for all $k\notin\{i,j\}$ strictly increases the
aggregate square of Katz-Bonacich centrality, i.e., $b(2,%
\widehat{\mathbf{G}}) > b(2,\mathbf{G})$.

\smallskip \emph{Case 2: $b_i(1,\mathbf{G}) < b_j(1,\mathbf{G})$.} Let $%
\widehat{\mathbf{G}}$ be any post-reallocation network obtained by moving
weight from $j$ to $i$ such that $\hat g_{ik}\ge g_{jk}$ for all $%
k\notin\{i,j\}$. We will reduce this case to Case 1 by symmetry.

Construct an auxiliary network $\overline{\mathbf{G}}$ from $\mathbf{G}$ by
transferring, for each $k\notin\{i,j\}$, the amount $\delta_k := \hat g_{ik}
- g_{jk} \ge 0$ \emph{from $i$ to $j$}, i.e., 
\begin{equation*}
\bar g_{jk} \;=\; g_{jk} + \delta_k \;=\; \hat g_{ik}, \qquad \bar g_{ik}
\;=\; g_{ik} - \delta_k,
\end{equation*}
and keep all other entries unchanged (respecting symmetry of the adjacency).
By construction, for every $k\notin\{i,j\}$, 
\begin{equation*}
\bar g_{jk} \;=\; \hat g_{ik} \;\ge\; g_{jk} \;\ge\; \hat g_{jk} \;=\; \bar
g_{ik}.
\end{equation*}
Thus, in $\overline{\mathbf{G}}$, node $j$ (weakly) dominates node $i$ in
the sense of having at least as large ties to all other nodes; moreover, we
are reallocating weight from $i$ to $j$.

Since $b_i(1,\mathbf{G}) < b_j(1,\mathbf{G})$, by applying the conclusion of
Case 1 with the roles of $(i,j)$ swapped, this reallocation (from the
lower-centrality node $i$ to the higher-centrality node $j$) strictly
increases the aggregate square of KB centralities: 
\begin{equation*}
b(2,\overline{\mathbf{G}}) \;>\; b(2,\mathbf{G}).
\end{equation*}

Finally, observe that $\widehat{\mathbf{G}}$ is isomorphic to $\overline{%
\mathbf{G}}$ via the relabeling that swaps $i$ and $j$. Because $b(2,\cdot)$
is invariant under node relabelings, we have 
\begin{equation*}
b(2,\widehat{\mathbf{G}}) \;=\; b(2,\overline{\mathbf{G}}) \;>\; b(2,\mathbf{%
G}),
\end{equation*}
which establishes the desired inequality in Case 2.

Combining the two cases completes the proof. \hfill$\Box$

\bigskip \noindent \textbf{Proof of Proposition \ref{pro:weighted-KB2} (i).}
Suppose the sequence of weighted networks $\mathbf{s}_{w}=\left( \mathbf{\ G}%
\left( t\right) \right) _{t=1}^{T}$ generates non-NSGs at some periods.
Denote $\left( \mathbf{W}\left( t\right) \right) _{t=1}^{T}$ the
weight-adding matrix, i.e., $\mathbf{G}\left( t\right) =\mathbf{G}\left(
t-1\right) +\mathbf{W}\left( t\right) $ for any $t$. Let $t^{\prime }$ be
the first time that $\mathbf{G}\left( t^{\prime }\right) $ is not a weighted
NSG. Consider two agents $i,j$ such that $i$ weight dominates $j$ in $%
\mathbf{G}\left( t\right) $ for any $t<t^{\prime }$ while $i$ does not
weight dominate $j$ in $\mathbf{G}\left( t^{\prime }\right) $. We construct
another sequence of networks $\hat{\mathbf{s}}_{w}=\left( \mathbf{\hat{G}}%
\left( t\right) \right) _{t=1}^{T}$, where $\mathbf{\hat{G}}\left(
t+1\right) =\mathbf{\hat{G}}\left( t\right) +\mathbf{\hat{W}}\left( t\right) 
$, according to the following rule,

\begin{enumerate}
\item For any $l\notin \left\{ i,j\right\} $, $\hat{w}_{il}\left( t\right)
=\min \left\{ 1-\hat{g}_{il}\left( t-1\right) \text{, }w_{il}\left( t\right)
+w_{jl}\left( t\right) \right\} $;\label{case_weighted_1}

\item For any $l\notin \left\{ i,j\right\} $, $\hat{w}_{jl}\left( t\right)
=\max \left\{ w_{il}\left( t\right) +w_{jl}\left( t\right) +\hat{g}%
_{il}\left( t-1\right) -1\text{, }0\right\} $;\label{case_weighted_2}

\item For any $k,l\notin \left\{ i,j\right\} $ or $\left( k,l\right) =\left(
i,j\right) $, $\hat{w}_{kl}\left( t\right) =w_{kl}\left( t\right) $.\label%
{case_weighted_3}
\end{enumerate}

According to the constructing rule, the weight assigned to $\left(j,k\right)$
is reallocated to $\left( i,k\right)$ preferentially. We first show that,
for any $t\geq t^\prime$, 
\begin{equation*}
\mathbf{\tilde{W}}\left( t\right) =\mathbf{\hat{G}}\left( t\right) - \mathbf{%
G}\left( t\right) =\overset{t}{\underset{s=t^\prime}{\sum }}\left[ \mathbf{%
\hat{W}}\left( s\right) -\mathbf{W}\left( s\right) \right]
\end{equation*}
is a weight reallocation from $j$ to $i$.

According to the
constructing rule, we have $\tilde{w}_{kl}\left( t\right) =0$ for any $k,l\notin \left\{ i,j\right\} $ or $\left( k,l\right)=\left( i,l\right) $. For any $k\notin \left\{ i,j\right\} $, we have%
\begin{eqnarray*}
\tilde{w}_{ik}\left( t\right) +\tilde{w}_{jk}\left( t\right) &=&\overset{t}{%
\underset{s=t^\prime}{\sum }}\left[ \left( \hat{w}_{ik}\left( s\right)
-w_{ik}\left( s\right) \right) +\left( \hat{w}_{jk}\left( s\right)
-w_{jk}\left( s\right) \right) \right] \\
&=&\overset{t}{\underset{s=t^\prime}{\sum }}\left[ \left( \hat{w}_{ik}\left(
s\right) +\hat{w}_{jk}\left( s\right) \right) -\left( w_{ik}\left( s\right)
+w_{jk}\left( s\right) \right) \right]=0
\end{eqnarray*}

Then, we argue that $\tilde{w}_{ik}\left( t\right) \geq 0$ for any $k\notin
\left\{ i,j\right\} $ and $t\geq t^\prime$. Suppose not. There exists $%
k\notin \left\{ i,j\right\} $ such that $\tilde{w}_{ik}\left( t\right) <0$.
That is, $\overset{t}{\underset{s=t^\prime}{\sum }}\hat{w}_{ik}\left(
s\right) <\overset{t}{\underset{s=t^\prime}{\sum }}w_{ik}\left( s\right) $.
Therefore, we further have%
\begin{equation*}
\hat{g}_{ik}\left( t\right) =g_{ik}\left( t^\prime-1\right) +\overset{}{%
\overset{t}{\underset{s=t^\prime}{\sum }}\tilde{w}_{ik}\left( s\right) }%
<g_{ik}\left( t^\prime-1\right) +\overset{}{\overset{t}{\underset{s=t^\prime}%
{\sum }}w_{ik}\left( s\right) =g_{ik}\left( t\right) \leq 1\text{.}}
\end{equation*}

Hence, for any $s\leq t$, $\hat{g}_{ik}\left( s\right) <1$. By the
construction of $\hat{g}_{ik}\left( s\right) $, we have $w_{ik}\left(
s\right) +w_{jk}\left( s\right) <1-g_{ik}\left( t^\prime-1\right) $. As a
result, $\tilde{w}_{ik}\left( t\right) =\overset{t}{\underset{s=t^\prime}{%
\sum }}\left( \hat{w}_{ik}\left( s\right) -w_{ik}\left( s\right) \right) =%
\overset{}{\overset{t}{\underset{s=t^\prime}{\sum }}w_{jk}\left( s\right)
\geq 0\text{.}}$ It contradicts the assumption that $\tilde{w}_{ik}\left(
t\right) <0$. We conclude that $\mathbf{\tilde{W}}\left( t\right) $ is a
weight reallocation from $j$ to $i$.

To apply Lemma \ref{lem:sun2023}, we will show that $\hat{g}_{ik}\left(
t\right) \geq g_{jk}\left( t\right) $ in the following. If $\hat{g}%
_{ik}\left( t\right) =1$, the inequality trivially holds. If $\hat{g}%
_{ik}\left( t\right) <1$, by the construction rule, we have $\hat{w}%
_{ik}\left( s\right) =w_{ik}\left( s\right) +w_{jk}\left( s\right) $ and $%
\hat{w}_{jk}\left( s\right) =0$ for any $s\in \left[ t^\prime,t\right] $.
Therefore,%
\begin{eqnarray*}
\hat{g}_{ik}\left( t\right) &=&g_{ik}\left( t^\prime\right) +\overset{t}{%
\underset{s=t^\prime}{\sum }}\hat{w}_{jk}\left( s\right) =g_{ik}\left(
t^\prime\right) +\overset{t}{\underset{s=t^\prime}{\sum }}\left(
w_{ik}\left( s\right) +w_{jk}\left( s\right) \right) \\
&\geq &g_{ik}\left( t^\prime\right) \geq g_{jk}\left( t^\prime\right)
=g_{jk}\left( t^\prime\right) +\overset{t}{\underset{s=t^\prime}{\sum }}\hat{%
w}_{jk}\left( s\right) =\hat{g}_{jk}\left( t\right) \text{.}
\end{eqnarray*}
To sum up, $\mathbf{\tilde{W}}\left( t\right) $ is a weight reallocation
from $j$ to $i$ that satisfies the conditions in Lemma \ref{lem:sun2023}. As
a result, for each period $t$, $\mathbf{\hat{G}}\left( t\right) $ generates
a higher payoff than $\mathbf{G}\left( t\right) $. That is, for any sequence
of networks $\mathbf{s}_{w}$ generating non-NSG in some periods, we can
construct a sequence of networks $\mathbf{\hat{s}}_{w}$ inducing higher $%
b(2,\cdot)$.\hfill$\Box$

\bigskip

\noindent \textbf{Proof of Proposition \ref{pro:weighted-KB2} (ii):} It
directly follows Lemma \ref{lem:weighted-KB2}.

Let $\text{conv}(\mathbb{S}(\mathbf{G})\cap \mathcal{NSG}):=\{\mathbf{G}\in%
\mathcal{G}:\exists \alpha\in[0,1], \mathbf{G}_{1}, \mathbf{G}_{2}\in (%
\mathbb{S}(\mathbf{G})\cap \mathcal{NSG}) \text{ s.t. }\mathbf{G}=\alpha%
\mathbf{G}_{1}+(1-\alpha)\mathbf{G}_{2} \}$ be the set of convex
combinations of unweighted networks in $\mathbb{S}(\mathbf{G})\cap \mathcal{%
NSG}$.

\begin{lem}
\label{lem:weighted-KB2} Fix a QC graph $\mathbf{G}$, the following holds,

\begin{enumerate}
\item[(i).] $\arg \max_{\bar{\mathbf{G}} \in \mathbb{S}_w(\mathbf{(}G))} b
(2,\bar{\mathbf{G}}) \subseteq \text{conv} (\mathbb{S}(\mathbf{G})\cap 
\mathcal{NSG})$;

\item[(ii).] For any $k\geq 2$, $\mathbf{1}^\prime \mathbf{QC}^k \mathbf{1}
> \mathbf{1}^\prime \hat{\mathbf{G}}^k \mathbf{1}$, for any $\hat{\mathbf{G}}
\in \text{conv}(\mathbb{S}(\mathbf{G})\cap \mathcal{NSG})\backslash\{\mathbf{%
QC}\}$.
\end{enumerate}
\end{lem}

The proof of Lemma \ref{lem:weighted-KB2} (ii) is presented alongside the
proof of Lemma \ref{lem:unweighted-QC}. Below, we provide only the proof of
Lemma \ref{lem:weighted-KB2} (i).

\noindent\textbf{Proof of Lemma \ref{lem:weighted-KB2} (i).} When $t=1$, the unique weighted NSG in $\mathbb{S}_{w}(\mathbf{G})$ consistent with Proposition \ref{pro:weighted-KB2} (i) is a
QC graph whose clique has size $2$.

Now consider $t\geq 2$. The QC graph $\mathbf{G}$ contains a maximal
complete subgraph on $p\geq 2$ nodes. Concretely, the first $p$ nodes form a
clique; node $p+1$ connects to the first $t-\bar{p}$ nodes, where $\bar{p}:=%
\frac{p(p-1)}{2}$ is the number of links within the clique; and the
remaining $n-(p+1)$ nodes (if any) are isolated. If $t\in\{\bar{p}, \frac{%
p(p+1)}{2}\}$, the set of weighted NSGs in $\mathbb{S}_{w}(\mathbf{G})$ is
obtained by placing additional weight between node $1$ and isolated nodes.
The maximizer is (isomorphic to) an unweighted QC graph; otherwise, node $1$ would
hold two distinct links with weights in $(0,1)$, contradicting Proposition %
\ref{pro:weighted-KB2} (i). Hence, it suffices to analyze the case 
\begin{equation*}
\bar{p}<t<\frac{p(p+1)}{2}.
\end{equation*}

Partition the nodes of $\mathbf{G}$ into four classes according to degrees from (weakly) high to low:
\begin{itemize}
\item Class 1: nodes $1,\dots, t-\bar{p}$;
\item Class 2: nodes $t-\bar{p}+1,\dots, p$;
\item Class 3: node $p+1$;
\item Class 4: nodes $p+2,\dots, n$ (isolated).
\end{itemize}

Let $\mathbf{G}^{\ast}\in\arg\max_{\bar{\mathbf{G}}\in\mathbb{S}_{w}(\mathbf{G})} b(2,\bar{\mathbf{G}})$. By Proposition \ref{pro:weighted-KB2} (i), $\mathbf{G}^{\ast}$ is a weighted NSG. We establish two key claims:

\medskip
\noindent\textbf{Claim (a).} In $\mathbf{G}^{\ast}$, $g_{ij}^{\ast}=0$ whenever $i\in[t-\bar{p}+1,n]\setminus\{p+1\}$ (Classes 2 or 4) and $j\in[p+2,n]$ (Class 4).

\noindent\textit{Proof of Claim (a).} Suppose $g_{ij}^{\ast}>0$ for some such pair. If $g_{1j}^{\ast}=0$, then
nestedness is violated because $g_{1,p+1}^{\ast}=1>g_{i,p+1}^{\ast}$ while $%
g_{1j}^{\ast}=0<g_{ij}^{\ast}$. Hence $g_{1j}^{\ast}>0$. But then node $j$ would
have two positive weighted links, $g_{1j}^{\ast}$ and $g_{ij}^{\ast}$, which
contradicts Proposition \ref{pro:weighted-KB2} (i) since no node can have two
links with weights in $(0,1)$ at the optimum. Thus there are no weights
between Classes 2 and 4, nor within Class 4.

\medskip
\noindent\textbf{Claim (b).} In $\mathbf{G}^{\ast}$, $g_{j,p+1}^{\ast}=0$ for all $j\in[p+2,n]$ (Class 4).

\noindent\textit{Proof of Claim (b).} Otherwise, to preserve nestedness between node $p+1$ and any
node in the clique, the nodes in the clique must connect with $j$, and thus
violate Proposition \ref{pro:weighted-KB2} (i) since $j$ would have multiple links
with weights in $(0,1)$.

\medskip
From the Claims, any positive weights in $\mathbf{G}^{\ast}$ beyond those in $\mathbf{%
G}$ must be placed either on (i) pairs $(i,p+1)$ with $i\in[t-\bar{p}+1,p]$
(Class 2 with Class 3), or (ii) pairs $(i,j)$ with $i\in[1,t-\bar{p}]$
(Class 1) and $j\in[p+2,n]$ (Class 4).

Moreover, by Proposition \ref{pro:weighted-KB2} (i), there is at most one $%
i\in[t-\bar{p}+1,p]$ such that $g_{i,p+1}^{\ast}>0$; otherwise, node $p+1$
would sustain two distinct links with weights in $(0,1)$. Similarly, there
is at most one pair $(i,j)$ with $i\in[1,t-\bar{p}]$, $j\in[p+2,n]$ and $%
g_{ij}^{\ast}>0$; otherwise, preserving nestedness across two isolated nodes
would force a second positive link from some node, again contradicting
Proposition \ref{pro:weighted-KB2} (i).

Therefore, $\mathbf{G}^{\ast}$ differs from $\mathbf{G}$ by assigning
weight to at most two edges: one of type $(i,p+1)$ with $i\in[t-\bar{p}+1,p]$, and one of type $(i,j)$ with $i\in[1,t-\bar{p}]$, $j\in[p+2,n]$. Each such weighted configuration lies in $\mathrm{conv}\big(\mathbb{S}(\mathbf{G})\cap\mathcal{NSG}\big)$ because the corresponding
unweighted augmentations $\mathbf{G}+\mathbf{E}_{ij}$ are NSGs, and any
single-edge weighting is a convex combination of two such unweighted NSGs.
Hence 
\begin{equation*}
\arg\max_{\bar{\mathbf{G}}\in\mathbb{S}_{w}(\mathbf{G})}b(2,\bar{\mathbf{G}})\subseteq\mathrm{conv}\big(\mathbb{S}(\mathbf{G})\cap\mathcal{NSG}\big).
\end{equation*}%
This completes the proof of Lemma \ref{lem:weighted-KB2} (i). \hfill$\Box$

\bigskip

\newpage 
\bibliographystyle{chicagoa}
\bibliography{seq_net_design}

\bigskip

\bigskip

\bigskip

\end{document}